\title{Constructing Deterministic Parity Automata from Positive and Negative Examples}
\DeclareFontFamily{U}{mathb}{\hyphenchar\font45}
\DeclareFontShape{U}{mathb}{m}{n}{
      <5> <6> <7> <8> <9> <10> gen * mathb
      <10.95> mathb10 <12> <14.4> <17.28> <20.74> <24.88> mathb12
      }{}
\DeclareSymbolFont{mathb}{U}{mathb}{m}{n}
\DeclareMathSymbol{\sqsubsetneq}{1}{mathb}{"88}
	\definecolor[named]{lipicsYellow}{rgb}{0.99,0.78,0.07} % TS
	\definecolor[named]{lipicsGray}{rgb}{0.31,0.31,0.33}
\tikzset{>=stealth, shorten >=1pt, auto}
\tikzset{automaton/.style={
			auto, shorten >=1pt, node distance = 10mm, >=stealth, transform shape, initial text=,
			initial distance=3mm,
			every edge/.append style={nodes={font=\small}},
			label/.append style={overlay, inner sep=1pt, minimum size=1mm},
			every state/.style={
					circle, text=black, fill=white,
					draw=black, inner sep=.1mm, minimum width=5mm, minimum height=3mm,
				}
		}}
\newcommand \green  [1] {{\color{teal} #1}}
\newcommand \blue  [1] {{\color{blue} #1}}
\newcommand \red [1] {{\color{red} #1}}
\definecolor{green}{RGB}{87,171,39}
\definecolor{blue} {RGB}{0,84,159}
\definecolor{orange}{RGB}{246,168,0}
\definecolor{red}{RGB}{204,7,30}
\newcommand{\eps}{\varepsilon}
\DeclareMathOperator{\prf}{\mathsf{Prf}}
\newcommand{\polyclass}{\ensuremath{d\text{-}\mathbb{IRC}(k\text{-}\mathbb{DPA})}\xspace}
\newcommand{\polyclassBuchi}{\ensuremath{d\text{-}\mathbb{IRC}(2\text{-}\mathbb{DPA})}\xspace}
\newcommand{\irc}{\ensuremath{\mathbb{IRC}(\mathbb{DPA})}\xspace}
\renewcommand{\iff}{\ensuremath{\Leftrightarrow}}
\newcommand{\true}{\ensuremath{\op{true}}\xspace}
\newcommand{\mealyal}{\ensuremath{\op{MMAL}}\xspace}
\newcommand{\dpainf}{\ensuremath{\op{DPAInf}}\xspace}
\newcommand{\glerc}{\ensuremath{\op{GLeRC}}\xspace}
\newcommand{\family}[1]{\overline{#1}}
\newcommand{\bow}[1]{{#1}_{\bowtie}}
\newcommand{\familyfor}[2]{\ensuremath{(#1_c)_{c \in [#2]}}}
\newcommand{\sem}[1]{#1}
\newcommand{\minrep}[1]{\ensuremath{\op{mr}(#1)}}
\newcommand{\classes}[1]{\ensuremath{[#1]}}
\newcommand{\topclass}[2]{%
	\ifthenelse{\isempty{#2}}{\op{T}_{#1}(\mathcal{F}_L)}{\op{T}_{#1}(#2)}}
\newcommand{\qinclass}[1]{\ensuremath{{Q_c}}}
\let\inf\relax\DeclareMathOperator{\inf}{\mathrm{Inf}}
\let\min\relax\DeclareMathOperator{\min}{\mathrm{min}}
\newcommand{\mc}[1]{\ensuremath{\mathcal{#1}}\xspace}
\newcommand{\FF}{\ensuremath{\mathcal{F}}}
\newcommand{\NNN}{\ensuremath{\mathbb{N}}}
\newcommand{\TT}{\ensuremath{\mathcal{T}}}
\newcommand{\HH}{\ensuremath{\mathcal{H}}}
\newcommand{\CC}{\ensuremath{\mathcal{C}}}
\newcommand{\DD}{\ensuremath{\mathcal{D}}}
\newcommand{\BB}{\ensuremath{\mathcal{B}}}
\newcommand{\MM}{{\ensuremath{\mathcal{M}}}}
\renewcommand{\AA}{{\ensuremath{\mathcal{A}}}}
\newcommand{\op}[1]{\ensuremath{\operatorname{\mathsf{#1}}}}
\renewcommand{\epsilon}{\varepsilon}
\newcommand{\Stay}[1]{\ensuremath{E_{#1}}}
\newcommand{\Amin}{\mc{A}_{\min}}
\newcommand{\refine}{\mathbin{\preceq}}
\begin{document}
\maketitle

\begin{abstract}
	We present a polynomial time algorithm that constructs a deterministic parity automaton (DPA) from a given set of positive and negative ultimately periodic example words.
	We show that this algorithm is complete for the class of $\omega$-regular languages, that is, it can learn a DPA for each regular $\omega$-language.
	For use in the algorithm, we give a definition of a DPA, that we call the precise DPA of a language, and show that it can be constructed from the syntactic family of right congruences for that language (introduced by Maler and Staiger in 1997). Depending on the structure of the language, the precise DPA can be of exponential size compared to a minimal DPA, but it can also be a minimal DPA. The upper bound that we obtain on the number of examples required for our algorithm to find a DPA for $L$ is therefore exponential in the size of a minimal DPA, in general. However, we identify two parameters of regular $\omega$-languages such that fixing these parameters makes the bound polynomial.
\end{abstract}

\section{Introduction} \label{intro}
Construction of deterministic finite automata (DFA) from labeled example words is usually referred to as passive learning of automata, and has been studied since the 1970s \cite{BiermannF72,TrakhtenbrotB73,Gold78}, see \cite{LopezG16} for a survey. A passive learning algorithm (passive learner for short) receives a sample $S = (S_+,S_-)$ of positive and negative example words as input, and should return a DFA that accepts all words in $S_+$ and rejects all words in $S_-$.  In order to find passive learners that are robust and generalize from the sample, Gold proposed the notion of ``learning in the limit'' \cite{learninginthelimit}. Given a class $\mc{C}$ of regular languages, a learner is said to learn every language in $\mc{C}$ in the limit, if for each language $L \in \mc{C}$ there is a DFA $\mathcal{A}$ and a characteristic sample $S^L$ that is consistent with $L$, such that the learner returns $\mathcal{A}$ for each sample that is consistent with $L$ and contains all examples from $S^L$. In this case, we also say that the learner is complete for the class $\mc{C}$.

In active DFA learning, the learning algorithm (referred to as active learner) has access to an oracle for a regular target language $L$. Angluin proposed an active learner that finds the minimal DFA for a target language $L$ in polynomial time based on membership and equivalence queries  \cite{activelearningangluin}. A membership query asks for a specific word if it is in the target language, and the oracle answers ``yes'' or ``no''. An equivalence query asks if a hypothesis DFA accepts the target language, and the oracle provides a  counterexample if it does not.

Starting from these first works on DFA learning, many variations of the basic algorithms have been developed and were implemented in recent years, e.g., in the framework learnlib \cite{IsbernerHS15} and the library flexfringe \cite{VerwerH17}.

The key property that is used in most learning algorithms for regular languages is the characterization of regular languages by the Myhill/Nerode congruence: For a language $L$, two words $u,v$ are equivalent if for each word $w$, either both $uw,vw$ are in $L$, or both are not in $L$. It is a basic result from automata theory that $L$ is regular if and only if the Myhill/Nerode congruence has finitely many classes, and that these classes can be used as state set of the minimal DFA for $L$ (see basic textbooks on automata theory, e.g.~\cite{HopcroftU69}). In particular, this means that two words can lead to the same state if they cannot be separated by a common suffix. Based on this observation, the transition structure of a minimal $n$ state DFA can be fully characterized by a set of examples (words together with the information whether they are in $L$ or not) whose size is polynomial in $n$.

In this paper, we consider the problem of constructing deterministic automata on infinite words, so called $\omega$-automata, from examples. Automata on infinite words have been studied since the early 1960s as a tool for solving decision problems in logic \cite{buchiOriginalPaper} (see also \cite{thomassurvey}), and are nowadays used in procedures for formal verification and synthesis of reactive systems (see, e.g., \cite{BaierK2008,Thomas09,MeyerSL18} for surveys and recent work). But in contrast to standard finite automata, very little is known about learning deterministic $\omega$-automata.

The first problem, how to represent infinite example words, is easily solved by considering ultimately periodic words. An infinite word is called periodic if it is of the form $v^\omega$ for a finite word $v$, and ultimately periodic if it is of the form $uv^\omega$ for finite words $u,v$ (where $v$ must be non-empty). It is a classical result that a regular $\omega$-language is uniquely determined by the ultimately periodic words that it contains, see e.g. \cite{buchiOriginalPaper} or \cite[Fact~1]{upwords}.

The main obstacle in learning deterministic $\omega$-automata is that there is no Myhill/Nerode-style characterization of deterministic $\omega$-automata.
It is still true that two finite words $u,v$ that are separated by $L$ with a common suffix  $w$ (which is an infinite word in this case) have to lead to different states in any deterministic $\omega$-automaton for $L$. But it is not true anymore that all words that are not separated can lead to the same state. Currently, there are no active or passive polynomial time learners that can infer a deterministic $\omega$-automaton for each regular $\omega$-language. The existing algorithms either learn different representations, use information about the target automaton, or can only learn subclasses of the regular $\omega$-languages (see related work at the end of this introduction).

In this paper, we focus on passive learning of deterministic parity automata (DPA).
A passive learner for DPAs receives a sample $S = (S_+,S_-)$ of positive and negative ultimately periodic example words as input, and returns a DPA that is consistent with the sample, that is, it accepts all words from $S_+$ and rejects all words from $S_-$. We are interested in such a passive learner that can learn a DPA for every regular $\omega$-language in the limit. Without further requirements, such an algorithm is fairly easy to obtain by simple enumeration: Iterate through all DPAs by increasing size and some lexicographic ordering for DPAs of same size, and output the first DPA that is consistent with $S$. This algorithm is easily seen to infer a smallest DPA for each regular $\omega$-language $L$ in the limit: A characteristic sample for $L$ only needs to contain example words that separate $L$ from each language $L'$ that is accepted by a DPA that precedes the first DPA for $L$ in the enumeration used by the algorithm. However, it is fairly obvious that the running time of such an algorithm is exponential. And it is known that already for DFAs, the minimum automaton identification problem is NP-hard \cite{Gold78} (follows also from \cite{Pfleeger73}). This easily transfers to DPAs.

So, as already proposed by Gold \cite{Gold78}, the time and data requirements of a passive learner are of interest. More precisely, the time complexity of a passive learner is its running time measured in the size of the input sample. A polynomial time passive learner constructs a DPA that is consistent with the input sample $S$ in time polynomial in the size of $S$. Note that this property is independent of the learning-in-the-limit property, so it is not related to the size of characteristic samples for identifying a language. This is what the data requirement is about: A passive learner for DPAs identifies a class $\mc{C}$ of regular $\omega$-languages from polynomial data if for each $L \in \mc{C}$ there is a characteristic sample $S^L$ for $L$ whose size is polynomial in the size of a smallest DPA for $L$. (Where, as for DFAs, $S^L$ is called characteristic for the passive learner and $L$ if the passive learner returns the same DPA for $L$ for each sample $S$ that is consistent with $L$ and contains all examples from $S^L$.)

Currently, no passive learner is known that runs in polynomial time and learns every regular $\omega$-language in the limit from polynomial data.
In fact, all passive learners for deterministic $\omega$-automata that have been presented so far only learn subclasses of the regular $\omega$-languages in the limit (see related work at the end of this introduction for more details).
We present, to the best of our knowledge, the first polynomial time passive learner for deterministic $\omega$-automata that can learn every regular $\omega$-language in the limit. Furthermore, although we can only show an exponential upper bound on the data requirement of our algorithm in general, the data requirement of our algorithm is polynomial for the subclasses of the regular $\omega$-languages for which learning in the limit from polynomial data is already known (see \cref{fixedparams}).

In more detail, our contributions can be summarized as follows:
\begin{enumerate}
	\item We introduce a DPA for a regular language $L$ that we call the precise DPA. The priority assignment computed by this DPA corresponds to a priority assignment that is obtained in a natural way by analyzing the periodic parts of words in the language. This analysis yields what we call the precise family of weak priority mapping (precise FWPM), which is a family of mappings that assign priorities to finite words. We give a construction that joins this family of mappings into a single one. The minimal Mealy machine computing this join mapping corresponds to the precise DPA.
	\item We show how the precise DPA for $L$ can be constructed from the syntactic family of right congruences (FORC) of $L$ \cite{syntacticcongruence}
	      by showing that a family of Mealy machines for the precise FWPM can be obtained in polynomial time by assigning priorities to the classes of the syntactic FORC.
	      From the Mealy machines for the precise FWPM one then obtains a construction of the precise DPA that is only exponential in the number of required priorities.
	      In particular, it is polynomial (in the size of the syntactic FORC) if the number of priorities is fixed. This improves the known upper bound for the construction of a DPA for a language from its syntactic FORC that was known from \cite{AngluinBF16}, which first builds a nondeterministic Büchi automaton whose size is polynomial in the size of the FORC, and then determinizes this Büchi automaton.
	      Because of the last step, this construction is exponential in the size of the FORC.

	      \label{intro:FORCtoDPA}
	\item We present a polynomial time passive learner for DPAs (an algorithm for constructing a consistent DPA from given sets of positive and negative examples of ultimately periodic words).
	      The algorithm can be seen as an extension of the algorithm from \cite{BohnL22}, which is for deterministic Büchi automata. However, this extension has completely new parts
	      that are based on the insights on the precise DPA of a language and its construction from the syntactic FORC. The main steps of the algorithm are:
	      \begin{enumerate}
		      \item  Infer a FORC from the examples using a state merging technique as in \cite{BohnL21,BohnL22}. For this purpose, we propose a generic algorithm \glerc for inferring right congruences from samples. This algorithm is parameterized by a consistency function that is invoked in order to check if a merge (an inserted transition) causes an inconsistency with the sample. This algorithm can be instantiated in different settings and allows us to give a uniform presentation of passive learners for different right congruences and their completeness proofs for the learning in the limit property.
		      \item  Compute a priority assignment on the prefixes of the example words, based on the construction of the precise DPA from the syntactic FORC mentioned in contribution~\ref{intro:FORCtoDPA}. We show that, although the construction of the precise DPA from the FORC is exponential, if it is only applied to the example words, then it induces a priority assignment of polynomial size. This is important because we cannot simply apply an exponential construction if we want to obtain  a polynomial time passive learner. %\label{step:precise}
		      \item Use (as black box) an active learning algorithm for Mealy machines (on finite words) to infer a DPA that is consistent with the sample, using the priority assignment from the previous step for answering queries of the active learning algorithm. This step extends the priority mapping that is computed in the previous step on the prefixes of the example words to the set of all finite words. Again, this step is done in order to bypass the direct construction of the DPA from the FORC, which is exponential.
	      \end{enumerate}
	      We show that this algorithm, in the limit, infers a DPA for each regular $\omega$-language $L$ (either the precise DPA or a smaller one). In general, our upper bound for the number of examples that is required for inferring a DPA for $L$ is exponential in the size of a minimal DPA for $L$. However, we identify two parameters of regular $\omega$-languages such that the required data is polynomial if we fix these parameters, generalizing the currently known results on passive learning of DPAs with polynomial time and data from \cite{AngluinFS20,BohnL21,BohnL22}.
\end{enumerate}
The paper is structured as follows. We continue this introduction with a discussion on related work. In \cref{preliminaries} we introduce required notation and results. In \cref{preciseDPA} we introduce the notion of precise DPA and prove some results on this class of DPAs. In \cref{forcs} we show how to construct the precise DPA of a language from its syntactic FORC. In \cref{dpalearner} we present the learning algorithm, and in \cref{conclusion} we conclude.

\subsection*{Related Work}
The first paper explicitly dealing with construction of $\omega$-automata from examples that we are aware of is \cite{HigueraJ04}, where example words $w$ are finite and can be of one of the following four types: all $\omega$-words with $w$ as prefix are in the language, at least one such $\omega$-word is in the language, all of them are outside the language, or at least one  is outside the language.
It is shown in \cite{HigueraJ04} that from such examples, an adaption of a state merging technique for DFA can learn all safety languages in the limit and runs in polynomial time on a sample (the class of safety languages corresponds to the regular $\omega$-languages that can be characterized by forbidden prefixes, and is quite restricted).

The construction of nondeterministic Büchi automata (NBAs) from examples is considered in \cite{BarthH12} by a reduction to SAT.
This construction is used in order to reduce the size of a given NBA.
Roughly speaking, the algorithm collects some ultimately periodic example words of the form $uv^\omega$ from the given NBA, and subsequently checks if there is a smaller NBA consistent with these words (using a SAT instance).
If it finds one, then it checks whether it is equivalent to the given NBA.
Otherwise, it adds a new example obtained from the equivalence test and continues.
Since the algorithm uses a reduction to SAT, it is clearly not a polynomial time algorithm.

The construction of deterministic parity automata from examples is considered in \cite{AngluinFS20} for the class of IRC(parity) languages, which are the $\omega$-languages that can be accepted by a parity automaton that uses the Myhill/Nerode congruence as its transitions structure.
The transition structure can hence be inferred from the examples in the same way as for DFA. It is shown in \cite{AngluinFS20} that the algorithm can infer every such language with polynomial time and data by using a decomposition of parity automata that is known from \cite{computingrabinindex} where it was used for the minimization of the number of required priorities.

The well-known RPNI algorithm \cite{rpniOG} that infers a DFA from examples by a state merging technique has been adapted to deterministic $\omega$-automata in \cite{BohnL21}, resulting in a polynomial time passive learner that can infer all IRC(parity) languages in the limit from polynomial data (the same is shown for other acceptance conditions like Büchi, generalized Büchi, and Rabin). The algorithm can also infer automata for languages that are not in this class, however it is also known that there are regular $\omega$-languages for which it cannot infer a correct automaton.

Finally, \cite{BohnL22} presents a polynomial time passive learner for deterministic Büchi automata (DBA) that can infer a DBA for every DBA-recognizable language in the limit. The best known upper bound for the number of examples and the size of the resulting DBA is, however, exponential in the size (which is the number of states) of a minimal DBA for the language. For the class of IRC(Büchi) languages, this algorithm only requires polynomial data.

There are also a few active learning algorithms for $\omega$-languages. We are focusing on passive learning here, but as shown in \cite[Proposition~13]{BohnL21}, a polynomial time active learner can be turned into a polynomial time passive learner through simulation, given that the class of target automata satisfies certain properties. We briefly summarize active learning algorithms for $\omega$-languages in this context. If not mentioned otherwise, the active learners use membership and equivalence queries.

The first active learning algorithm for $\omega$-languages can learn deterministic weak Büchi automata in polynomial time \cite{MalerP95}. This algorithm and the class of target automata satisfy the properties from \cite{BohnL21} and thus can be used to build a passive learner for deterministic weak Büchi automata, which define a subclass of IRC(Büchi) languages.

The first active learner for the full class of regular $\omega$-languages was proposed in \cite{FarzanCCTW08}. It can learn an NBA for each regular $\omega$-language $L$ by learning a representation of $L$ using finite words called $L_\$$ that was proposed in \cite{upwords}. This representation contains all finite words of the form $u\$v$ (for a fresh symbol $\$$) such that the ultimately periodic $\omega$-word $uv^\omega$ is in $L$. A DFA for this language can be learned using known active learning algorithms for DFAs. The main difficulty is that the oracle expects an $\omega$-language on equivalence queries, and that an intermediate DFA in the learning process might be inconsistent in the sense that it accepts $u\$v$ and rejects $u'\$v'$, although $uv^\omega = u'(v')^\omega$. The algorithm in \cite{FarzanCCTW08} transforms the DFAs into NBAs such that progress in the DFA learning algorithm can be ensured. The resulting active learner can learn an NBA for every regular $\omega$-language $L$ in time polynomial in the minimal DFA for $L_\$$ and in the length of the shortest counterexample returned by the teacher.
Note, however, that the size of a minimal DFA for $L_\$$ may be exponential in the size of a minimal NBA for $L$.

A similar idea is used in \cite{LiCZL17} that also presents an active learner for NBAs. Instead of $L_\$$, the algorithm learns a representation of the target $\omega$-language that is called family of DFAs (FDFA for short). This formalism has been introduced in \cite{Klarlund94} based on the notion of family of right congruences (FORC for short) from \cite{syntacticcongruence} (the technical report of \cite{syntacticcongruence} dates back to the same year as \cite{Klarlund94}). FDFAs are very similar to the $L_\$$-representation by DFAs. In FDFAs, the pairs are represented by several DFAs, one (actually just a deterministic transition system without accepting states) for the first component, called the leading automaton, and one DFA for each state $q$ of the leading DFA, called the progress automaton for $q$. A pair $(u,v)$ is accepted if $v$ is accepted from the progress DFA of state $q$ that is reached via $u$ in the leading automaton.
In contrast to a DFA for $L_\$$, an FDFA needs only to correctly accept/reject pairs $(u,v)$ for which $u$ and $uv$ reach the same state in the leading automaton (other pairs are don't cares).
For this reason, the (syntactic) FDFA can be exponentially more succinct than a minimal DFA for $L_\$$~\cite[Theorem~2]{AngluinF16} (the statement in \cite{AngluinF16} is for the periodic FDFA, which is almost the same as a minimal DFA for $L_\$$).

However, for FDFAs there is the same difficulty as for the $L_\$$ representation: An FDFA might accept some decompositions of an ultimately periodic word and reject others. An FDFA is called saturated if for each ultimately periodic word it accepts all relevant decompositions or rejects all of them.

The algorithm in \cite{LiCZL17} uses an active FDFA learning algorithm as presented in \cite{AngluinF16}.
But \cite{AngluinF16} requires a teacher that can take a hypothesis in form of a general FDFA (i.e. one that is not necessarily saturated)\footnote{The presentation of \cite{AngluinF16} contains an error because it assumes that the FDFAs used in the learning algorithm are saturated and hence define an $\omega$-language, which is not correct. So the algorithm only works with a teacher for FDFAs instead of $\omega$-languages.}.
For turning this into a learning algorithm with a teacher for regular $\omega$-languages, \cite{LiCZL17} proposes two alternative constructions for transforming FDFAs into NBAs.

Since these two active learners from \cite{FarzanCCTW08,LiCZL17} produce NBAs, they cannot be used to build a passive learner for a deterministic automaton model.

It is shown in \cite{AngluinBF16} that saturated FDFAs have many good closure and algorithmic properties similar to deterministic automata. We are not aware of any passive learners for saturated FDFAs. And the generic use of an active learner in a passive setting as described in \cite{BohnL21} does not work for the active FDFA learning algorithm from \cite{AngluinF16} because it works with general FDFAs, not only saturated ones, and the best known algorithm for checking whether an FDFA is saturated is in PSPACE \cite{AngluinBF16}. So one cannot build, based on the currently available results, a polynomial time passive learner for saturated FDFAs based on the active FDFA learner from \cite{AngluinF16}.

Finally, there is the active learning algorithm for deterministic parity automata presented in \cite{MichaliszynO20}. This algorithm does not purely use membership and equivalence queries, but additionally so called loop-index queries, which given an $\omega$-word $w$ return the number of symbols after which the run of the target automaton on $w$ enters the looping part. In order to answer such queries, one needs knowledge about the target automaton (and not just the language). So this active learner cannot be used to directly obtain a passive learner through simulation.

\section{Preliminaries}\label{preliminaries}
We use standard definitions and terminology from the theory of finite automata and $\omega$-automata, and assume some familiarity with these concepts (see, e.g., \cite{Thomas97,VardiW08,Wilke16} for some background).
An alphabet $\Sigma$ is a non-empty, finite set of symbols.
We use the standard notations $\Sigma^*,\Sigma^\omega$ for the sets of all finite words and all $\omega$-words, respectively, and let $\Sigma^+ := \Sigma^*\setminus\{\varepsilon\}$, where $\varepsilon$ is the empty word, and $\Sigma^\infty:=\Sigma^* \cup \Sigma^\omega$.
For $u \in \Sigma^*$ we write $|u|$ for its length.
A word $u \in \Sigma^*$ is called \emph{prefix} of $v \in \Sigma^\infty$ if $ux=v$ for some $x \in \Sigma^\infty$, and we write $u \sqsubseteq v$ in that case.
For a word $w \in \Sigma^\infty$, we use $\prf(w)$ to refer to the set of all prefixes of $w$, and for $X \subseteq\Sigma^\infty$, we write $\prf(X)$ for the union of all $\prf(x)$ for $x \in X$.
For $u \in \Sigma^*$ and $X \subseteq \Sigma^\infty$, we let $u^{-1}X := \{w \in \Sigma^\infty \mid uw \in X\}$.
We use the \emph{length-lexicographic (llex) ordering} on finite words that is based on some underlying ordering of the alphabet, and first compares words by length, and words of same length in the lexicographic ordering.

We call ${\sim} \subseteq \Sigma^* \times \Sigma^*$ a \emph{right congruence (RC)} if ${\sim}$ is an equivalence relation and  $u \mathrel{\sim} v$ implies $ua \mathrel{\sim} va$ for all $a \in \Sigma$. For a word $u \in \Sigma^*$, we use $[u]_{\sim}$ (just $[u]$ if ${\sim}$ is clear from the context) to denote the \emph{class} of $u$ in ${\sim}$ and write $\classes{{\sim}}$ for the set of all classes in ${\sim}$. We say that \emph{${\sim}_1$ refines ${\sim}_2$}, written as ${\sim}_1 \refine {\sim}_2$ if $u \mathrel{{\sim}_1} v$ implies $u \mathrel{{\sim}_2} v$.
We often use families indexed by classes of an RC. In such cases we also use words as indices representing their class. For example, if we have a family $(\gamma_c)_{c \in \classes{{\sim}}}$, then we often write $\gamma_u$ to refer to $\gamma_{[u]_{{\sim}}}$. The index or size, denoted $|{\sim}|$, of a right congruence ${\sim}$ is the number of its classes. In the following, we tacitly assume that all considered right congruences are of finite index, and not always explicitly mention this.

A (deterministic) transition system (TS) $\TT = (Q, \Sigma, \iota, \delta)$ over the finite alphabet $\Sigma$ consists of a finite, non-empty set of states $Q$, a transition function $\delta: Q \times \Sigma \to Q$ and an initial state $\iota \in Q$. We define $\delta^*: Q \times \Sigma^* \to Q$ inductively by  $\delta^*(q, \epsilon) = q$ and $\delta^*(q, aw) = \delta^*(\delta(q, a), w)$, and write $\delta^*(u)$ or $\TT(u)$ for $\delta^*(\iota,u)$. The \emph{run} of $\TT$ on $w = a_0a_1\ldots \in \Sigma^\infty$ from $q_0 \in Q$ is a (possibly infinite) sequence $q_0q_1\ldots$ with $q_{i+1} = \delta(q_i, a_i)$. We write $\inf_\TT(w)$ for the set of states that occur infinitely often in the run of $\TT$ on $w \in \Sigma^\omega$, and $\inf_\TT(X)$ for the union of $\inf_\TT(w)$ for $w \in X$.
A \emph{strongly connected component (SCC)} of $\TT$ is a maximal strongly connected set (with the usual definition). We write $\text{SCC}_\TT(q)$ for the SCC of $\TT$ that contains $q$.
A \emph{partial TS} is a TS in which $\delta$ is a partial function.

A right congruence ${\sim}$ induces a TS $\TT_{\sim} = ([{\sim}], \Sigma, [\varepsilon], \delta_{\sim})$, where $\delta_{\sim}(c, a) = [ca]$ with $[ca]$ being the class of $ua$ for an arbitrary $u \in c$. Vice versa, a transition system $\TT$ gives rise to the congruence ${\sim_\TT}$, where $u {\sim_\TT} v$ if and only if $\TT(u) = \TT(v)$. So we interchange these objects and often just speak of the TS ${\sim}$, meaning the TS $\TT_{\sim}$.

For a right congruence ${\sim}$ and a class $c \in \classes{{\sim}}$, we define $E^{\sim}_c = \{x \in \Sigma^+ \mid cx \mathrel{\sim} c\}$ as the set of non-empty words that loop on $c$. Often, we omit the superscript ${\sim}$ if it is clear from the context.

A \emph{language} (over $\Sigma$) is a subset of $\Sigma^*$.
A \emph{deterministic finite automaton (DFA)} $\DD$ consists of a transition system and a subset $F \subseteq Q$ of final or accepting states.
The accepted language is $L(\DD)$  with the standard definition, i.e.~the set of all words on which $\DD$ has a run from its initial state to some final state.
An \emph{$\omega$-language} is a set $L \subseteq \Sigma^\omega$ of $\omega$-words. There are different automaton models for defining the class of $\omega$-regular languages (see \cite{Thomas97,VardiW08,Wilke16}).
We are interested in (transition-based) deterministic parity automata defined below.

A \emph{priority mapping} is a function $\pi: \Sigma^+ \to \{0,\dotsc,k-1\}$ for some $k > 0$.
We overload notation and write $\pi(w)$ for $w \in \Sigma^\omega$ to denote the least $i$ such that $\pi(x) = i$ for infinitely many prefixes $x$ of $w$.
The language $L(\pi)$ defined by $\pi$ is the set of all $\omega$-words $w$ such that $\pi(w)$ is even.
We call $\pi$ \emph{weak} if $\pi(xy) \leq \pi(x)$ for all $x \in \Sigma^+$ and $y \in \Sigma^+$.
If ${\sim}$ is a right congruence and $\pi_c$ for each $c$ is a weak priority mapping, then $\family\pi = (\pi_c)_{c \in [{\sim}]}$ is called a \emph{family of weak priority mappings} (FWPM).

A \emph{deterministic parity automaton (DPA)} is of the form $\AA = (Q,\Sigma,\iota,\delta,\kappa)$ with a TS $\TT_\AA = (Q,\Sigma,\iota,\delta)$ and a function $\kappa:Q \times \Sigma \rightarrow \{0,\ldots,k-1\}$ mapping transitions in $\AA$ to priorities.
The size of \(\AA\) is the number of states, and we denote it by \(|\AA|\).
We overload notation and denote the priority mapping induced by a DPA as $\AA: \Sigma^+ \rightarrow \{0,\ldots,k-1\}$ which is defined as $\AA(ua) := \kappa(\delta^*(u),a)$ for $u \in \Sigma^*$ and $a \in \Sigma$.
The $\omega$-language $L(\AA)$ accepted by $\AA$ is the language of the induced priority mapping, i.e.~the set of all $\omega$-words $w$ such that the least priority seen infinitely often during the run of $\AA$ on $w$ is even.

For an automaton $\AA$, we use ${\sim}_\AA$ to refer to the right congruence of its transition system.
We call an $\omega$-language regular if it is accepted by a DPA.
The Myhill/Nerode congruence ${\sim}_L$ of a language $L \subseteq \Sigma^\circ$ for $\circ \in \{*,\omega\}$ is the right congruence defined by $u \mathrel{{\sim}_L} v$ if $\forall w \in \Sigma^\circ:\, (uw \in L \Leftrightarrow vw \in L)$.
If $L$ is regular, then ${\sim}_L$ is of finite index.
If ${\sim}$ is a right congruence that refines ${\sim}_L$ and $c \in [{\sim}]$ we let $L_c := u^{-1}L$, where $u$ is an arbitrary word in $c$.

For $u \in \Sigma^+$ and $q \in Q$ we use $\Amin(q,u)$ to refer to the minimal priority in the run of $\AA$ on $u$ from $q$.
For a right congruence ${\sim}$ that is refined by ${\sim}_\AA$, and a class $c$ of ${\sim}$, we call $q \in Q$ a $c$-state if the words that lead to $q$ in $\AA$ from the initial state are in $c$.

The \emph{parity complexity} of a regular $\omega$-language $L$, denoted $\op{pc}(L)$, is the least $k$ such that $L$ is accepted by a DPA with priorities $\{0,\ldots,k-1\}$ (one may distinguish also whether the smallest priority required is $0$ or $1$, but we omit this for simplicity).
We call a DPA $\AA = (Q,\Sigma,\iota,\delta,\kappa)$ \emph{normalized} if $\kappa$ is minimal in the following sense: For every $\kappa':Q \times \Sigma \rightarrow \NNN$ with $L((\TT_\AA,\kappa')) = L(\AA)$, we have $\kappa(q,a) \le \kappa'(q,a)$ for all $q \in Q$ and $a \in \Sigma$. This unique minimal priority function on $\TT_\AA$ can be computed in polynomial time \cite{computingrabinindex} (see also \cite{EhlersS22} for an adaption to transition-based DPAs). We refer to this as \emph{normalization of $\AA$}.

In some proofs in \cref{preciseDPA} we need some basic facts about normalized DPAs that are somehow known, but we cannot give a concrete reference. So we state and prove them in the following lemma.
The intuition behind the lemma is the following:
If a normalized DPA reads a word $u$ from some state $q$ and visits minimal priority $i$ on the way, then it is possible to complete a loop on $q$ that starts with $u$ and has $i$ as minimal priority (or the run changed the SCC and then $i=0$).
Furthermore, if it is possible to have a loop with priority $i$ on $q$, then also with priority $i-1$ (for $i \ge 2$).

\begin{lemma}\label{lem:normalizedDPA}
	Let $\AA = (Q,\Sigma,\iota,\delta,\kappa)$ be a normalized DPA, $q \in Q$ and $u \in \Sigma^+$ be such that $\Amin(q,u) = i$.
	\begin{enumerate}[(a)]
		\item Either there exists some word $v \in \Sigma^*$ with $\delta^*(q,uv) = q$ and $\Amin(q,uv) = i$, or $i=0$ and $\text{SCC}_\AA(\delta^*(q, u)) \cap \text{SCC}_\AA(q) = \emptyset$.
		\item If $i \ge 2$, then there is $y \in \Sigma^+$ with $\delta^*(q,y) = q$ and $\Amin(q,y) = i-1$.
	\end{enumerate}
\end{lemma}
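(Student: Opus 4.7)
The plan is to extract from the minimality of $\kappa$ a single structural fact about normalized DPAs and then derive both parts from it. For $i \geq 1$, write $\AA_{\geq i}$ for the subgraph of $\AA$ whose transitions are exactly those of priority at least $i$. The key claim I would establish first is: for every $i \geq 1$, each transition $(p,a)$ with $\kappa(p,a) \geq i$ has $p$ and $\delta(p,a)$ in the same SCC of $\AA_{\geq i}$. For a proof by contradiction, suppose $(p,a)$ crosses SCCs of $\AA_{\geq i}$. Since the SCCs of $\AA_{\geq i}$ form a DAG, any cycle in $\AA$ through $(p,a)$ must return via some transition of priority $< i$, so every such cycle has minimum priority $\leq i - 1$. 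Then redefining $\kappa'(p,a) := i - 1$ preserves the language, because any run visiting $(p,a)$ infinitely often must already visit a priority $\leq i - 1$ infinitely often; this contradicts the pointwise minimality of $\kappa$.

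Part (a) falls out of this. If $\text{SCC}_\AA(q) \cap \text{SCC}_\AA(q') = \emptyset$ (with $q' = \delta^*(q,u)$), then the run on $u$ crosses an SCC boundary of $\AA$; the corresponding transition lies on no cycle in $\AA$, and a direct minimality argument pins its priority to $0$, giving $i = 0$. Otherwise $q$ and $q'$ share an SCC of $\AA$; the run on $u$ uses only priorities $\geq i$, so by the key claim it stays within a single SCC $D$ of $\AA_{\geq i}$. Therefore there is a return path $v$ from $q'$ to $q$ inside $D$, and $\Amin(q, uv) = \min(i, \Amin(q',v)) = i$.

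For part (b), part (a) already provides a loop $z$ at $q$ with $\Amin(q,z) = i$, so $q$'s SCC $D$ in $\AA_{\geq i}$ contains at least one transition of priority exactly $i$. Suppose for contradiction that no loop at $q$ has minimum priority $i - 1$, and let $D'$ be $q$'s SCC in $\AA_{\geq i-1}$. No transition inside $D'$ can have priority $i - 1$, because concatenating a path inside $D'$ from $q$ to that transition, the transition itself, and a path inside $D'$ back to $q$ would be exactly such a loop. Consequently $D'$ uses only priorities $\geq i$, so by maximality $D' = D$ — in other words, $D$ is an SCC of $\AA_{\geq i-1}$ as well, not merely of $\AA_{\geq i}$.

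I would then define $\kappa'$ by lowering every transition inside $D$ with priority $i$ to priority $i - 2$ (using $i \geq 2$) and leaving everything else unchanged, and verify language preservation by a case analysis on the set $T$ of transitions visited infinitely often by a run: if $T$ avoids $D$, nothing changes; if $T \subseteq D$, then all $T$-priorities are $\geq i$, and a minimum of $i$ under $\kappa$ becomes a minimum of $i - 2$ under $\kappa'$, preserving parity; if $T$ meets both $D$ and its complement, then $T$ contains a cycle traversing both, which cannot consist only of priority-$\geq i-1$ transitions (this would contradict the DAG structure of the SCCs of $\AA_{\geq i-1}$), so $T$ contains a priority-$< i-1$ transition and the minimum is already $\leq i - 2$ and is preserved. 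This contradicts the minimality of $\kappa$. The trickiest step — and the main obstacle — is the third case of this analysis; it is precisely why one first needs the equality $D' = D$, which in turn relied on the standing assumption that no loop at $q$ has minimum priority $i - 1$.
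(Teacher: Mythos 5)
Your proof is correct and follows essentially the same route as the paper's: both hinge on deleting the low-priority transitions, analysing the SCC structure of the resulting subgraph, and exploiting the pointwise minimality of $\kappa$ to lower a priority and contradict normalization. The only differences are cosmetic — you factor the argument for (a) into a reusable structural claim (transitions of priority $\geq i$ stay inside SCCs of $\AA_{\geq i}$), and in (b) you lower only the priority-$i$ transitions of the SCC to $i-2$ where the paper lowers all of its transitions by $2$; both variants are sound.
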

\begin{proof}
	For (a), remove from $\AA$ all transitions with priority $< i$ and call the resulting partial DPA $\AA'$. Let $q' := \delta^*(q,u)$. Since $\Amin(q,u) = i$, the $u$-path from $q$ to $q'$ still exists in $\AA'$. If there is a path from $q'$ to $q$ in $\AA'$, then let $v$ be a word labeling such a path. Then $\Amin(q,uv) = i$ and $\delta^*(q,uv) = q$.

	If there is no path from $q'$ to $q$ in $\AA'$, let $u = u_1au_2$ with $u_1,u_2 \in \Sigma^*$ and $a \in \Sigma$ such that $q_1 := \delta^*(q,u_1)$ is in $\text{SCC}_{\AA'}(q)$ and $\delta(q_1,a)$ is not in $\text{SCC}_{\AA'}(q)$. Then every run in $\AA$ that takes the $a$-transition from $q_1$ infinitely often, contains a transition of priority $<i$ infinitely often. So, if $i>0$, we can change the priority of the $a$-transition from $q_1$ to $i-1$ without changing the language of $\AA$, contradicting the fact that $\AA$ is normalized (since the transition exists in $\AA'$, its priority is at least $i$ in $\AA$). Thus, $i=0$ and $\AA' = \AA$. This means that $q$ and $q'$ are in different SCCs and thus there is no path back from $q'$ to $q$.

	\medskip

	For (b), remove all transitions of priority $< i-1$ from $\AA$, obtaining $\AA''$. By (a), $\text{SCC}_{\AA''}(q)$ contains a loop and thus at least one transition.
	If $\text{SCC}_{\AA''}(q)$ contains a transition with priority $i-1$, we are done (take as $y$ the label of a path in $\AA''$ that starts in $q$, takes this ($i-1$)-transition, and goes back to $q$).

	Otherwise, we show that $\AA$ cannot be normalized: Lower in $\AA$ the priority of all transitions that are in $\text{SCC}_{\AA''}(q)$ by $2$. Since $i \ge 2$ and no transition in  $\text{SCC}_{\AA''}(q)$ has priority $i-1$, this is possible.
	Call the resulting DPA $\BB$. Consider the infinity set $X$ of transitions of a run (since $\AA$ and $\BB$ have the same transition structure, it is the same infinity set in both).
	If $X \cap \text{SCC}_{\AA''}(q) = \emptyset$, the smallest priority in $X$ is the same in $\AA$ and $\BB$.
	If $X \subseteq \text{SCC}_{\AA''}(q)$, then the minimal priority in $X$ differs by exactly $2$ in $\AA$ and $\BB$.
	Otherwise, $X$ contains a transition that exits $\text{SCC}_{\AA''}(q)$, and hence has priority $\le i-2$.
	Since all transitions inside $\text{SCC}_{\AA''}(q)$ have priority $\ge i-2$ in $\AA$ and $\BB$, the minimal priority in $X$ is the same in $\AA$ and $\BB$.
	So in all cases, $X$ is accepting in $\AA$ if and only if it is in $\BB$, contradicting the fact that $\AA$ is normalized.
\end{proof}

Deterministic parity automata can also be viewed as Mealy machines (with the priorities as output alphabet). Given a DPA $\AA$, there is thus a unique minimal DPA that induces the same priority mapping (when minimizing $\AA$ as Mealy machine). Active learning algorithms for DFAs (as described in the introduction) can be adapted to Mealy machines. In \cref{dpalearner} we use the fact that there are polynomial time active learning algorithms for Mealy machines that produce growing sequences of hypotheses (so their hypotheses are always bounded by the target Mealy machine).
See \cite{SteffenHM11} for a detailed introduction to Mealy machines, their minimization and active learning.

For $v \in \Sigma^+$ we write $v^\omega$ for the \emph{periodic} $\omega$-word $vvv\cdots$, and call an $\omega$-word \emph{ultimately periodic} if it is of the form $uv^\omega$ for $u \in \Sigma^*$ and $v \in \Sigma^+$. It follows from \cite{buchiOriginalPaper} that two regular $\omega$-languages are the same if they contain the same ultimately periodic words, see also \cite[Fact 1]{upwords}.
We use \emph{$\omega$-samples} of the form $S=(S_+,S_-)$, where $S_\sigma$ for $\sigma \in \{+,-\}$ is a finite set of ultimately periodic words. For simplicity, we do not explicitly distinguish ultimately periodic words and their representations, and $uv^\omega \in S_\sigma$ means that some representation of that word is in $S_\sigma$. We sometimes write $uv^\omega \in S$ for $uv^\omega \in S_+ \cup S_-$.
A sample $S$ is consistent with $L$ if $L \cap \sem{S_-} = \emptyset$ and $\sem{S_+} \subseteq L$.

A \emph{passive learner} (for DPAs) is a function $f$ that maps $\omega$-samples to DPAs.
$f$ is called a polynomial-time learner if $f$ can be computed in polynomial time.
A learner $f$ is \emph{consistent} if it constructs from each $\omega$-sample $S = (S_+, S_-)$ a DPA $\AA$ such that $L(\AA$) is consistent with $S$.
We say that $f$ can \emph{learn every regular $\omega$-language in the limit} if for each such language $L$ there is a \emph{characteristic sample} $S^L$ such that $L(f(S^L)) = L$ and $f(S^L) = f(S)$ is the same DPA for all samples $S$ that are consistent with $L$ and contain $S^L$.
For a class $\mc{C}$ of regular $\omega$-languages we say that $f$ can learn every language in $\mc{C}$ in the limit \emph{from polynomial data} if the characteristic samples for the languages in $\mc{C}$ are of polynomial size (in a smallest DPA for the corresponding language).

\section{Precise DPA of a language}\label{preciseDPA}
In this section we introduce the precise DPA for a regular $\omega$-language $L$ (\cref{def:preciseDPA}).
The definition we give is parameterized by a right congruence ${\sim}$ of finite index that refines ${\sim}_L$, meaning there is a precise DPA for $L$ and each such ${\sim}$.
The precise DPA of an $\omega$-language $L$ is then the one where ${\sim}$ is the same as ${\sim}_L$.
For defining the precise DPA, we first introduce families of weak priority mappings that capture the periodic part of the language $L$ with regard to ${\sim}$ (up to \cref{lem:precisecoloringmonotonic}).
Then we show how to combine them into a single priority mapping (\cref{def:join}, \cref{lem:joinlanguage}), and how to build a DPA for this combined priority mapping (\cref{lem:bowacomputesjoin}).
We finish the section with an observation on precise DPAs, \cref{lem:polynomialjoincoloring}, which later on plays a crucial role in achieving polynomial runtime of our learner.

To get an intuition for the family of priority mappings based on which we later define the precise DPA for a language, we fix a regular \(\omega\)-language \(L\) and a right congruence \(\sim\) that refines \(\sim_L\).
Our goal is to define a priority mapping for each \(\sim\)-class \(c\) which assigns priorities to all finite non-empty words, such that it correctly classifies all words \(v\) that loop on \(c\), in the sense that the priority assigned to \(v\) should be even if and only if \(v^\omega \in L_c\).
Additionally, we want the mapping to be weak, that is, non-increasing with growing length of the words. The intuition for this is that the priority of a finite word $v$ should reflect the information that is contained in $v$ with respect to periods that loop on $c$ and start with $v$.
In \cref{ex:pisets} below that illustrates the formal definition, the language $L$ contains all words with finitely many $b$ or infinitely many $aba$.
Clearly, if $v$ contains the pattern $aba$, then this is the ``maximal'' possible information: all periodic words that start with $v$ are in the language.
So the priority that is assigned to such $v$ is $0$.
If $v$ contains $b$ but not $aba$, then all periodic words that start with $v$ are either not in the language or contain $aba$.
We assign priority $1$ to such words, corresponding to the information that a periodic word starting with $v$ might be not in $L$, or it has a prefix that already contains the maximal information, i.e.~it is assigned priority $0$.
Finally, if $v$ does not contain $b$, then a periodic word that starts with $v$ either is in $L$ because it contains $b$ finitely often, or it has a prefix that is already assigned a priority $0$ or $1$.
We assign $2$ to such $v$.

More formally, to determine the priority for a word \(v \in \Sigma^+\) in the context of a class \(c\) of \(\sim\), we proceed inductively as follows (recall that $L_c$ denotes the residual of $L$ from class $c$, and $E_c$ denotes the set of all finite words that loop on $c$).
If there exists no \(x\) such that $vx \in E_c$, the word \(v\) is assigned priority \(0\) since no extension of \(v\) can ever loop on class \(c\) and hence the priority of \(v\) is irrelevant in the context of \(c\).
Thus, in the following, we only need to consider words \(v\) that actually have some extension \(x\) such that \(vx\) loops on \(c\), and we write \(v^{-1}E_c\) to denote the set of all such extensions.
Now if for all \(x \in v^{-1}E_c\) holds that \((vx)^\omega \in L_c\), then we assign \(0\) to \(v\) because regardless of the suffix we append to \(v\), the \(\omega\)-iteration is in \(L_c\).
Then, priority \(1\) is assigned to words \(v\) such that $v$ is not assigned $0$, and for all extensions \(x \in v^{-1}E_c\) holds that either \((vx)^\omega \notin L_c\), or $(vx)^\omega$ has a prefix that is assigned priority $0$.
From here on for priorities \(i > 1\), we proceed inductively until a priority has been identified for each word.

The definition we obtain from this procedure has some similarities with the definition of \emph{natural color of an infinite word} from \cite{EhlersS22}, however we define a priority for each finite word in the context of each class of ${\sim}$, whereas \cite{EhlersS22} defines one priority for each infinite word.
We will now give the formal definition of a family of sets \(P_{c,i}^\sim(L)\) indexed by a class \(c\) of \(\sim\) and a natural number \(i\).
The set \(P_{c,i}^\sim(L)\) contains all words which are assigned priority \(i\) or smaller in the priority mapping for class \(c\). The priority mapping itself, which assigns to $v$ the smallest $i$ such that $v$ is in $P_{c,i}$ is then extracted in \cref{def:preciseFWPM}.

\begin{definition}\label{def:paritydecomposition}
	Let $L$ be a regular $\omega$-language and ${\sim}$ be a right congruence that refines ${\sim}_L$. For each class $c$ of ${\sim}$, we define
	\begin{align*}\label{languagecoloring}
		P_{c,0}^{\sim}(L) & = \{v \in \Sigma^+ \mid \forall x \in v^{-1}E_c: \ (vx)^\omega \in L_c\} \\   		P_{c,i}^{\sim}(L) & = \{v \in \Sigma^+ \mid \forall x \in v^{-1}E_c: \ \big((vx)^\omega \in P^{\sim}_{c,i-1}(L)\Sigma^\omega \text{ or } ((vx)^\omega \in L_c \text{ iff } i \text{ is even})\big)\}.
	\end{align*}
\end{definition}

One easily verifies that for a fixed class \(c\), the sets \(P_{c,i}^\sim(L)\) form an inclusion chain where a larger index \(i\) leads to a larger set.
This is because the first condition in the definition of \(P_{c,i}^\sim(L)\) is trivially satisfied by all words belonging to \(P_{c,i-1}^\sim(L)\).
For a class \(c\), we can now define a priority mapping that assigns to a finite word \(v\) the least priority such that \(v \in P_{c,i}^\sim(L)\).
A formal definition is deferred to \cref{def:preciseFWPM} below, until after we have showed that the obtained inclusion chain is guaranteed to be finite whenever \(L\) is a regular \(\omega\)-language and \(\sim\) refines \(\sim_L\).

\begin{example}\label{ex:pisets}
	Let $L \subseteq \{a,b\}^\omega$ consist of all $\omega$-words with finitely many occurrences of $b$, or infinitely many occurrences of the infix $aba$, i.e.
	\[
		L = \{w \in \{a,b\}^\omega \mid \text{\(w\) contains \(b\) finitely often or \(w\) contains \(aba\) infinitely often}\}.
	\]
	As ${\sim}$ we take ${\sim}_L$, which has only one class (adding or removing finite prefixes is irrelevant for membership in $L$).
	This also implies that the universally quantified \(x\) in \cref{def:paritydecomposition} ranges over all finite words.
	So a word \(v\) belongs to \(P_{\varepsilon,0}^\sim(L)\) if for any possible extension \(x \in \Sigma^*\) holds that \((vx)^\omega \in L_\varepsilon\).
	This is satisfied precisely by those words that contain the infix \(aba\).

	Consider now a word \(v\) that contains the letter \(b\) and let \(x \in \Sigma^*\), then clearly \((vx)^\omega\) is guaranteed to contain infinitely many \(b\).
	If \((vx)^\omega\) also contains infinitely many \(aba\), then it certainly has a prefix in \(P_{\epsilon,0}^{\sim}(L)\), which means it satisfies the first condition.
	Otherwise, \((vx)^\omega \notin L_\varepsilon\) and the second condition is satisfied.
	So overall, \(P_{\epsilon,1}^{\sim}(L)\) consists of all words that have an occurrence of \(b\).

	Finally, \(P_{\epsilon,2}^{\sim}(L)\) consists of all finite, non-empty words over \(\Sigma\).
	Indeed, either \((vx)^\omega\) contains \(b\) infinitely often, in which case \((vx)^\omega\) has a prefix in \(P_{\epsilon,1}^{\sim}(L)\), or there are only finitely many occurrences of \(b\), implying \((vx)^\omega \in L_\varepsilon\).
	The languages $P_{\varepsilon,i}^{\sim}(L)$ for $i \leq 3$ are given in the second column of the table in \cref{fig:pilanguagesformal}.
\end{example}

\begin{table}[ht]
	\centering
	\begin{tabular}{c||c||c|c}
		$i$ & $P_{\varepsilon, i}^{{\sim}_L}(L)$ & $P_{\varepsilon, i}^{{\sim}_{L'}}(L')$          & $P_{a,i}^{{\sim}_{L'}}(L')$                     \\
		\hline
		$0$ & $\Sigma^*aba\Sigma^*$              & \multicolumn{2}{c}{$\Sigma^*aa\Sigma^*$}                                                          \\
		$1$ & $\Sigma^*b\Sigma^*$                & $\Sigma^*a\Sigma^*$                             & $\Sigma^*(a + b\Sigma^*d + d\Sigma^*b)\Sigma^*$ \\
		$2$ & $\Sigma^+$                         & $\Sigma^*(a + b\Sigma^*d + d\Sigma^*b)\Sigma^*$ & $\Sigma^+$                                      \\
		$3$ & $\Sigma^+$                         & $\Sigma^+$                                      & $\Sigma^+$
	\end{tabular}
	\caption{An overview over languages arising from \cref{def:paritydecomposition} for the languages $L$ and $L'$ from \cref{ex:pisets} and \cref{ex:twoclasses}, respectively.}
	\label{fig:pilanguagesformal}
\end{table}

\begin{example}\label{ex:twoclasses}
	Let $\Sigma = \{a,b,d\}$ and consider the language
	\begin{align*}
		{L'} = & \{w \in \Sigma^\omega \mid \text{there are infinitely many infixes $aa$ in $w$}\}                                         \\
		\cup   & \{w \in \Sigma^\omega \mid \text{$|w|_a$ is finite and $\big($even iff $w$ contains infinitely many $b$ and $d \big)$}\}.
	\end{align*}
	As ${\sim}$, we use ${\sim}_{L'}$, which has two classes, $[\varepsilon]$ and $[a]$, which are reached when reading words containing an even respectively odd number of $a$.
	A depiction of ${\sim}$ can be found in \cref{fig:twoclassexampleparts} alongside DFAs recognizing the sets $P_{\varepsilon,i}^{\sim}({L'})$ and $P_{a,i}^{{\sim}}(L')$ for $i \leq 3$.

	Regardless of the class \(c\) we consider, if a word \(v \in \Sigma^+\) contains an infix \(aa\), then for every possible \(x \in v^{-1}E_c\), we have \((vx)^\omega\in L'_c\) because any such \(\omega\)-iteration must contain infinitely many occurrences of \(aa\).
	Hence, the sets \(P_{\varepsilon,0}^{\sim}({L'})\) and \(P_{a,0}^{\sim}({L'})\) are equal and contain precisely those words that have an infix \(aa\).
	Similarly, we see that if \(v\) contains \(a\), then either \((vx)^\omega\) contains the infix \(aa\) infinitely often or \((vx)^\omega\) does not belong to \(L_c\).
	Thus, \(P_{\varepsilon,1}^{\sim}({L'})\) and \(P_{a,1}^{\sim}({L'})\) contain all words with an infix \(a\).

	Consider now a word \(v\) that contains both \(b\) and \(d\), but not \(a\).
	Such words appear at different indices depending on the class \(c\).
	We have \(v \in P_{a,1}^\sim(L')\) since for each \(x \in v^{-1}E_a\), the \(\omega\)-iteration \((vx)^\omega\) is not in \(L_a\) unless \((vx)^\omega\) contains \(aa\) infinitely often.
	However, \(v \notin P_{\eps,1}^\sim(L')\) since already \(v^\omega\) has no prefix in \(P_{\eps,1}^\sim(L')\) and \(v^\omega \in L_\eps\). So $v$ is in $P_{\eps,2}^\sim(L')$.
	Finally, the sets \(P_{\eps,3}^{\sim}({L'})\), \(P_{a,2}^{\sim}({L'})\) and \(P_{a,3}^{\sim}({L'})\) coincide and contain all non-empty finite words over \(\Sigma\).
	The sequences of sets for both classes are depicted in the last two columns of the table in \cref{fig:pilanguagesformal}.
\end{example}

\begin{figure}[ht]
	\begin{center}
	\includegraphics[width=\textwidth]{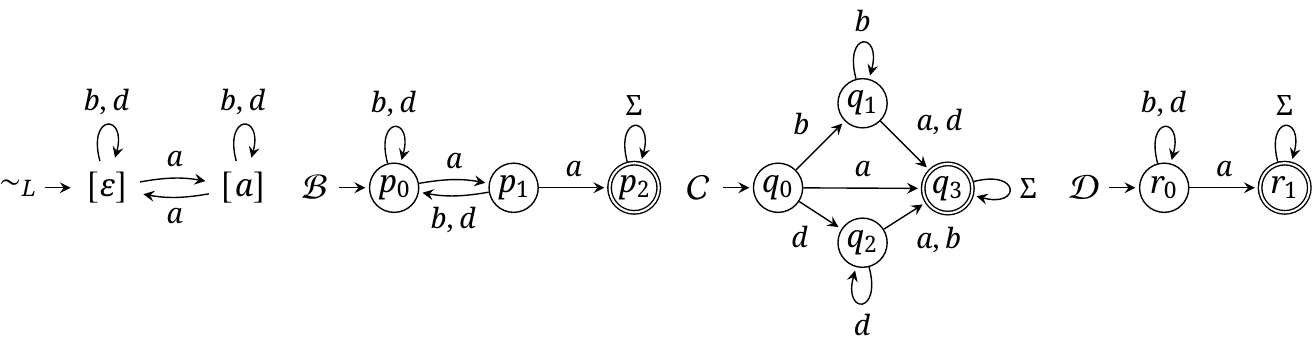}
	\end{center}
	\caption{The leading right congruence underlying the language $L$ from \cref{ex:twoclasses} is depicted on the far left.
	We have $L(\BB) = P_{0,\varepsilon}^{{\sim}_L}(L) = P_{0,a}^{{\sim}_L}(L)$, $L(\CC) = P_{2,\varepsilon}^{{\sim}_L}(L) = P_{1,a}^{{\sim}_L}(L)$ and $L(\DD) = P_{1,\varepsilon}^{{\sim}_L}(L)$.
	We omit the depiction of a DFA for $P_{3,\varepsilon}^{{\sim}_L}(L) = P_{2,a}^{{\sim}_L}(L) = \Sigma^+$.}
	\label{fig:twoclassexampleparts}
\end{figure}

Note that all sets $P_{c,i}$ arising from the given examples are regular. In \cref{ex:pisets} one can directly see that the indices $0,1,2$ naturally correspond to the priorities that a DPA would use to accept the language (emit priority $0$ whenever an infix $aba$ is detected, and priority $1$ for every occurrence of $b$, and priority $2$ otherwise). We show that this is not a coincidence by establishing a tight connection between the sets $P^{\sim}_{c,i}(L)$ and the priorities visited by a normalized DPA $\AA$ for $L$, namely,  $P^{\sim}_{c,i}(L)$ contains precisely those words $u$ that are guaranteed to take a transition of priority at most $i$ from every state in class $c$ (provided that ${\sim}_\AA$ refines ${\sim}$ as otherwise it makes no sense to speak of $c$-states).

\begin{lemma}\label{lem:normalizedDPAconnection}
	Let $\AA$ be a normalized DPA with priorities $\{0,\dotsc,k-1\}$ recognizing the language $L \subseteq \Sigma^\omega$ and ${\sim}$ be a right congruence with ${\sim}_\AA \refine {\sim} \refine {\sim}_L$. Then for each class $c$ of ${\sim}$ and each $i \ge 0$ holds $P^{\sim}_{c,i}(L) = P^{\sim}_{c,i}(\AA) := \{u \in \Sigma^+ \mid \max\{\Amin(q, u) \mid q \text{ is $c$-state}\} \leq i\}.$
\end{lemma}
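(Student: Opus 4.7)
The plan is to prove both inclusions $P^{\sim}_{c,i}(L) \subseteq P^{\sim}_{c,i}(\AA)$ and $P^{\sim}_{c,i}(\AA) \subseteq P^{\sim}_{c,i}(L)$ simultaneously by induction on $i$, with $c$ fixed; I will assume without loss of generality that $\AA$ contains only reachable states so that every $c$-state is reached by some $u \in c$. The key tool is \cref{lem:normalizedDPA}: inside the SCC of a state $q$, any minimum priority realized on some path out of $q$ can be completed to a loop on $q$ with the same minimum priority (part (a)), and minimum priorities $\geq 2$ on loops can always be undercut by exactly one (part (b)).

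For the base case $i = 0$, the direction $P^{\sim}_{c,0}(\AA) \subseteq P^{\sim}_{c,0}(L)$ is direct: if $\Amin(q, v) = 0$ for every $c$-state $q$, then for any $x$ with $vx \in E_c$, the run of $(vx)^\omega$ from any reachable $c$-state sees priority $0$ in every $v$-block, so $(vx)^\omega \in L_c$. For the converse, assume $\Amin(q, v) = i' \geq 1$ for some $c$-state $q$ and apply \cref{lem:normalizedDPA}(a) to obtain $y$ with $\delta^*(q, vy) = q$ and $\Amin(q, vy) = i'$; then $vy \in E_c$. If $i'$ is odd, $(vy)^\omega \notin L_c$ contradicts $v \in P^{\sim}_{c,0}(L)$. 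If $i'$ is even then $i' \geq 2$, so \cref{lem:normalizedDPA}(b) produces a loop $z$ on $q$ with $\Amin(q, z) = i'-1$, and $y' := yz$ yields $vy' \in E_c$ with $\Amin(q, vy') = i'-1$ odd, so again $(vy')^\omega \notin L_c$, a contradiction.

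For the inductive step ($i \geq 1$), the direction $P^{\sim}_{c,i}(\AA) \subseteq P^{\sim}_{c,i}(L)$ proceeds as follows. Take $v \in P^{\sim}_{c,i}(\AA)$ and $x$ with $vx \in E_c$. Iterating $vx$ from any $c$-state $q$ cycles through $c$-states, each of which contributes priority $\leq i$ on its $v$-block, so the minimum priority $m^*_q$ seen infinitely often is $\leq i$. All reachable $c$-states agree on the parity of $m^*_q$ since this parity determines $(vx)^\omega \in L_c$. If that common parity matches the parity of $i$, the second disjunct in the definition of $P^{\sim}_{c,i}(L)$ holds; otherwise $m^*_q \leq i-1$ for every $c$-state, and choosing $N$ as the maximum over all $c$-states of a witness power makes $(vx)^N$ visit priority $\leq i-1$ from every $c$-state, placing it in $P^{\sim}_{c,i-1}(\AA) = P^{\sim}_{c,i-1}(L)$ by the inductive hypothesis. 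For the reverse inclusion $P^{\sim}_{c,i}(L) \subseteq P^{\sim}_{c,i}(\AA)$, I argue by contradiction from $\Amin(q, v) = i' \geq i+1$ for some $c$-state $q$. By \cref{lem:normalizedDPA}(a), some $y$ gives $vy \in E_c$ and $\Amin(q, vy) = i'$. If $i'$ has parity opposite to $i$, the second disjunct fails for $y$; otherwise $i' \geq i+2$, and \cref{lem:normalizedDPA}(b) produces a loop $z$ on $q$ with $\Amin(q, z) = i'-1$, so $y' := yz$ yields $vy' \in E_c$ with $\Amin(q, vy') = i'-1$, again of opposite parity to $i$, so the second disjunct fails for $y'$. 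In both subcases, every transition of the considered loop has priority $\geq i'-1 \geq i$, so every prefix of $(vy)^\omega$ (resp. $(vy')^\omega$) read from $q$ also has priority $\geq i$, which by the inductive hypothesis rules out the first disjunct too, contradicting $v \in P^{\sim}_{c,i}(L)$.

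The main obstacle is the subcase of the reverse inclusion where $\Amin(q, v)$ has the same parity as $i$: part (a) of \cref{lem:normalizedDPA} alone cannot exhibit a loop of opposite parity, and it is essential to invoke part (b) to strip off one unit of priority while keeping $v$ on the loop. A secondary subtlety is the existential-to-universal step in the $\supseteq$ direction, where the witness prefix must serve all finitely many $c$-states simultaneously; this dissolves by taking the maximum exponent $N$ across them.
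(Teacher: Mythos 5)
Your proof is correct and follows essentially the same route as the paper's: induction on $i$, with \cref{lem:normalizedDPA}(a) and (b) supplying the loops of minimal priority $i'$ and $i'-1$ for the inclusion $P^{\sim}_{c,i}(L) \subseteq P^{\sim}_{c,i}(\AA)$, and a loop-through-a-$c$-state argument for the converse. The only differences are presentational — you run an explicit base case and split on the parity of $i'$ up front, whereas the paper starts the induction at $i=-1$ and derives the parity equality from the failure of the first disjunct, and your liminf/common-parity argument replaces the paper's pigeonhole on $(ux)^n$ for $n > |\AA|$; these are equivalent reorganizations of the same idea.
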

\begin{proof}
	For every class $c$ of ${\sim}$ we show $P^{\sim}_{c,i}(L) = P^{\sim}_{c,i}(\AA)$ by induction on $i$. Note that by setting $P_{c,-1}^{\sim}(L)  = \emptyset$, we obtain a uniform definition of all $P_{c,i}^{\sim}(L)$ for $i \ge 0$. We can start the induction at $-1$. Since $\AA$ only uses priorities $\ge 0$, we obtain $ P^{\sim}_{c,-1}(\AA) = \emptyset$. So let $i \ge 0$ and assume by induction that $P_{c,i-1}^{\sim}(L) = P_{c,i-1}^{\sim}(\AA)$.

	We first show $P_{c,i}^{\sim}(L) \subseteq P_{c,i}^{\sim}(\AA)$ by establishing that for all $u \in P_{c,i}^{\sim}(L) \setminus P_{c,i-1}^{\sim}(L)$ holds $u \in P_{c,i}^{\sim}(\AA)$. Let $q$ be a $c$-state, and let $i' := \Amin(q,u)$. We have to show that $i' \le i$, so assume towards a contradiction that $i' > i$. Let $x$ be such that $\AA:q \xrightarrow{ux} q$ with minimal priority $i'$ on this loop (according to \cref{lem:normalizedDPA}).
	Then no prefix of $(ux)^\omega$ is in $P_{c,i}^{\sim}(\AA)$ because the minimal priority seen from $q$ on any prefix of  $(ux)^\omega$ is a least $i' > i$. Since $P_{c,i}^{\sim}(\AA) \supseteq P_{c,i-1}^{\sim}(\AA)$, also no prefix of $(ux)^\omega$ is in $P_{c,i-1}^{\sim}(\AA)$. By induction, it follows that no prefix of $(ux)^\omega$ is in $P_{c,i-1}^{\sim}(L)$, and thus $(ux)^\omega \in L_c$ if and only if $i$ is even. Since $(ux)^\omega$ is accepted from the $c$-state $q$ if and only if $i'$ is even, we obtain that $i$ is even if and only if $i'$ is even, and hence $i'-1 > i$ because $i' > i$.

	Let $y \in \Sigma^*$ be such that $\AA:q \xrightarrow{y} q$ with minimal priority $i'-1$ (according to \cref{lem:normalizedDPA}). Then $(uxy)^\omega$ is accepted from $q$ if and only if $i'-1$ is even if and only if $i$ is odd. Since $u$ is in $P_{c,i}^{\sim}(L)$, we obtain that $(uxy)^\omega$ has a prefix in $P_{c,i-1}^{\sim}(L)$. This implies by induction that $(uxy)^\omega$ has a prefix in $P_{c,i-1}^{\sim}(\AA)$, contradicting the fact that the minimal priority that is seen from $q$ on $(uxy)^\omega$ is $i'-1 > i$.

	To establish the other inclusion $P_{c,i}^{\sim}(\AA) \subseteq P_{c,i}^{\sim}(L)$, we show for all $u \in P_{c,i}^{\sim}(\AA) \setminus P_{c,i-1}^{\sim}(\AA)$ that $u \in P_{c,i}^{\sim}(L)$. So let $x \in \Sigma^*$ with $ux \in E_c$, and let $n > |\AA|$. Since $u \in P_{c,i}^{\sim}(\AA)$, also $(ux)^n \in P_{c,i}^{\sim}(\AA)$.

	If $(ux)^n \in P_{c,i-1}^{\sim}(\AA)$, then by induction $(ux)^\omega$ has a prefix in $P_{c,i-1}^{\sim}(L)$.

	If $(ux)^n \in P_{c,i}^{\sim}(\AA) \setminus P_{c,i-1}^{\sim}(\AA)$, then let $q$ be a $c$-state such that
	$\Amin(q, (ux)^n) = i$. Since $n > |\AA|$ and $ux \in E_c$, there is a $c$-state $q'$ and $n_1,n_2$ with $n_1+n_2 \le n$ such that $\AA:q \xrightarrow{(ux)^{n_1}} q'  \xrightarrow{(ux)^{n_2}} q'$. Since $u \in P_{c,i}^{\sim}(\AA)$, the minimal priority on the loop $q'  \xrightarrow{(ux)^{n_2}} q'$ is at most $i$. And by choice of $q$, the minimal priority on $\AA:q \xrightarrow{(ux)^{n_1}} q'  \xrightarrow{(ux)^{n_2}} q'$ is $i$. Hence, the minimal priority on the loop $q'  \xrightarrow{(ux)^{n_2}} q'$ is $i$, and thus $(ux)^\omega \in L_c$ if and only if $i$ is even.

	We conclude that $u \in P_{c,i}^{\sim}(L)$.
\end{proof}

We already noted above that the sets \(P_{c,i}^\sim(L)\) form an inclusion chain.
As consequence of \cref{lem:normalizedDPAconnection}, we obtain that this inclusion chain is of length \(k\) whenever \(L\) is a regular \(\omega\)-language of parity complexity \(k\) (the minimal number of priorities a DPA for \(L\) needs is \(k\)).
So $P_{c,0}^{\sim}(L) \subseteq \cdots \subseteq P_{c,k-1}^{\sim}(L) = \Sigma^+$ and each set along the chain is regular.
This implies that the priority mappings we obtain from the following definition are totally defined, weak, and can be computed by Mealy machines.

\begin{definition} \label{def:preciseFWPM}
	With the terminology from \cref{def:paritydecomposition} and $L$ of parity complexity $k$, we represent the sets $P_{c,i}^{\sim}(L)$ for each class by a priority mapping $\pi^{\sim}_c: \Sigma^+ \rightarrow \{0,\ldots,k-1\}$ defined by $\pi^{\sim}_c(u) = i$ for the minimal $i$ such that  $u \in P_{c,i}^{\sim}(L)$. We write $\family{\pi^{\sim}} = \familyfor{\pi^{\sim}}{{\sim}}$ for the family of these priority mappings, and refer to it as the \emph{precise FWPM for $L$ and ${\sim}$}.
\end{definition}
With $L$ and ${\sim}$ as in \cref{ex:pisets}, we obtain that $\pi_\varepsilon^{\sim}(u) = 0$ if $u$ contains $aba$, $\pi_\varepsilon^{\sim}(u) = 1$ if $u$ contains $b$ but not $aba$, and $\pi_\varepsilon^{\sim}(u) = 2$ otherwise. The left-hand side of \cref{fig:FWPM-Mealy} shows a Mealy machine that computes $\pi_\varepsilon^{\sim}(u)$.
We give a bound on the size of Mealy machines for computing the precise FWPM, which can be derived from \cref{lem:normalizedDPAconnection}.

\begin{figure}
\includegraphics[width=\textwidth]{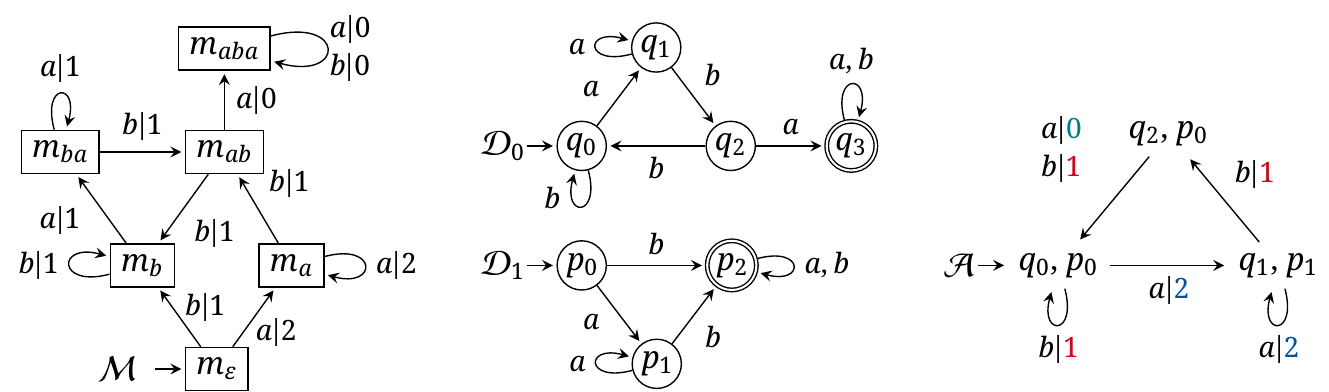}
	\caption{On the left-hand side a Mealy machine $\MM$ that computes $\pi_\varepsilon^{\sim}$ for $L$ and ${\sim}$ from \cref{ex:pisets} is shown. The DFAs $\DD_i$ for $i \in \{0,1\}$ in the middle accept precisely the words that are assigned priority $\le i$ by $\MM$. On the right-hand side the precise DPA that is obtained from the $\DD_i$ is depicted.}\label{fig:FWPM-Mealy}
\end{figure}

\begin{theorem}\label{mealysize}
	Let $\AA$ be a normalized DPA with $k$ priorities that recognizes $L$, and ${\sim}$ be a right congruence with ${\sim}_\AA \refine {\sim} \refine {\sim}_L$. Then for each class $c$ of ${\sim}$, the mapping $\pi_c^{\sim}$ can be computed by a Mealy machine $\MM_c$ with at most $m(dk)^d$ states, where $m$ is the number of classes of ${\sim}$, and each class of ${\sim}$-equivalent states has at most $d$ many states in $\AA$.
\end{theorem}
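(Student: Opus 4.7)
The plan is to leverage \cref{lem:normalizedDPAconnection}, which equates $\pi_c^{\sim}(u)$ with $\max\{\Amin(q,u) \mid q \text{ is a $c$-state}\}$. Hence to compute $\pi_c^{\sim}$ by a Mealy machine it suffices to track, for each $c$-state $q$, the state $\delta^*(q,u)$ reached on the input $u$ read so far together with the minimum priority $\Amin(q,u)$ seen on the way; the output on each new input symbol is then the maximum of the updated priority components over all $c$-states.

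Let $Q_c$ denote the set of $c$-states. Concretely, I would take as the state set of $\MM_c$ all functions $f \colon Q_c \to Q \times \{0,\dotsc,k-1\}$, with initial state $f_0(q) := (q, k-1)$, transition function that sends $f$ on input $a$ to $f'$ with $f'(q) := (\delta(f(q)_1, a),\, \min(f(q)_2, \kappa(f(q)_1, a)))$, and output $\max_{q \in Q_c} f'(q)_2$. A straightforward induction on $|u|$ confirms that after reading $u \in \Sigma^+$ the Mealy machine is in state $f_u$ with $f_u(q) = (\delta^*(q,u), \Amin(q,u))$, so the output on the last input symbol coincides with $\pi_c^{\sim}(u)$; the placeholder value $k-1$ in the initial priority component causes no trouble, since $\kappa$ takes values in $\{0,\dotsc,k-1\}$ and the first $\min$ therefore just returns $\kappa(q,a)$ unchanged.

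For the size bound, the key observation is that because ${\sim}_\AA \refine {\sim}$, whenever $q$ is a $c$-state, $\delta^*(q,u)$ is a $[cu]_{\sim}$-state. Consequently, in every reachable state $f$ of $\MM_c$, the first components $\{f(q)_1 \mid q \in Q_c\}$ all lie in a common ${\sim}$-class $c'$. Reachable states can thus be enumerated by first choosing this target class $c'$ (at most $m$ options) and then a function from $Q_c$ into $Q_{c'} \times \{0,\dotsc,k-1\}$ (at most $(dk)^d$ possibilities, since $|Q_c|, |Q_{c'}| \le d$), which yields the overall bound of $m(dk)^d$; the initial state $f_0$ fits this scheme with $c' = c$ and so needs no separate accounting. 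The only real subtlety is this invariant confining the first-component images of reachable states to a single ${\sim}$-class; once that is in place the argument reduces to a direct count together with a routine inductive verification that the constructed machine implements $\pi_c^{\sim}$.
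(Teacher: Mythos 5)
Your proposal is correct and follows essentially the same route as the paper's proof: both track, for each $c$-state $q$, the pair $(\delta^*(q,u), \Amin(q,u))$, output the maximum of the priority components, invoke \cref{lem:normalizedDPAconnection} for correctness, and obtain the $m(dk)^d$ bound by noting that all first components of a reachable state lie in a single ${\sim}$-class. The only cosmetic difference is that the paper builds the constraint into the state set (pairs of functions $(\tau,\mu)$ with ${\sim}$-equivalent images) while you impose it as a reachability invariant.
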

\begin{proof}
	According to \cref{lem:normalizedDPAconnection}, a Mealy machine only needs to keep track of the minimal priority that is visited from each $c$-state of $\AA$ (and it then outputs the maximum of those numbers).

	Let $\AA = (Q,\Sigma,q_0,\delta,\kappa)$ and $Q_c$ be the set of $c$-states in $\AA$. The states of $\MM_c$ are pairs $(\tau,\mu)$ of functions $\tau:Q_c \rightarrow Q$ and $\mu:Q_c \rightarrow \{0, \ldots,k-1\}$, where $\tau(q_1) \mathrel{\sim} \tau(q_2)$ for all $q_1,q_2 \in Q_c$. There are $m(dk)^d$ such pairs of functions (not all of which need to be reachable in $\MM$).

	The initial state is $(\tau_0,\mu_0)$ with $\tau_0(q) = q$ and $\mu_0(q) = k-1$ for all $q \in Q_c$. For $(\tau,\mu)$ and an input $a \in \Sigma$, let the $a$-successor of $(\tau,\mu)$ be $(\tau',\mu')$ with $\tau'(q) = \delta(\tau(q),a)$ and $\mu'(q) = \min(\mu(q),\kappa(\tau(q),a))$. The output of $\MM_c$ on this transition is $\max_{q \in Q_c}(\mu'(q))$.

	Then $\MM_c$ outputs on $u$ the least $i$ such that $u \in P_{c,i}(\AA)$, which corresponds to $\pi_c^{\sim}(u)$ according to \cref{lem:normalizedDPAconnection} and \cref{def:preciseFWPM}.
\end{proof}

The following definition expresses what it means for an FWPM to correctly capture the words that loop on the classes of the underlying right congruence \(\sim\).

\begin{definition} \label{def:capture-periodic}
	We say (with the terminology from \cref{def:paritydecomposition}) that an FWPM $(\gamma_c)_{c \in [{\sim}]}$ \emph{captures the periodic part of $(L, {\sim})$}, if for all classes $c \in [{\sim}]$ and  $v \in \Stay c$ holds that $v^\omega \in L_c$ if and only if $v^\omega \in L(\gamma_c)$.
\end{definition}

\begin{lemma}\label{lem:capperiodicpart}
	Let $L$ be a regular $\omega$-language and ${\sim}$ refine ${\sim}_L$.
	The family $\family{\pi^{\sim}} = (\pi_c^{\sim})_{c \in [{\sim}]}$ captures the periodic part of $(L,{\sim})$.
\end{lemma}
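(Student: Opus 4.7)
The plan is to exploit weakness of $\pi_c^{\sim}$ together with the inductive structure of the sets $P_{c,i}^{\sim}(L)$. Fix $c \in [{\sim}]$ and $v \in E_c$. Since $\pi_c^{\sim}$ is weak, the sequence of priorities $(\pi_c^{\sim}(v^n))_{n \geq 1}$ is non-increasing in $n$, and therefore stabilizes at some value $i^* \in \{0,\ldots,k-1\}$. By weakness again, this $i^*$ is also the value $\pi_c^{\sim}(v^\omega)$ defined as the least priority appearing infinitely often along prefixes of $v^\omega$. So the statement $v^\omega \in L(\pi_c^{\sim})$ is equivalent to $i^*$ being even.

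The main task is then to show $v^\omega \in L_c$ iff $i^*$ is even. Pick $n$ large enough that $\pi_c^{\sim}(v^n) = i^*$; by definition this means $v^n \in P_{c,i^*}^{\sim}(L)$. Since $v \in E_c$, we have $v \in (v^n)^{-1}E_c$, so we may instantiate the defining clause of $P_{c,i^*}^{\sim}(L)$ with $x := v$. For $i^* \ge 1$, this yields that either
\[
  v^\omega \;=\; (v^n v)^\omega \;\in\; P_{c,i^*-1}^{\sim}(L)\,\Sigma^\omega, \qquad\text{or}\qquad \bigl(v^\omega \in L_c \iff i^*\text{ is even}\bigr).
\]
The first disjunct says that some prefix $u$ of $v^\omega$ satisfies $\pi_c^{\sim}(u) \le i^*-1$. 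By weakness, all sufficiently long prefixes would then have priority $\le i^*-1$, contradicting the stabilization of the priorities at $i^*$. Hence the second disjunct holds, which is exactly what we want.

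For the base case $i^* = 0$, the reasoning is even simpler: $v^n \in P_{c,0}^{\sim}(L)$ means $(v^n x)^\omega \in L_c$ for all $x \in (v^n)^{-1}E_c$, and taking $x = v$ gives $v^\omega \in L_c$, matching the fact that $0$ is even. The only subtlety is making sure the interplay between ``minimum stabilizing along prefixes'' and ``least $i$ with $u \in P_{c,i}^{\sim}(L)$'' is handled cleanly; this is really just weakness of $\pi_c^{\sim}$ (which was observed right before \cref{def:preciseFWPM}), so I do not anticipate any genuine obstacle.
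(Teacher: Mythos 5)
Your proposal is correct and follows essentially the same route as the paper's proof: choose $n$ so that the (non-increasing, by weakness) priorities along prefixes of $v^\omega$ have stabilized at $i^* = \pi_c^{\sim}(v^\omega)$, instantiate the defining clause of $P_{c,i^*}^{\sim}(L)$ at $v^n$ with an extension yielding $v^\omega$ again, and rule out the first disjunct because a prefix of priority $\le i^*-1$ would contradict stabilization. The only cosmetic difference is that you instantiate with $x = v$ where the paper effectively uses $x = \varepsilon$; both give the same $\omega$-word $v^\omega$.
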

\begin{proof}
	Let $c$ be a class of ${\sim}$, $v \in \Stay c$ and set $i := \pi_c^{\sim}(v^\omega)$.
	We show that $i$ is even if and only if $v^\omega \in L_c$.
	Pick an $n$ such that $\pi_c^{\sim}(v^nx) = i$ for all $x \sqsubseteq v^\omega$.
	This is possible since $\pi_c^{\sim}$ is weak.
	By definition, $\pi_c^{\sim}(v^n) = i$ implies that either $(v^n)^\omega$ has a prefix in $P_{c,i-1}^{\sim}(L)$ or ($(v^n)^\omega \in L_c$ if and only if $i$ is even).
	In the former case, there would exist some $x \sqsubseteq v^\omega$ with $\pi_c^{\sim}(v^nx) < i$, contradicting our choice of \(n\).
	Hence, the second case must be true, which is what we set out to show.
\end{proof}

The following technical lemma is only used to establish the result in \cref{lem:leastFWPM} that gives justification for the name ``precise FWPM''.
As it is not required for the other results of this paper, readers may safely skip to after \cref{lem:leastFWPM}.

\begin{lemma}\label{lem:technical}
	Let $\family{\pi^{\sim}}$ be as in \cref{def:preciseFWPM}, \(c\) be a class of \(\sim\) and $v\in \Sigma^+$.
	If $\pi_c^{\sim}(v) = i \geq 1$, then there exists an $x \in \Sigma^*$ such that $vx \in \Stay c$ and $\pi_c^{\sim}((vx)^\omega) = i$.
	Moreover, if $\pi_c^\sim(v) = i \geq 2$, then there exists $y \in \Sigma^*$ with $vy \in \Stay c$ and $\pi_c^{\sim}((vy)^\omega) = i-1$.
\end{lemma}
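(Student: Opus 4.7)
The plan is to work with a normalized DPA $\AA$ for $L$ whose transition congruence refines $\sim$, combining the structural properties of normalized DPAs from \cref{lem:normalizedDPA}, the DPA-based characterization of $P^{\sim}_{c,i}(L)$ from \cref{lem:normalizedDPAconnection}, and the periodic-part-capturing property from \cref{lem:capperiodicpart}. Such an $\AA$ exists: take any DPA for $L$, form its product with $\TT_\sim$, and normalize the result; since normalization preserves the transition structure, the resulting $\AA$ satisfies ${\sim}_\AA \refine {\sim} \refine {\sim}_L$. By \cref{lem:normalizedDPAconnection}, $\pi_c^{\sim}(v) = i$ means $\max_q \Amin(q, v) = i$ where $q$ ranges over the $c$-states of $\AA$, so I fix a $c$-state $q^*$ with $\Amin(q^*, v) = i$.

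For the first claim, I apply \cref{lem:normalizedDPA}(a) to $q^*$ and $v$. Since $i \ge 1$, the SCC-changing alternative is ruled out, yielding $x \in \Sigma^*$ with $\delta^*(q^*, vx) = q^*$ and $\Amin(q^*, vx) = i$. Because ${\sim}_\AA \refine {\sim}$, this loop at $q^*$ entails $vx \in \Stay{c}$. Iterating the loop, $\Amin(q^*, (vx)^n) = i$ for all $n \ge 1$, so $\pi_c^{\sim}((vx)^n) \ge i$ by \cref{lem:normalizedDPAconnection}; weakness of $\pi_c^{\sim}$ gives the matching upper bound $\pi_c^{\sim}((vx)^n) \le \pi_c^{\sim}(v) = i$. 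Hence $\pi_c^{\sim}((vx)^\omega) = i$.

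For the second claim, I additionally invoke \cref{lem:normalizedDPA}(b) (applicable since $i \ge 2$) to obtain $z$ with $\delta^*(q^*, z) = q^*$ and $\Amin(q^*, z) = i-1$. Setting $y := xz$, we have $\delta^*(q^*, vy) = q^*$, so $vy \in \Stay{c}$, and $\Amin(q^*, (vy)^n) = \min(i, i-1) = i-1$ for all $n$. This forces $\pi_c^{\sim}((vy)^\omega) \ge i-1$, while weakness still gives $\pi_c^{\sim}((vy)^\omega) \le i$, placing the value in $\{i-1, i\}$.

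The main obstacle is ruling out $\pi_c^{\sim}((vy)^\omega) = i$: because $\pi_c^{\sim}$ is a maximum over \emph{all} $c$-states, a direct upper-bound argument would require controlling the runs of $\AA$ started from $c$-states other than $q^*$, which we have not set up. I would sidestep this with a parity argument. The run of $\AA$ from $q^*$ on $(vy)^\omega$ loops on $q^*$ and visits minimum priority $i-1$ on every iteration, so the least priority seen infinitely often is $i-1$; hence this run is accepting iff $i-1$ is even, which (using a word $u \in c$ reaching $q^*$) gives $(vy)^\omega \in L_c$ iff $i$ is odd. On the other hand, $vy \in \Stay{c}$ together with \cref{lem:capperiodicpart} gives $(vy)^\omega \in L_c$ iff $\pi_c^{\sim}((vy)^\omega)$ is even. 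Thus $\pi_c^{\sim}((vy)^\omega)$ has parity opposite to $i$, and the only element of $\{i-1, i\}$ with this property is $i-1$.
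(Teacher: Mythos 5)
Your proof is correct, but it takes a genuinely different route from the paper's. The paper argues entirely at the level of the definition of the sets $P^{\sim}_{c,i}(L)$: for the first claim it unfolds the negation of $v \in P^{\sim}_{c,i-1}(L)$ to obtain directly a witness $x$ with $vx \in \Stay{c}$, with the right membership parity, and with no prefix of $(vx)^\omega$ in $P^{\sim}_{c,i-2}(L)$, and then uses \cref{lem:capperiodicpart} only to exclude the value $i-1$; for the second claim it argues by contradiction that if every $y$ gave a value in $\{i,i-2\}$ or below $i-2$, then $v$ would already satisfy the defining condition of $P^{\sim}_{c,i-2}(L)$, contradicting $\pi_c^{\sim}(v)=i$. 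You instead route everything through a normalized DPA $\AA$ with ${\sim}_\AA \refine {\sim} \refine {\sim}_L$, translating $\pi_c^{\sim}$ into $\max_q \Amin(q,\cdot)$ via \cref{lem:normalizedDPAconnection} and manufacturing the required loops with \cref{lem:normalizedDPA}; your existence argument for such an $\AA$ (product with $\TT_{\sim}$, then normalize) is the one needed and is sound, and your parity argument correctly pins the value to $i-1$ without having to control runs from the other $c$-states. The trade-off: the paper's proof is self-contained with respect to the combinatorial definition of the precise FWPM and needs only \cref{lem:capperiodicpart}, which fits its remark that the lemma is a purely technical stepping stone towards \cref{lem:leastFWPM}; your proof leans on heavier, already-established automaton machinery but yields very concrete witnesses (explicit loops on a fixed state $q^*$) and arguably a more transparent reason why the values $i$ and $i-1$ are both attained.
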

\begin{proof}
	We show the first part of the claim.
	Let \(c\) be a class of \(\sim\), \(v \in \Sigma^+\) and \(\pi^\sim_c(v) = i \geq 1\).
	Since $v \notin P^{\sim}_{c,i-1}(L)$, there exists an $x \in \Sigma^*$ such that $vx \in \Stay c$, ($(vx)^\omega \in L_c$ if and only if $i-1$ is odd), and $\prf((vx)^\omega) \cap P^{\sim}_{c,i-2}(L) = \emptyset$ (this corresponds to the negation of the condition for membership in $P^{\sim}_{c,i-1}(L)$).
	Further, $(vx)^\omega$ also has no prefix in $P^{\sim}_{c,i-1}(L)$.
	Otherwise, $\pi^{\sim}_c((vx)^\omega) = i-1$, contradicting the fact that by \cref{lem:capperiodicpart}, $\family{\pi^{\sim}}$ captures the periodic part of $L$.
	Thus, it must be that $\pi_c^{\sim}((vx)^\omega) = i$.\\
	For the second part, assume towards a contradiction that $i \geq 2$ and $\pi_c^{\sim}((vy)^\omega) \neq i-1$ for all $y \in \Sigma^*$ with $vy \in \Stay c$.
	Then for each such $y$, either $\pi_c^{\sim}((vy)^\omega) \in \{i, i-2\}$ or $\pi_c^{\sim}((vy)^\omega) < i-2$.
	The former implies $(vy)^\omega \in L_c$ if and only if $i-2$ is even, while the latter means that $(vy)^\omega$ has some prefix in $P^{\sim}_{c,i-3}(L)$.
	In either case, $v \in P^{\sim}_{c,i-2}(L)$, contradicting $\pi_c^{\sim}(v) = i$.
\end{proof}

We now give the result that can be seen as justification for the name ``precise FWPM''.
It shows that in a certain sense, each mapping \(\pi_c^\sim\) of the precise FWPM provides the most accurate information with regard to words that loop on \(c\).
For this, we partially order FWPMs by point-wise comparison, that is, for two FWPMs $\family\gamma = \familyfor{\gamma}{{\sim}}$ and  $\family{\gamma}' = \familyfor{\gamma'}{{\sim}}$ we let $\family\gamma \leq \family{\gamma}'$ if for all classes $c$ of ${\sim}$ and $u \in \Sigma^+$ holds $\gamma_c(u) \leq \gamma_c'(u)$.
The following lemma now establishes that with regard to this order, no smaller FWPM than the precise FPWM correctly captures the periodic part of \((L, \sim)\).

\begin{lemma}\label{lem:leastFWPM}
	For a regular \(\omega\)-language \(L\) and a right congruence \(\sim\) that refines \(\sim_L\), the family \(\family{\pi^{\sim}} = (\pi_c^{\sim})_{c \in [{\sim}]}\) is the unique least FWPM that captures the periodic part of \((L, \sim)\).
\end{lemma}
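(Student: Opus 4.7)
The plan is to first observe that by \cref{lem:capperiodicpart}, $\family{\pi^\sim}$ itself captures the periodic part of $(L,\sim)$, so the content of the statement reduces to showing that $\family{\pi^\sim}$ is a lower bound: for every FWPM $\family\gamma = (\gamma_c)_{c \in [\sim]}$ that captures the periodic part of $(L,\sim)$, and for every class $c$ and every $u \in \Sigma^+$, we have $\pi_c^\sim(u) \leq \gamma_c(u)$. Uniqueness of a least element with respect to the pointwise order is automatic by antisymmetry, so that is all there is to do.

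The key claim I would prove by induction on $j \geq 0$ is: if $\gamma_c(u) \leq j$, then $u \in P^\sim_{c,j}(L)$ (equivalently $\pi_c^\sim(u) \leq j$). For the base case $j = 0$, assume $\gamma_c(u) = 0$ and pick any $x$ with $ux \in \Stay c$ (if no such $x$ exists, $u \in P^\sim_{c,0}(L)$ holds vacuously). Since $\gamma_c$ is weak, $\gamma_c((ux)^n) = 0$ for every $n \geq 1$, hence $\gamma_c((ux)^\omega) = 0$. Because $\family\gamma$ captures the periodic part of $(L,\sim)$ and $ux \in \Stay c$, this forces $(ux)^\omega \in L_c$. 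Thus $u$ satisfies the defining condition of $P^\sim_{c,0}(L)$.

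For the inductive step, assume the claim for $j-1$, suppose $\gamma_c(u) \leq j$, and fix an arbitrary $x$ with $ux \in \Stay c$. By weakness $\gamma_c((ux)^n) \leq j$ for all $n \geq 1$, and there are two cases. Either $\gamma_c((ux)^n) \leq j-1$ for some $n$, in which case the induction hypothesis gives $(ux)^n \in P^\sim_{c,j-1}(L)$, so $(ux)^\omega$ has a prefix in $P^\sim_{c,j-1}(L)$; or $\gamma_c((ux)^n) = j$ for all $n \geq 1$, so $\gamma_c((ux)^\omega) = j$, and since $\family\gamma$ captures the periodic part, $(ux)^\omega \in L_c$ iff $j$ is even. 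In both branches the disjunction in the definition of $P^\sim_{c,j}(L)$ is met, yielding $u \in P^\sim_{c,j}(L)$ as required.

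There is no genuine obstacle here; the proof is essentially a structural unfolding of \cref{def:paritydecomposition}. The only subtlety worth flagging is the role of weakness: weakness of $\gamma_c$ is what lets us force $\gamma_c((ux)^n)$ to stabilize (either dropping strictly before the limit, which activates the induction hypothesis, or remaining constant, which activates the capturing hypothesis together with a parity argument). Without weakness, the two cases would not exhaust the possibilities, so the lemma genuinely relies on the FWPMs being \emph{weak} priority mappings.
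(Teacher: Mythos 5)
Your proof is correct, and it takes a genuinely different route from the paper's. Both arguments reduce the lemma to the lower-bound claim and run an induction on the priority $j$, with essentially the same base case. The difference is in the inductive step: the paper argues by contradiction, assuming $\pi_c^{\sim}(v) = j' > j$ and invoking \cref{lem:technical} twice to produce loops $vx \in \Stay c$ and $vxy \in \Stay c$ on which the precise FWPM realizes the priorities $j'$ and $j'-1$; the contradiction then comes from playing the capturing property of $\family\gamma$ against the capturing property of $\family{\pi^{\sim}}$ (via \cref{lem:capperiodicpart}). You instead argue directly: for each $x$ with $ux \in \Stay c$, weakness of $\gamma_c$ forces the values $\gamma_c((ux)^n)$ either to drop below $j$ at some finite power (activating the induction hypothesis and producing a prefix of $(ux)^\omega$ in $P^{\sim}_{c,j-1}(L)$) or to stabilize at $j$ (activating the capturing hypothesis on $\family\gamma$ and settling the parity), which is exactly the disjunction in \cref{def:paritydecomposition}. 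Your version only uses that $\family\gamma$ captures the periodic part, never that $\family{\pi^{\sim}}$ does, and it does not need \cref{lem:technical} at all --- the paper explicitly notes that this technical lemma exists solely to support the present result, so your argument would allow it to be dropped. What the paper's detour buys is the (independently mildly interesting, but otherwise unused) fact that the precise FWPM realizes each of its priorities on an actual loop of the class.
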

\begin{proof}
	Let $\overline\gamma$ be an FWPM which captures the periodic part of $(L,{\sim})$. We show $\overline{\pi^{\sim}} \leq \overline\gamma$.
	Let $v \in \Sigma^+$ be an arbitrary word with $\gamma_c(v) = i$. We show by induction on $i$ that $\pi^{\sim}_c(v) \leq i$.
	For the base, let $i = 0$. Then $\gamma_c((vx)^\omega) = 0$ for all $x \in v^{-1}\Stay c$ since $\gamma_c$ is weak.
	This means for all $x \in \Sigma^*$, if $vx \in \Stay c$, then $(vx)^\omega \in L_c$ since $\overline\gamma$ captures the periodic part of $(L,{\sim})$.
	Thus, $v \in P_{c,0}^{\sim}$ and therefore $\pi^{\sim}_c(v) = 0$.

	For an $i > 0$, we assume towards a contradiction that $\pi_c^{\sim}(v) = j > i$, implying $j \geq 2$.
	By \cref{lem:technical}, there exists some $x \in \Sigma^*$ such that $vx \in \Stay c$ and $\pi_c^{\sim}((vx)^\omega) = j$.
	Further, it must be that $\gamma_c((vx)^\omega) = i$, as otherwise if $\gamma_c((vx)^\omega) < i$, it would follow by induction that $\pi^{\sim}_c((vx)^\omega) < i$.
	This means $j$ is even if and only if $i$ is even and thus $j - 1 > i$.\\
	Using \cref{lem:technical} once more, we pick some $y \in \Sigma^*$ with $vxy \in \Stay c$ and $\pi_c^{\sim}((vxy)^\omega) = j-1$.
	Again, by induction $\gamma_c((vxy)^\omega) = i$ and hence $(vxy)^\omega \in L(\gamma_c)$ if and only if $(vxy)^\omega \notin L(\pi^{\sim}_c)$.
	This contradicts our assumption that both $\overline\gamma$ and $\family{\pi^{\sim}}$ capture the periodic part of $(L, {\sim})$.
	Hence, $\pi_c^{\sim}(v) \leq i$, concluding the proof.
\end{proof}

The precise DPA for $L$ and ${\sim}$ that we introduce below is derived from the precise FWPM for $L$ and ${\sim}$. In general, for the construction of a DPA from an FWPM that captures the periodic part of $(L,{\sim})$, it is important that the individual priority mappings are compatible with each other.
Specifically, adding a prefix to a word should only lead to a smaller (i.e.~more significant) priority. This is expressed formally in the following definition.
\begin{definition}\label{def:monotonic}
	We call an FWPM $\family\gamma = \familyfor{\gamma}{{\sim}}$ \emph{monotonic} if $\gamma_u(vx) \leq \gamma_{uv}(x)$ for all $u,v \in \Sigma^*, x \in \Sigma^+$.
\end{definition}

% basic properties of coloring
\begin{lemma}\label{lem:precisecoloringmonotonic}
	Let $L$ be a regular $\omega$-language, and ${\sim}$ refine ${\sim}_L$. The precise FWPM $\family{\pi^{\sim}}$ of $(L,{\sim})$ is monotonic.
\end{lemma}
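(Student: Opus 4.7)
The plan is to fix $u,v \in \Sigma^*$ and $x \in \Sigma^+$, set $c := [u]_{\sim}$ and $c' := [uv]_{\sim}$, and note that by the right-congruence property $c' = c \cdot v$. The goal is $\pi^{\sim}_c(vx) \leq \pi^{\sim}_{c'}(x)$, which I would obtain by proving, by induction on $j \geq 0$, the implication ``$x \in P^{\sim}_{c',j}(L) \Rightarrow vx \in P^{\sim}_{c,j}(L)$'' uniformly in $u,v,x$, and then applying it at $j = \pi^{\sim}_{c'}(x)$.

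The backbone of the argument is a ``shift'' between the loop conditions at $c$ and at $c'$. Given any $y$ with $vxy \in E_c$, the equation $c \cdot vxy = c$ combined with $c \cdot v = c'$ forces $c' \cdot xyv = c'$, so $xyv \in E_{c'}$; equivalently, setting $z := yv$, one gets $z \in x^{-1}E_{c'}$. Moreover, because ${\sim}$ refines ${\sim}_L$ we have $L_{c'} = v^{-1}L_c$, and together with the rotation identity $(vxy)^\omega = v(xyv)^\omega$ this yields $(vxy)^\omega \in L_c \Leftrightarrow (xyv)^\omega \in L_{c'}$.

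With these observations in place, the base case $j = 0$ is immediate: for any $y \in (vx)^{-1}E_c$, applying the definition of $P^{\sim}_{c',0}(L)$ to $x$ at $z = yv$ gives $(xyv)^\omega \in L_{c'}$, which transports to $(vxy)^\omega \in L_c$. For the inductive step $j \geq 1$, applying the definition of $x \in P^{\sim}_{c',j}(L)$ at $z = yv$ produces two cases. In the first, $(xyv)^\omega \in L_{c'}$ iff $j$ is even, and the shift identity transfers this to $(vxy)^\omega$. In the second, $(xyv)^\omega$ has a prefix $p \in P^{\sim}_{c',j-1}(L)$; the induction hypothesis applied to $(u,v,p)$ then yields $vp \in P^{\sim}_{c,j-1}(L)$, and $vp$ is a prefix of $(vxy)^\omega$ by the rotation identity. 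Either case verifies the disjunction required for $vx \in P^{\sim}_{c,j}(L)$.

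The main obstacle I expect is identifying the right correspondence between the universally quantified loop witnesses at $c$ and at $c'$: the naive attempt to reuse the same $y$ at $c'$ fails because $y$ need not loop on $c'$. The cyclic shift $z = yv$ simultaneously restores the loop condition at $c'$ and matches the $\omega$-rotation $(vxy)^\omega = v(xyv)^\omega$, which is precisely what lets both disjuncts in the definition of $P^{\sim}_{c',j}(L)$ be pushed through to $P^{\sim}_{c,j}(L)$.
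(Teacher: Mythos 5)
Your proof is correct, but it takes a genuinely different route from the paper. The paper proves monotonicity by passing through a normalized DPA $\AA$ for $L$ with ${\sim}_\AA \refine {\sim}$ and invoking \cref{lem:normalizedDPAconnection}, which identifies $\pi^{\sim}_c(w)$ with $\max\{\Amin(q,w) \mid q \text{ a } c\text{-state}\}$; monotonicity then falls out of the trivial fact that the minimal priority of a run can only decrease when one prepends the $v$-segment, together with the observation that $\delta^*(q,v)$ is a $c'$-state. You instead argue purely from \cref{def:paritydecomposition}, proving by induction on $j$ (uniformly in $u,v,x$, which you correctly need since the inductive step is applied to a different third argument $p$) that $x \in P^{\sim}_{c',j}(L)$ implies $vx \in P^{\sim}_{c,j}(L)$. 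The key technical device --- mapping a loop witness $y \in (vx)^{-1}E_c$ to the cyclic shift $z = yv \in x^{-1}E_{c'}$, and pairing it with the rotation $(vxy)^\omega = v(xyv)^\omega$ and the residual identity $L_{c'} = v^{-1}L_c$ --- is exactly what is needed to push both disjuncts of the definition through, and I verified that both the base case and both cases of the inductive step go through (in the prefix case, $vp$ is indeed a non-empty prefix of $(vxy)^\omega$ lying in $P^{\sim}_{c,j-1}(L)$ by the induction hypothesis). What each approach buys: yours is self-contained and works directly from the combinatorial definition, with no appeal to the existence of a normalized DPA or to \cref{lem:normalizedDPA}; the paper's is shorter given that \cref{lem:normalizedDPAconnection} has already been established for other purposes, and reuses that machinery rather than redoing an induction on priorities.
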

\begin{proof}
	Let $u,v \in \Sigma^*, x \in \Sigma^+$, and let $c = [u]_{\sim}$ and $c' = [uv]_{\sim}$. We use \cref{lem:normalizedDPAconnection}, so let $\AA = (Q,\Sigma,q_0,\delta,\kappa)$ be a normalized DPA for $L$. Denote by $Q_c$ and $Q_{c'}$ the sets of $c$-states and $c'$-states in $\AA$, respectively. By \cref{lem:normalizedDPAconnection}, $\pi_u^{\sim}(vx) = \max\{\Amin(q,vx) \mid q \in Q_c\}$. Further, for each $q \in Q_c$ we have $\Amin(q,vx) \le \Amin(\delta^*(q,v),x)$ because $\Amin(\delta^*(q,v),x)$ takes the minimal priority of a run that is the suffix of the run considered in $\Amin(q,vx)$.
	We conclude that
	\[
		\begin{array}{lcl}
			\pi_u^{\sim}(vx) & =   & \max\{\Amin(q,vx) \mid q \in Q_c\}            \\
			                 & \le & \max\{\Amin(\delta^*(q,v),x) \mid q \in Q_c\} \\
			                 & \le & \max\{\Amin(q',x) \mid q' \in Q_{c'}\}        \\
			                 & =   & \pi_{uv}^{\sim}(x)
		\end{array}
	\]
	where the second inequality follows from the fact that $\delta^*(q,v)$ is a $c'$-state for a $c$-state $q$.
\end{proof}

\subparagraph*{The Precise DPA.}
Our goal is now to construct, from a monotonic FWPM  $\family{\gamma} = \familyfor{\gamma}{{\sim}}$ that captures the periodic part of $(L,{\sim})$, a combined priority mapping $\bow\gamma$ that defines $L$.
We refer to $\bow\gamma$ as the \emph{join of $\family\gamma$}.

Recall that it is sufficient if $\bow\gamma$ works correctly on ultimately periodic words.
Since \(\family\gamma\) captures the periodic part of \(L\), for all \(uv^\omega\) such that \(uv \sim u\) we have that \(\gamma_c(v^\omega)\) is even if and only if \(v^\omega \in L_c\), where \(c\) is the \(\sim\)-class reached by \(u\).
The main problem when defining \(\bow\gamma\) for an input \(p \in \Sigma^+\) is that we only know the prefix \(p\) but not the ultimately periodic word \(uv^\omega\) that it is a prefix of.

Intuitively, the idea for computing the join for an input \(p\) is to consider all possible factorizations \(uv = p\) and their associated priorities \(\gamma_u(v)\).
Of those, the most significant one (i.e.~the smallest) then determines the priority of \(\bow\gamma(p)\).
Naively defining the join based on this, however, would just lead to the weak priority mapping for the initial class, since $\family{\gamma}$ is monotonic.
Instead, the join ensures that all these values $\gamma_u(v)$ are \emph{covered} by the priority sequence on $p$, where $\gamma_u(v)$ is covered if $\bow\gamma$ emits a priority less than or equal to $\gamma_u(v)$ on the $v$-part of the decomposition $p=uv$.
Obviously, with this definition, it is possible to always emit priority $0$. This clearly will cover all values $\gamma_u(v)$, but it will certainly not lead to a correct priority mapping for $L$.
Instead, the join always picks the least significant priority that is necessary to ensure that all values are covered.
Assume that the whole input is $w = uv^\omega$ with \(uv \sim u\) and $\gamma_u(v^\omega) = i$ where priority $i$ is already assumed after one iteration of $v$, so $\gamma_u(v) = i$. This means that $i = \gamma_u(v) = \gamma_{uv}(v) = \gamma_{uvv}(v) \cdots$. Then the join will infinitely often emit a priority $\le i$ in order to cover all these values.
Based on the monotonicity of $\family\gamma$ we can show in \cref{lem:join} that indeed $i$ will be the dominating priority that is emitted by $\bow\gamma$.

In the following example, we further try to illustrate the idea of the join, before subsequently formalizing it in \cref{def:join}.

\begin{example}\label{ex:join}
	Consider the FWPM \(\familyfor{\pi}{\sim}\) from \cref{ex:pisets} which is displayed on the left of \cref{fig:joinexample}: \(\pi_\varepsilon^\sim\) assigns \(0\) to a word \(v \in \Sigma^+\) precisely if \(v\) contains the infix \(aba\), \(1\) if \(v\) contains a \(b\) but no \(aba\) and \(2\) otherwise.
	Further, $\varepsilon$ is the only class.
	The table on the right of \cref{fig:joinexample} visualizes the computation of the join for all prefixes of the word \(p = abaaba\).
	The $i$-th letter of this word is denoted by $p_i$.
	As explained before, to compute $\bow\pi^\sim$ for some prefix $p_0\cdots p_i$ of $p$, we have to consider all values $\pi^\sim_{p_0\dotsc p_{j-1}}(p_j\dotsc p_i)$, which are given in the column for the corresponding value of $i$.
	Note that the subscript $p_0\dotsc p_j$, which denotes the class, does not play a role in this example, because there is only one class.
	We nevertheless spell it out in the example in order to match the formal definition.

	For finding $\bow\pi^\sim(p_0\cdots p_i)$, we have to ensure that all the values from the column for the corresponding $i$ are covered. The values that are not covered by a previous value of $\bow\pi^\sim$ are underlined.
	For $i=0$, no priority has been emitted by $\bow\pi^\sim$, so we just take the maximal value that covers $2$, which is $2$ itself.
	For $i=1$, the column contains two times priority $1$, and since up to now only $2$ has been emitted by $\bow\pi^\sim$, we obtain $\bow\pi^\sim(ab) = 1$.
	Similarly, because priority $0$ has to be covered in the next step, we obtain $\bow\pi^\sim(aba) = 0$.

	Now, for $i=3$, the values that need to be covered are the four values in the column for $i=3$. The value $0$ is coming from the decomposition $(\varepsilon,abaa)$ in which the periodic part is the whole word $abaa$. This value $0$ is already covered by $\bow\pi^\sim(aba) = 0$.
	Similarly, $\bow\pi^\sim(aba) = 0$ covers the values $\pi_{a}^\sim(baa) = 1$ and $\pi_{ab}^\sim(aa) = 2$.
	The only value from the column for $i=3$ that has not yet been covered, is the value
	$\pi_{aba}^\sim(a) = 2$, because $\bow\pi^\sim$ did not yet emit anything on this periodic part, which only includes the last letter.
	Hence, $\bow\pi^\sim(abaa) = 2$.

	Let us jump to $i=6$. For finding \(\bow\pi^\sim(abaaba)\), all values in the last column are considered. The first three $0$s are covered by $\bow\pi^\sim(aba) = 0$ because they correspond to decompositions whose periodic part contains position $i=2$.
	The other values have not yet been covered, and hence \(\bow\pi^\sim(abaaba) = 0\).
\end{example}

\begin{figure}
	\begin{center}
		\begin{minipage}{0.47\textwidth}
			\[
				\pi_\varepsilon^\sim(u) = \begin{cases}
					0 & \text{if \(u\) contains \(aba\)}               \\
					1 & \text{if \(u\) contains \(b\) but not \(aba\)} \\
					2 & \text{otherwise}
				\end{cases}
			\]
		\end{minipage}
		\hfill
		\begin{minipage}{0.5\textwidth}
			\begin{tabular}{rcccccc}
				$i$                                           & $0$                & $1$              & $2$              & $3$              & $4$              & $5$              \\
				$p_i$                                         & $a$                & $b$              & $a$              & $a$              & $b$              & $a$              \\
				\hline
				\(\pi^\sim_\varepsilon(p_0\dotsc p_i)\)       & \({\underline 2}\) & {$\underline 1$} & {$\underline 0$} & {$\green 0$}     & {$\green 0$}     & {$\green 0$}     \\
				{\(\pi^\sim_{p_0}(p_1\dotsc p_i)\)}           &                    & {$\underline 1$} & {$\green 1$}     & {$\green 1$}     & {$\green 1$}     & {$\green 0$}     \\
				{\(\pi^\sim_{p_0\dotsc p_1}(p_2\dotsc p_i)\)} &                    &                  & {$\underline 2$} & {$\green 2$}     & {$\green 1$}     & ${\green 0}$     \\
				{\(\pi^\sim_{p_0\dotsc p_2}(p_3\dotsc p_i)\)} &                    &                  &                  & ${\underline 2}$ & ${\underline 1}$ & ${\underline 0}$ \\
				{\(\pi^\sim_{p_0\dotsc p_3}(p_4\dotsc p_i)\)} &                    &                  &                  &                  & ${\underline 1}$ & {$\underline 0$} \\
				{\(\pi^\sim_{p_0\dotsc p_4}(p_5\dotsc p_i)\)} &                    &                  &                  &                  &                  & ${\underline 2}$ \\
				\hline
				\(\bow\pi^\sim(p_0\dotsc p_i)\)               & $2$                & $1$              & $0$              & $2$              & $1$              & $0$
			\end{tabular}
		\end{minipage}
	\end{center}
	\caption{
		On the left, the priority mapping \(\pi_\varepsilon^\sim\) associated with the FWPM \(\familyfor{\pi}{\sim}\) from \cref{ex:pisets} is shown.
		The table on the right displays the values of \(\pi^\sim_{p_0\dotsc p_{j-1}}(p_{j}\dotsc p_{i})\) where \(i\) is the length of a prefix of \(p\) and \(j\) is a possible first position inside the looping part of the guessed ultimately periodic word.\\
		In each column, we mark the value(s) that are not yet covered by underlining them.
	}
	\label{fig:joinexample}
\end{figure}

We are now ready to give the formal definition of the join.
Note that since it is defined inductively, we need to reference values of the join that have already been computed for prefixes of the input.
So while the definition speaks of partial priority mappings, the resulting mapping is guaranteed to be total.

\begin{definition} \label{def:join}
	Let $u \in \Sigma^+$ and $\gamma$ be a partial priority mapping that is defined for all non-empty prefixes of $u$.
	Let $y$ be a non-empty suffix of $u$, that is $u = xy$ for an $x \in \Sigma^*$.
	We say that $\gamma$ \emph{covers} $\family\gamma$ on the suffix $y$ of $u$ if $\gamma(xz) \leq \gamma_{x}(y)$ for some non-empty prefix $z$ of $y$.

	The priority mapping $\bow\gamma$ is defined inductively as follows.
	Assume that $\bow\gamma$ has been defined for all non-empty prefixes of $u \in \Sigma^*$, and let $a \in \Sigma$.
	Then $\bow\gamma(ua)$ is the maximal value such that $\bow\gamma$ covers $\family\gamma$ on all non-empty suffixes of $ua$.
\end{definition}

We first show that $\bow\gamma$ behaves correctly on ultimately periodic words in an auxiliary lemma.
From this we can easily deduce that $\bow\gamma$ defines the correct language.

\begin{lemma}\label{lem:join}
	Let $\family\gamma = \familyfor \gamma {\sim}$ be a monotonic FWPM, $u \in \Sigma^*$ and $v \in \Sigma^+$. If $u \mathrel{\sim} uv$, then $\bow\gamma(uv^\omega) = \gamma_{u}(v^\omega)$.
\end{lemma}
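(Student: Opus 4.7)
Set $i := \gamma_u(v^\omega)$ and $p := uv^\omega$. For $n \ge 1$, let $b_n := \bow\gamma(p_0 \cdots p_{n-1})$, and for $0 \le j < n$ let $a_j^n := \gamma_{p_0 \cdots p_{j-1}}(p_j \cdots p_{n-1})$; so $a_j^n$ is the value contributed by the factorization of the length-$n$ prefix of $p$ at position $j$ in the computation of the join. The plan is to prove $\bow\gamma(p) = i$ by establishing the two inequalities separately. The argument relies on three observations: (i) by monotonicity of $\family\gamma$, $a_j^n$ is non-decreasing in $j$; (ii) by weakness of each $\gamma_c$, $a_j^n$ is non-increasing in $n$ and hence converges to a limit $a_j^\infty$; and (iii) for every ``checkpoint'' position $j_k := |u| + k|v|$, the prefix $p_0 \cdots p_{j_k - 1}$ equals $uv^k$, which is $\sim$-equivalent to $u$ (using $u \sim uv$ iteratively), so $a_{j_k}^\infty = \gamma_u(v^\omega) = i$ and $a_{j_k}^n \ge i$ for all $n > j_k$, with equality once $n - j_k$ exceeds a threshold $M$ depending only on $\gamma_u$.

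For the upper bound $\bow\gamma(p) \le i$, fix a checkpoint $j_k$ and any $n \ge j_k + M + 1$, so that $a_{j_k}^n = i$. The coverage condition in the definition of $\bow\gamma$ applied to the factorization $(p_0 \cdots p_{j_k - 1}, p_{j_k} \cdots p_{n-1})$ supplies some $m \in [j_k + 1, n]$ with $b_m \le i$. As $k \to \infty$ these $m$'s grow unboundedly, yielding infinitely many $m$ with $b_m \le i$.

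For the lower bound $\bow\gamma(p) \ge i$, suppose towards a contradiction that $N^* := \{n : b_n < i\}$ is infinite. By the maximality clause in the definition of $\bow\gamma$, each $n \in N^*$ admits a ``bad'' starting position $j^*_n < n$ such that $a_{j^*_n}^n = b_n < i$ and no $m \in [j^*_n + 1, n - 1]$ satisfies $b_m \le a_{j^*_n}^n$. Combining (i) and (iii), no checkpoint can lie at or before $j^*_n$ (otherwise $a_{j^*_n}^n \ge i$ by transitivity), so $j^*_n \in \{0, \ldots, |u| - 1\}$. As this range is finite, pigeonhole provides some $j^* < |u|$ with $j^*_n = j^*$ for $n$ ranging over an infinite subset $N^{**} \subseteq N^*$. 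By (ii), the sequence $a_{j^*}^n$ stabilises at some $a^* < i$ for $n$ beyond a threshold $N_0$. Picking any $m_1 < m_2$ in $N^{**}$ with $m_1 \ge N_0$ then produces a contradiction: from $m_1 \in N^{**}$ and stabilisation, $b_{m_1} = a_{j^*}^{m_1} = a^*$; from the un-coverage at step $m_2$, $b_m > a_{j^*}^{m_2} = a^*$ for every $m \in [j^* + 1, m_2 - 1]$, and $m_1$ lies in this interval since $j^* < m_1 < m_2$.

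The main obstacle is the lower bound. The key insight to extract is that any obstruction to $b_n \ge i$ must originate from a starting position inside the non-periodic prefix $u$, because every starting position in the periodic tail is a checkpoint and thus carries value at least $i$. Once this finiteness is recognised, pigeonhole together with the monotone convergence of $a_{j^*}^n$ converts one recurring bad position into the required contradiction.
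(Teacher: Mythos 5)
Your proof is correct, but it is organised quite differently from the paper's. The paper sets $i := \bow\gamma(uv^\omega)$, first argues that $\gamma_u(v^\omega)$ is independent of the chosen decomposition (so the anchor point may be moved), re-anchors at a prefix after which the join only emits values $\ge i$, and then, for each subsequent emission of $i$, traces the responsible uncovered suffix: it cannot reach back before the anchor, so it yields some $z$ with $\gamma_{uy}(z)=i$, and monotonicity plus the coverage obligation for the full suffix pin $\gamma_u(x)=i$. You instead fix $i := \gamma_u(v^\omega)$ and prove the two inequalities separately, working with the original decomposition throughout: the upper bound comes from the coverage obligation at the checkpoints $|u|+k|v|$, where the relevant $\gamma$-value is exactly $i$ (using that $\gamma_{uv^k}$ and $\gamma_u$ are literally the same mapping since the family is indexed by $\sim$-classes); the lower bound comes from observing that any position witnessing maximality of an emission below $i$ is dominated, via monotonicity in the factorisation position, by some checkpoint unless it lies in the finite prefix $u$, after which pigeonhole and the stabilisation of $a_{j^*}^n$ (weakness) contradict the non-coverage condition. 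Both arguments rest on the same two structural facts --- monotonicity of the FWPM in the split position and weakness in the length --- but your two-sided, double-indexed bookkeeping avoids the paper's slightly informal ``rewrite $uv^\omega$ so that $\bow\gamma(u)=i$'' re-anchoring step (which implicitly requires the new anchor to still satisfy $u\sim uv$), at the cost of a longer argument; the paper's single suffix-tracing step is more economical. I found no gap in your reasoning.
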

\begin{proof}
	First note that if $u'$ and $v'$ are such that $u'(v')^\omega = uv^\omega$ and $u' \mathrel{\sim} u'v'$, then $\gamma_{u}(v^\omega) = \gamma_{u'}((v')^\omega)$. This is implied by monotonicity of $\family\gamma$ (\cref{def:monotonic}): Assume w.l.o.g.\ that $u$ is a prefix of $u'$. Then by monotonicity, we have $\gamma_{u}(v^\omega) \le \gamma_{u'}((v')^\omega)$. Now choose $n$ such that $u'$ is a prefix of $uv^n$. Then again by monotonicity, we have $\gamma_{u'}((v')^\omega) \le \gamma_{uv^n}(v^\omega) = \gamma_{u}(v^\omega)$.

	This means that we can choose any representation of the given ultimately periodic word for proving the claim.

	We now turn to the claim of the lemma. Let $\bow\gamma(uv^\omega) = i$, then the least color assigned to infinitely many prefixes of $uv^\omega$ is $i$.
	Thus, we can rewrite $uv^\omega$ such that $\bow\gamma(u) = i$ and $\bow\gamma(ux) \geq i$ for all non-empty prefixes $x \sqsubset v^\omega$.
	Further, we assume that $u$ is picked to be the shortest prefix with this property.\\
	Consider any non-empty $x \sqsubseteq v^\omega$ with $\bow\gamma(ux) = i$.
	By definition of $\bow\gamma$, there is a suffix $z$ of $ux$ that would not be covered if we had $\bow\gamma(ux) > i$.
	Since $\bow\gamma(u) = i$, we conclude that $z$ is a suffix of $x$ (otherwise it would be covered by the value of $\bow\gamma(u)$).
	Thus, we can write $x = yz$ with $\gamma_{uy}(z) = i$.
	Since $\overline\gamma$ is monotonic, it follows that $\gamma_u(x) = \gamma_u(yz) \leq \gamma_{uy}(z) = i$.
	This inequality cannot be strict because otherwise the suffix $x$ of $ux$ would not be covered by $\bow\gamma$ since all $\bow\gamma$-values after $u$ are $\ge i$.
	We conclude that $\gamma_{u}(v^\omega) = i$ as desired.
\end{proof}

\begin{lemma}\label{lem:joinlanguage}
	Let $L$ be an $\omega$-regular language and ${\sim}$ refine ${\sim}_L$.
	If $\overline\gamma$ is a monotonic FWPM that captures the periodic part of $(L,{\sim})$, then $L(\bow\gamma) = L$.
	In particular, $L(\bow\pi^{\sim}) = L$.
\end{lemma}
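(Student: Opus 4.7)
The plan is to reduce everything to ultimately periodic words, since two regular $\omega$-languages coincide if and only if they contain the same ultimately periodic words (a classical fact recalled in the preliminaries). So I would show that $L$ and $L(\bow\gamma)$ contain exactly the same ultimately periodic words.

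Take any ultimately periodic word $w \in \Sigma^\omega$. First I would argue that $w$ admits a decomposition $w = uv^\omega$ with $u \sim uv$. This is standard: write $w = u'(v')^\omega$ arbitrarily, and since ${\sim}$ has finite index, there is some power $n \ge 1$ such that $u'$ and $u'(v')^n$ lie in the same ${\sim}$-class; setting $u := u'$ and $v := (v')^n$ yields the required decomposition, and $u v^\omega = w$.

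Now chain the following equivalences, where $c := [u]_{\sim}$:
\begin{align*}
w \in L \;\Longleftrightarrow\; v^\omega \in L_c \;\Longleftrightarrow\; v^\omega \in L(\gamma_c) \;\Longleftrightarrow\; \gamma_c(v^\omega) \text{ is even} \;\Longleftrightarrow\; \bow\gamma(uv^\omega) \text{ is even}.
\end{align*}
The first equivalence uses that ${\sim}$ refines ${\sim}_L$, together with the definition of $L_c$. The second uses that $v \in E_c$ (which holds because $u \sim uv$) and that $\family\gamma$ captures the periodic part of $(L,{\sim})$ in the sense of \cref{def:capture-periodic}. The third is just the definition of $L(\gamma_c)$. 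The fourth is \cref{lem:join}, which requires precisely the hypothesis $u \sim uv$ and the monotonicity of $\family\gamma$. The rightmost condition is by definition equivalent to $w \in L(\bow\gamma)$, so $w \in L \iff w \in L(\bow\gamma)$ for every ultimately periodic $w$, giving $L(\bow\gamma) = L$.

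For the final sentence of the statement, I would simply observe that the precise FWPM $\family{\pi^{\sim}}$ satisfies both hypotheses of the first part: it is monotonic by \cref{lem:precisecoloringmonotonic}, and it captures the periodic part of $(L,{\sim})$ by \cref{lem:capperiodicpart}. Applying the first part with $\family\gamma = \family{\pi^{\sim}}$ yields $L(\bow{\pi^{\sim}}) = L$. There is no real obstacle here; all the work has been done in \cref{lem:join} and in the two preceding lemmas. The only subtlety is the preliminary step of finding a decomposition $w = uv^\omega$ with $u \sim uv$, which is a routine consequence of the finite index of ${\sim}$.
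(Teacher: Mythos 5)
Your proof follows the same route as the paper's: reduce to ultimately periodic words via the classical fact from \cite{buchiOriginalPaper}, produce a decomposition $w = uv^\omega$ with $u \mathrel{\sim} uv$, and then chain the definition of $L_c$, the capturing property, and \cref{lem:join}. The chain of equivalences and the deduction of the ``in particular'' clause from \cref{lem:precisecoloringmonotonic} and \cref{lem:capperiodicpart} are exactly what the paper does and are correct.

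There is, however, an error in your decomposition step. You claim that by finite index there is some $n \ge 1$ with $u' \mathrel{\sim} u'(v')^n$, and you then keep the prefix $u := u'$ and only power up the period. Finite index gives a repetition \emph{somewhere} in the sequence $[u'], [u'v'], [u'(v')^2], \dotsc$, i.e.\ indices $i < j$ with $u'(v')^i \mathrel{\sim} u'(v')^j$; it does not guarantee that the initial class $[u']$ ever recurs. For a concrete failure, take $\Sigma = \{a\}$, $L = \Sigma^\omega$ (so ${\sim}_L$ has one class), and the right congruence with the two classes $\{\varepsilon\}$ and $a^+$, which refines ${\sim}_L$; with $u' = \varepsilon$ and $v' = a$ we have $u'(v')^n \not\sim u'$ for every $n \ge 1$. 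The repair is the one the paper uses: choose $i < j$ with $u'(v')^i \mathrel{\sim} u'(v')^j$ and set $u := u'(v')^i$, $v := (v')^{j-i}$, so that the prefix is lengthened as well. With that correction the remainder of your argument goes through verbatim, since \cref{lem:join} only needs \emph{some} decomposition of $w$ with $u \mathrel{\sim} uv$ (and the value $\gamma_u(v^\omega)$ is independent of which such decomposition is chosen, as shown at the start of the proof of \cref{lem:join}).
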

\begin{proof}
	If follows from \cite{buchiOriginalPaper} that if $K \cap \mathbb{UP} = L \cap \mathbb{UP}$ for two regular $\omega$-languages $K$ and $L$, then $K = L$. Thus, it suffices to verify that for all $u \in \Sigma^*, v\in \Sigma^+$ holds $uv^\omega \in L$ if and only if $uv^\omega \in L(\bow\gamma)$. Consider the sequence of classes $[u], [uv], [uvv],\dotsc$. Because ${\sim}$ has only finitely many classes, there must exist $i < j \leq |{\sim}| + 1$ such that $[uv^i] = [uv^j]$. Thus, we can pick $x$ and $y$ such that $uv^\omega = xy^\omega$ and $xy\mathrel{\sim} x$.\\
	By definition, we have $xy^\omega \in L$ if and only if $y^\omega \in L_x$.
	Since $y \in \Stay x$ and $\overline\gamma$ captures the periodic part of $L$, it further holds that $y^\omega \in L_x$ iff $y^\omega \in L(\gamma_x)$.
	Applying \cref{lem:join} now gives us $\bow\gamma(xy^\omega) = \gamma_x(y^\omega)$, from which we immediately conclude that $xy^\omega \in L(\bow\gamma)$ if and only if $xy^\omega \in L$.

	The claim $L(\bow\pi^{\sim}) = L$ directly follows from \cref{lem:capperiodicpart} and \cref{lem:precisecoloringmonotonic}.
\end{proof}

We now explain how to construct a DPA that computes the priority mapping $\bow\gamma$ if $\family\gamma$ is a monotonic FWPM given by a family $\family\MM = \familyfor{\MM}{{\sim}}$ of Mealy machines.
This DPA, which we denote by $\bow\AA(\overline\MM)$, has one component for tracking ${\sim}$, and one component for each priority $i$ that checks if there is a suffix of the input read so far that has to be covered and produces priority $i$.
The minimal such $i$ is emitted as priority, and all components for priorities $\ge i$ are reset to start tracking suffixes from this point.
Note that the construction can be applied even if $\family\gamma$ is not monotonic (but then we do not have any guarantees on the behavior of the resulting DPA). We formalize this below.

Assume that the range of the mappings in $\family\gamma$ is $\{0,\ldots,k-1\}$.
First, we extract for each priority $i \in \{0,\ldots,k-1\}$ from each Mealy machine $\MM_c$ a DFA $\DD_{c,i} = (Q_{c,i}, \Sigma, \iota_{c,i}, \delta_{c,i}, F_{c,i})$ with $L(\DD_{c,i}) = \{u \in \Sigma^+ \mid \gamma_c(u) \leq i\}$.
This is achieved by taking the transition structure of $\MM_c$ and redirecting all transitions with output $\le i$ into an accepting sink state (recall that $\gamma_c$ is weak), and then minimizing the resulting DFA.
This step is illustrated in \cref{fig:FWPM-Mealy}, where the class index is omitted because there is only one class.
Note that the DFA $\DD_{c,k-1}$ always accepts every non-empty word, so it is omitted in the example. However, it is convenient to keep it in the formal construction.

To simplify notation, we assume that the $Q_{c,i}$ are pairwise disjoint, allowing us to write $\delta(q, a)$ and $q \in F$ to refer to $\delta_{c,i}(q, a)$ resp. $q \in F_{c,i}$ for the unique $c \in [{\sim}]$ and $i< k$ such that $q \in Q_{c,i}$.
We now define the DPA $\bow\AA(\overline\MM) = (\bow Q, \Sigma, \bow\iota, \bow\delta, \bow\kappa)$ with
\[
	\bow Q = Q_0 \times \cdots \times Q_{k-1} \times Q_{\sim},
	\ \text{where}\
	Q_i = \bigcup_{c \in [{\sim}]} (Q_{c,i} \setminus F_{c,i})
	\ \text{and}\
	\bow\iota = (\iota_{[\varepsilon], 0}, \dotsc, \iota_{[\varepsilon],k-1},[\varepsilon]).
\]
In this product, we refer to the $i$-th component as the \emph{component for priority $i$} for $i \in \{0, \ldots, k-1\}$.
For each transition from some state $\tilde{q} = (q_0, \dotsc, q_{k-1},c) \in Q$ on a symbol $a \in \Sigma$, we begin by computing the reset point $r < k$, which corresponds to the least index of a DFA that reaches a final state, meaning formally $r = \min\{i < k \mid \delta(q_i, a) \in F\}$.
Note that $r$ is always defined because the DFA for priority $k-1$ accepts everything.

The reset point directly determines the priority of the transition, meaning $\bow\kappa(\tilde{q}, a) = r$.\\
For computing the target of the transition, we first advance each $q_j$ for $j < r$ by $a$ in the respective DFA and then reset all other states to the initial state of the appropriate DFA, i.e.

\[
	\bow\delta((q_0, \dotsc, q_{k-1},c), a) = (\delta(q_0,a), \dotsc, \delta(q_{r-1},a), \iota_{c', r}, \dotsc, \iota_{c', k-1},c') \quad\text{where } c' = [ca]_{\sim}.
\]

In \cref{fig:FWPM-Mealy} the resulting DPA for \cref{ex:pisets} is shown on the right-hand side.
We omit the component for the ${\sim}$-class and the component for priority $2$, because they both only consists of one state.
The precise DPA for \cref{ex:twoclasses} is shown in \cref{fig:twoclassexample}.

\begin{figure}
	\begin{center}
	\includegraphics[width=\textwidth]{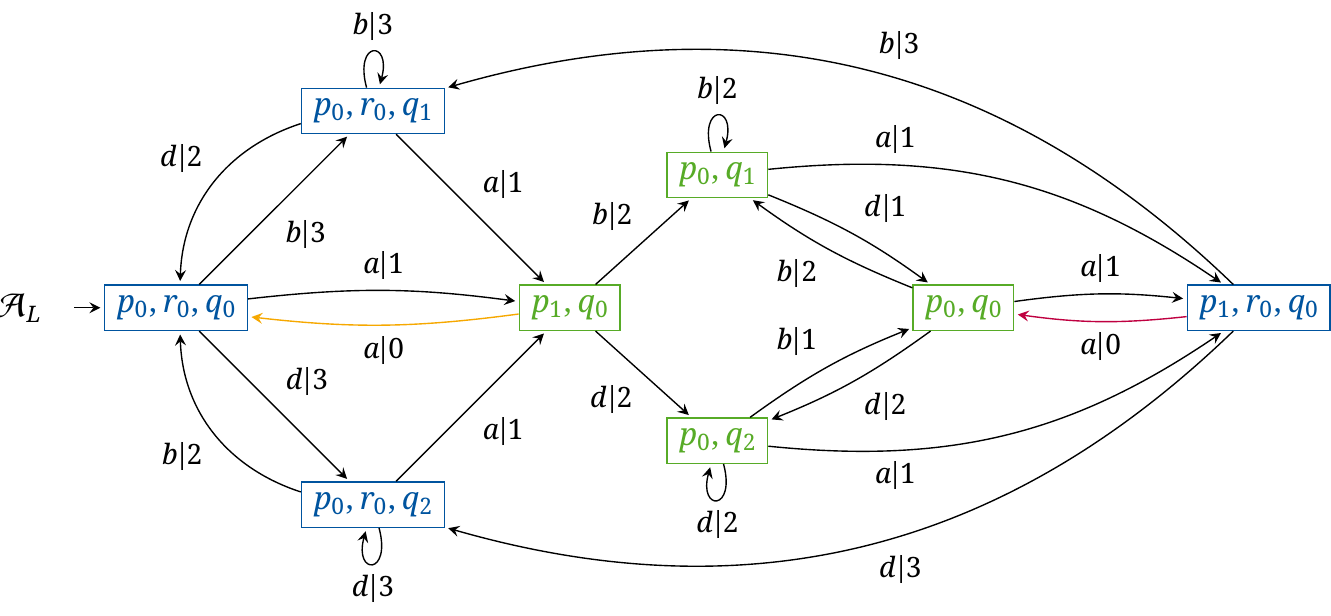}
	\end{center}
	\caption{
	The precise DPA for the language $L$ from example \cref{ex:twoclasses}.
	The DPA is built using the right congruence ${\sim}_L$ and DFAs for the sets $P_{c,i}^{{\sim}_L}(L)$ for $i \leq 3$ and $d \in [{\sim}_L]$, which are depicted in \cref{fig:twoclassexampleparts}.
	The states of $\AA_L$ are tuples, where the $i$-th position from the front is a state belonging to the component for priority $i$, i.e.~a DFA which keeps track of level $i$ of the parity decomposition.
	We omit a component if the corresponding DFA accepts $\Sigma^+$ and also do not include the last component, instead depicting states belonging to class $[\varepsilon]_{{\sim}_L}$ in \blue{blue} and those belonging to $[a]_{{\sim}_L}$ in {\color{green}green}.
	Note that reading $a$ from the initial state, the first component is not reset, as the corresponding DFA still waits on completing an infix $aa$.
	The DFA $\CC$ from \cref{fig:twoclassexampleparts} tracks priority $2$ from $[\varepsilon]_{{\sim}_L}$ and priority $1$ from $[a]_{{\sim}_L}$.
	Hence its states appear in two different positions in the tuples.
	The DFA $\BB$ tracks priority $0$ for both classes, so the component for priority $0$ is reset to $p_0$ for both classes on transitions with priority $0$ (the {\color{orange} orange} and {\color{purple}purple} $a$-transitions).
	}
	\label{fig:twoclassexample}
\end{figure}

The following lemma establishes that $\bow\AA(\MM)$ indeed computes the join.

\begin{lemma}\label{lem:bowacomputesjoin}
	If $\overline\MM$ is a family of Mealy machines representing a monotonic FWPM $\overline\gamma$, then $\bow\AA(\overline\MM)(u) = \bow\gamma(u)$ for each $u \in \Sigma^+$. In particular, $L(\bow\AA(\overline\MM)) = L(\bow\gamma)$.
\end{lemma}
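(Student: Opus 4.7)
The plan is to proceed by induction on $|u|$, establishing simultaneously an invariant about the reached state of $\bow\AA(\overline\MM)$ and the identity $\bow\AA(\overline\MM)(u) = \bow\gamma(u)$. Specifically, after reading $u$, the reached state $(q_0,\ldots,q_{k-1},c)$ satisfies $c = [u]_{\sim}$, and for each $i < k$, if $x_i$ denotes the longest prefix of $u$ with $x_i = \eps$ or $\bow\gamma(x_i) \le i$ and $u = x_i y_i$, then $q_i$ is the state of $\DD_{[x_i]_{\sim},i}$ reached from $\iota_{[x_i]_{\sim},i}$ by reading $y_i$. Intuitively, component $i$ tracks the progress of the DFA $\DD_{[x_i]_{\sim},i}$ on the suffix of $u$ read since $\bow\gamma$ last emitted a priority of at most $i$ (or since the start).

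Given the invariant for $u$, the inductive step for $ua$ proceeds as follows. Using $L(\DD_{c,i}) = \{w \in \Sigma^+ \mid \gamma_c(w) \le i\}$, the reset point on input $a$ rewrites to $r = \min\{i < k \mid \gamma_{[x_i]_{\sim}}(y_i a) \le i\}$, and this is exactly the priority emitted by $\bow\AA(\overline\MM)$ on reading $a$. I would then show $r = \bow\gamma(ua)$ in two directions. For $\bow\gamma(ua) \le r$: the suffix $y_r a$ of $ua$ satisfies $\gamma_{[x_r]_{\sim}}(y_r a) \le r$, but by maximality of $x_r$ every proper prefix $x_r z$ of $ua$ with $z$ a non-empty prefix of $y_r a$ has $\bow\gamma(x_r z) > r$. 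Hence no earlier emission can cover this suffix, forcing $\bow\gamma(ua) \le \gamma_{[x_r]_{\sim}}(y_r a) \le r$.

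For the reverse inequality $\bow\gamma(ua) \ge r$, I would show that value $r$ covers every non-empty suffix of $ua$. Fix a decomposition $ua = xy$ and set $i := \gamma_{[x]_{\sim}}(y)$. Three cases arise according to the length of $x$ relative to $x_i$. If $|x| < |x_i|$, then writing $x_i = xz$ yields a proper prefix $xz$ of $ua$ with $\bow\gamma(xz) = \bow\gamma(x_i) \le i$, so the earlier emission covers $y$. If $x = x_i$, then $y = y_i a$ and $\gamma_{[x_i]_{\sim}}(y_i a) = i$, placing $i$ in the set minimised by $r$, so $r \le i$. If $|x| > |x_i|$, then writing $x = x_i v$ and invoking monotonicity of $\overline\gamma$ (\cref{def:monotonic}) gives $\gamma_{[x_i]_{\sim}}(y_i a) = \gamma_{[x_i]_{\sim}}(vy) \le \gamma_{[x_i v]_{\sim}}(y) = i$, again yielding $r \le i$. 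In the latter two cases, setting $\bow\gamma(ua) = r$ covers the suffix via $z = y$ itself. Once $r = \bow\gamma(ua)$ is established, propagating the invariant from $u$ to $ua$ is a routine case distinction on whether $i < r$ or $i \ge r$ for each component, directly matching the transition definition of $\bow\AA(\overline\MM)$. The main obstacle is the third case above, which is precisely where monotonicity of $\overline\gamma$ is essential (and is the reason the hypothesis appears in the statement). The ``in particular'' clause is then immediate, since two priority mappings that agree pointwise on $\Sigma^+$ define the same $\omega$-language.
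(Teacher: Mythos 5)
Your proof is correct and takes essentially the same approach as the paper: both arguments proceed by induction over prefixes, identify for each priority $i$ the longest prefix after which component $i$ was last reset, and invoke monotonicity exactly where a decomposition's split point falls after that reset prefix. Your explicit state invariant and two-sided inequality merely repackage the paper's intermediate ``Claim'' characterizing $\bow\gamma(u)$.
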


\begin{proof}
	We write $\bow\AA$ for $\bow\AA(\overline\MM)$.
	Further, for avoiding case distinctions, we let $\bow\gamma(\varepsilon) := \bow\AA(\varepsilon) := 0$. Note that this has no influence on the claim of the lemma because the claim is only on non-empty words.

	We first prove the following characterization of $\bow\gamma$, and then show that this is precisely what $\bow\AA(\overline\MM)$ computes:

	\underline{Claim:} For $u \in \Sigma^+$, we have that $\bow\gamma(u)$ is the minimal $i$ such that there are $x \in \Sigma^*$, $y \in \Sigma^+$ with $u = xy$ and
	\begin{enumerate}[(1)]
		\item $\gamma_x(y) = i$,
		\item $\forall z \in \Sigma^+: z \sqsubsetneq y \Rightarrow \bow\gamma(xz) > i$, and
		\item $\bow\gamma(x) \le i$.
	\end{enumerate}
	(Where $z \sqsubsetneq y$ means that $z$ is a prefix of $y$ but not equal to $y$.)

	From the definition of $\bow\gamma$ it follows that $\bow\gamma(u)$ is the minimal $i$ for which $u=xy$ with properties (1) and (2) exist, because these are precisely the suffixes of $u$ that need to be covered by the value $\bow\gamma(u)$.
	So let $i$ be minimal such that there are $x \in \Sigma^*$, $y \in \Sigma^+$ with $u = xy$, $\gamma_x(y) = i$, and $\bow\gamma(xz) > i$ for all strict non-empty prefixes of $y$.
	Furthermore, choose $x$ and $y$ such that $y$ has maximal length with these properties. We now show that (3) is also satisfied, thus proving the claim.

	Assume for contradiction that $\bow\gamma(x) = j > i$. Then $x$ is non-empty because we defined $\bow\gamma(\varepsilon) = 0$. Let $x = x'a$ with $a \in \Sigma$. Since $\family\gamma$ is monotonic, $i' := \gamma_{x'}(ay) \le \gamma_x(y) = i$. Now let $z$ be a nonempty strict prefix of $ay$. If $z = a$, then $\bow\gamma(x'z) = \bow\gamma(x) = j > i$. If $z = az'$ with nonempty $z'$, then $\bow\gamma(x'z) = \bow\gamma(xz') > i$ because $x,y$ and $i$ satisfy property (2). Overall, we obtain that $x',ay$ and $i'$ satisfy properties (1) and (2). If $i' < i$, this contradicts the choice of $i$ as the minimal such value, and if $i' = i$, this contradicts the choice of $y$ as longest suffix of $u$ such that $x,y$ and $i$ have properties (1) and (2).

	\medskip

	Now let us show that the priority of $\bow\AA$ after reading $u$ is the one characterized in the claim.
	We assume inductively that $\bow\AA$ has produced the same priorities as $\bow\gamma$ on all strict non-empty prefixes of $u$.
	Let $\bow\AA(u) =: i$. Then the component for priority $i$ in $\bow\AA$ has reached a final state of the respective DFA that is currently running in that component, and none of the DFAs in the components of smaller priorities has reached a final state. Let $x$ be the longest strict prefix of $u$ such that $\bow\AA(x) \le i$ (by our convention $\bow\AA(\varepsilon) = 0$, such an $x$ always exists), and let $y \in \Sigma^+$ be such that $u=xy$.

	By definition of $\bow\AA$, the component for priority $i$ is reset to the initial state of $\DD_{x,i}$ after reading $x$, and since $x$ is the longest prefix on which priority $\le i$ is produced, none of the components for priorities $\le i$ is reset after $x$. We had seen above that the component for priority $i$ in $\bow\AA$ has reached a final state, which means that $\gamma_x(y) \le i$ by definition of $\DD_{x,i}$. Since the components for priorities smaller than $i$ have not reached a final state, we obtain by monotonicity that $\gamma_x(y) = i$. Hence, $x,y$ and $i$ satisfy the properties (1)--(3) of the claim, and thus $\bow\AA(u) = i \ge \bow\gamma(u)$.

	Now assume that there is $j < i$ that satisfies (1)--(3) for some appropriate $x$ and $y$. Then, along the same lines as above, the component for priority $j$ is reset after that $x$, and reaches a final state after reading $y$. This means that $\bow\AA$ would output a priority $\le j$.
\end{proof}

We know from \cref{mealysize} that the precise FWPM $\overline{\pi^{\sim}}$ for $(L,{\sim})$ can be computed by a family of Mealy machines. This enables us now to define the central object of this section.
\begin{definition} \label{def:preciseDPA}
	Let $L$ be a regular $\omega$-language, and let ${\sim}$ be a right congruence that refines ${\sim}_L$.
	The \emph{precise DPA $\AA_{L,{\sim}}$ for $(L, {\sim})$} is the minimal DPA whose priority mapping is the join $\bow\pi^{\sim}$ of the precise FWPM $\overline{\pi^{\sim}}$ for $(L,{\sim})$. The \emph{precise DPA $\AA_L$ for $L$} is $\AA_{L,{\sim}_L}$.
\end{definition}
The DPA on the right-hand side of \cref{fig:FWPM-Mealy} is minimal as a Mealy machine, so it is the precise DPA for the language from \cref{ex:pisets}. Similarly, the DPA in \cref{fig:twoclassexample} is the precise DPA for the language from \cref{ex:twoclasses}.

Note that $L(\AA_{L,{\sim}}) = L$ because the precise FWPM $\family{\pi^{\sim}}$ for $L$ and ${\sim}$ captures the periodic part of $(L,{\sim})$ according to \cref{lem:capperiodicpart}, and hence $L(\bow\pi^{\sim}) = L$ by \cref{lem:joinlanguage}. Since, by definition, the priority mapping of $\AA_{L,{\sim}}$ is $\bow\pi^{\sim}$, we obtain that $L(\AA_{L,{\sim}}) = L$.
Furthermore, given an arbitrary DPA $\AA$ for $L$, and ${\sim}$ given as transition system, one can construct $\AA_{L,{\sim}}$.
For use in the \cref{dpalearner}, we give an upper bound on the size of the precise DPA that is constructed from another DPA.
\begin{theorem}\label{boundpreciseDPA}
	Let $\AA$ be a DPA with priorities in $\{0, \ldots, k-1\}$ accepting some language $L$, and ${\sim}$ be a right congruence with ${\sim}_\AA \refine {\sim} \refine {\sim}_L$. If $\AA$ has at most $d$ many $c$-states for each class $c$ of ${\sim}_L$, then $|\AA_{L,{\sim}}| \le m(2^d-1)^{k-1}$, where $m$ is the number of ${\sim}$ classes.
\end{theorem}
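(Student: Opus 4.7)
The plan is to construct, starting from a normalized version of $\AA$, a concrete DPA $\BB$ of size at most $m(2^d-1)^{k-1}$ whose induced priority mapping coincides with $\bow\pi^{\sim}$. Since by \cref{def:preciseDPA} the precise DPA $\AA_{L,{\sim}}$ is the minimal DPA computing this mapping, the bound $|\AA_{L,{\sim}}| \le |\BB|$ will follow directly.

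First I would normalize $\AA$, which does not affect its transition structure (so the hypotheses on $d$ and on ${\sim}_\AA$ are preserved), and then appeal to \cref{lem:normalizedDPAconnection} to characterize $P^{\sim}_{c,i}(L)$ as $\{u : \max_{q \in Q_c}\Amin(q,u) \le i\}$. The key observation is that the DFAs $\DD_{c,i}$ used in the product construction $\bow\AA(\family\MM)$ do not need to keep the full positional function $\tau : Q_c \to Q$ that appears in the proof of \cref{mealysize}: since the current ${\sim}$-class $c'$ is already stored in the ${\sim}$-component of the product, and since ${\sim}_\AA \refine {\sim}$ guarantees that every $c$-state lies in $Q_{c'}$ after reading a word in class $[cu]_{\sim} = c'$, it is enough to remember, for each priority $i < k-1$, only the subset of $Q_{c'}$ occupied by the current positions of those $c$-states that have not yet visited a transition of priority $\le i$ since component $i$ was last reset.

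I would therefore define $\BB$ to have states $(c, T_0, \ldots, T_{k-2})$ with $c \in [{\sim}]$ and each $T_i$ a non-empty subset of $Q_c$, starting from $([\varepsilon], Q_{[\varepsilon]}, \ldots, Q_{[\varepsilon]})$. On input $a$ from $(c, T_0, \ldots, T_{k-2})$ one sets $T_i' := \{\delta(q,a) \mid q \in T_i,\ \kappa(q,a) > i\}$, picks the reset point $r := \min\{i \in \{0,\ldots,k-1\} : T_i' = \emptyset\}$ with the convention that $T_{k-1}'$ counts as empty (reflecting that $\DD_{c,k-1}$ accepts every non-empty word), emits priority $r$, and transitions to the state with ${\sim}$-component $c' := [ca]_{\sim}$, components $T_i'$ for $i < r$, and components reset to $Q_{c'}$ for $r \le i \le k-2$. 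Because each ${\sim}$-class is contained in some ${\sim}_L$-class and hence has at most $d$ associated $\AA$-states, each $T_i$ ranges over the $2^d-1$ non-empty subsets of $Q_c$, giving $|\BB| \le m(2^d-1)^{k-1}$.

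The remaining and technically most involved part is to show that $\BB$ induces exactly the priority mapping $\bow\pi^{\sim}$. I would prove by induction on the processed prefix the invariant that, after reading $u$, each $T_i$ equals $\{\delta^*(q, v_i) \mid q \in Q_{c_i},\ \Amin(q, v_i) > i\}$, where $u = u_i v_i$, $u_i$ is the most recent prefix at which component $i$ was reset, and $c_i := [u_i]_{\sim}$. Using \cref{lem:normalizedDPAconnection}, the event $T_i' = \emptyset$ upon reading $a$ is then equivalent to $v_i a \in P^{\sim}_{c_i,i}(L)$, i.e.\ to $\DD_{c_i,i}$ accepting $v_i a$. Consequently, the reset points, emitted priorities, and ${\sim}$-updates of $\BB$ coincide step by step with those of $\bow\AA(\family\MM)$, and \cref{lem:bowacomputesjoin} together with \cref{lem:joinlanguage} yield $\BB(u) = \bow\pi^{\sim}(u)$ for all $u \in \Sigma^+$. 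I expect the main difficulty to be the bookkeeping around simultaneous resets (a reset at priority $r$ wipes all components $\ge r$ in the same transition) and the initial case, but no new semantic content beyond \cref{lem:normalizedDPAconnection} and the characterization of the join should be required.
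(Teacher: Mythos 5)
Your proposal is correct and follows essentially the same route as the paper: normalize $\AA$, use \cref{lem:normalizedDPAconnection} to realize each $\DD_{c,i}$ directly as a subset construction over the $c$-states of $\AA$ (states are sets of pairwise ${\sim}$-equivalent $\AA$-states, with $a$-successor $\{\delta(p,a) \mid p \in P,\ \kappa(p,a) > i\}$ and $\emptyset$ accepting), and then count the non-empty subsets per class in the product $\bow\AA(\family\MM)$, dropping the trivial component for priority $k-1$. The only cosmetic difference is that you build the product automaton $\BB$ from scratch and re-verify it computes $\bow\pi^{\sim}$, whereas the paper simply plugs the concretely realized $\DD_{c,i}$ into the already-proven $\bow\AA(\family\MM)$ construction.
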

\begin{proof}
	We can assume that $\AA = (Q,\Sigma,\iota,\delta,\kappa)$ is normalized (this does not change the transition structure and does not increase the number of priorities). The precise DPA $\AA_{L,{\sim}}$ can be constructed from the family $\family\MM$ of Mealy machines for $\overline{\pi^{\sim}}$ by building $\bow\AA(\family\MM)$.
	In the construction of $\bow\AA(\family\MM)$, the first step is to extract the DFAs $\DD_{c,i}$ for each class $c$ and each priority $i$ from the Mealy machines in $\family\MM$. The DFA $\DD_{c,i}$ accepts a word $w$ if $\pi_c^{{\sim}}(w) \le i$. From \cref{lem:normalizedDPAconnection} we obtain that $\pi_c^{{\sim}}(w) \le i$ if from every $c$-state $q$ in $\AA$, the run on $u$ has visited a priority $\le i$.
	We can directly construct $\DD_{c,i}$ from $\AA$ without going through the Mealy machines as follows. The states of $\DD_{c,i}$ are sets $P \subseteq Q$ such that all states in $P$ are pairwise ${\sim}$-equivalent. The initial state is the set $Q_c$ of $c$-states in $\AA$. For a state $P$ of $\DD_{c,i}$, the $a$-successor of $P$ is $\{\delta(p,a) \mid p \in P \text{ and } \kappa(p,a) > i\}$. Since ${\sim}_\AA$ refines ${\sim}$, the property that all states in $P$ are ${\sim}$-equivalent is preserved by the transitions. The only accepting state of $\DD_{c,i}$ is $\emptyset$.

	The states of $\bow\AA(\family\MM)$ are tuples $(P_0, \ldots, P_{k-1},c)$ where each $P_i$ is a non-accepting state of some $\DD_{c',i}$, and $c$ is the ${\sim}$-class of the input read so far. By construction, all $\AA$-states in $P_i$ are in $Q_c$. Further, in $\DD_{c,{k-1}}$, every transition from the initial state directly leads to the accepting state. So for each $c$ there are at most $(2^d-1)^{k-1}$ many states in $\bow\AA(\family\MM)$, which shows the claimed bound.
\end{proof}

We now present a family of examples showing that the precise DPA can be exponential in the size of a minimal DPA. This is already witnessed by Büchi automata, so DPAs with priorities $\{0,1\}$. The example is taken from \cite[Proposition~14]{BohnL22}.

For a number $d \ge 1$, consider the alphabet $\Sigma_d = \{a_0,\dotsc,a_{d-1}\}$ and let $L_d \subseteq \Sigma_d^\omega$ be the language consisting of all $\omega$-words that contain each symbol from $\Sigma_d$ infinitely often. One easily verifies that ${\sim}_{L_d}$ has only one class.

In a first approach for constructing a DPA for $L_d$, one would build an automaton that tracks precisely which symbols it has seen. This means that the states are subsets of $\Sigma_d$, with $\emptyset$ as initial state, and transitions adding the processed symbol to the current state. Whenever a transition would result in a state for the full set $\Sigma_d$, the DPA emits priority $0$, and resets the state to $\emptyset$. All other transitions have priority $1$.

Indeed, this is what the precise DPA does: The precise FWPM for $L_d$ and ${\sim}_{L_d}$ (consisting only of one weak priority mapping since there is only one class), maps a finite word to priority $0$ if it contains all symbols from $\Sigma_d$, and to $1$ otherwise. The precise DPA resulting from that precise FWPM then behaves exactly as described above. So it has $2^d-1$ states.

However, one can build a DPA with only $d$ states that waits for the next symbol from $\Sigma_d$ for some fixed order on $\Sigma_d$, and resets to the initial state with priority $0$ if it has reached the last symbol in this order.
For the standard order on $\{0,\ldots,d-1\}$ and state set $\{q_0, \ldots, q_{d-1}\}$, the resulting transition function is $\delta(q_h,a_h) = a_{h+1 \mod d}$ with priority $0$ if $h=d-1$ and $1$ otherwise, and  $\delta(q_h,a_{h'}) = q_h$ if $h \not= h'$.

This family of examples might also convey some further intuition why we chose the name ``precise DPA'': It emits the accepting priority $0$ precisely when all symbols have occurred. The small DPA with only $d$ states emits priority $0$ if all symbols have occurred in a specific order, so it would not emit priority $0$ on the finite word $a_2a_1a_0$ (for $d=3$) but on $a_0a_1a_2$. It seems hard to define such a behavior just from the language definition in a canonical way.

The construction of the precise DPA for $L$ sets out a rough road map for the construction of a DPA from a collection $S$ of positive and negative example words: Extract a right congruence ${\sim}$ from $S$ as candidate for ${\sim}_L$ (this is known how to do it). Then extract Mealy machines $\family\MM$ that capture the periodic part of $S$ and ${\sim}$. Construct $\bow\AA(\family\MM)$.

In \cref{forcs} we present some results that enable us to realize the extraction of Mealy machines. The problem with the last step is that the size of $\bow\AA(\family\MM)$ is of order $|\family\MM|^k$, where $k$ is the number of priorities. In a polynomial time procedure we therefore cannot simply construct $\bow\AA(\family\MM)$ from the Mealy machines because this would result in an exponential step.
To overcome this issue, we show in the following that the size of the representation of the color sequence produced by $\bow\AA(\family\MM)$ on an ultimately periodic word $uv^\omega$ is polynomial in $|uv|$ and $|\family\MM|$. This allows us to construct a DPA of polynomial size, which approximates $\bow\AA(\family\MM)$ in the sense that it emits the correct color sequence for a set of ultimately periodic words (the details how it is used are given in \cref{dpalearner}). In the following, we first state an auxiliary lemma.

\begin{lemma}\label{lem:shortacceptedprefix}
	Let $v \in \Sigma^+$.  A DFA $\DD$ with $n$ states accepts a prefix $v^\omega$ from a state $q$ if, and only if, there exists a non-empty $x \in \prf(v^\omega) \cap L(\DD)$ with $|x| \leq n\cdot|v|$.
\end{lemma}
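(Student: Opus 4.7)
The backward direction is immediate by definition, since any non-empty $x \in \prf(v^\omega) \cap L(\DD)$ witnesses that $\DD$ accepts a prefix of $v^\omega$ from $q$. So the real content is the forward direction, which I would prove by a standard pumping-in-the-loop argument tailored to periodic inputs.

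My plan is as follows. Assume $\DD$ accepts some non-empty prefix of $v^\omega$ from $q$, and let $x$ be a shortest such prefix; the goal is to show $|x| \le n\cdot|v|$. Suppose toward a contradiction that $|x| > n\cdot|v|$. Write $x = v^k y$ where $y$ is a (possibly empty) strict prefix of $v$; the length bound then forces $k \ge n$.

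The key step is pigeonhole applied at the "block level". Consider the sequence of states $q_i := \DD(q, v^i)$ for $i = 0, 1, \dotsc, n$. This is a list of $n+1$ states in a DFA with only $n$ states, so there exist $0 \le i < j \le n$ with $q_i = q_j$. Setting $m := j - i \ge 1$, the standard "cut the loop" calculation gives
\[
\DD(q, v^{k-m} y) \;=\; \DD(q_i, v^{k-j}y) \;=\; \DD(q_j, v^{k-j}y) \;=\; \DD(q, v^k y) \;=\; \DD(q,x),
\]
so the shorter word $x' := v^{k-m}y$ is accepted from $q$ as well (note $k - m \ge 0$ since $k \ge n \ge j - i$).

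The only delicate point is ensuring that $x'$ is non-empty, so that it genuinely contradicts the minimality of $x$. The word $x'$ is empty exactly when $k = m$ and $y = \varepsilon$; but then $x = v^m$ has length $m\cdot|v| \le n\cdot|v|$, directly contradicting the assumption $|x| > n\cdot|v|$. In every other case $x'$ is a strictly shorter, non-empty, accepted prefix of $v^\omega$, contradicting minimality of $x$. Either way we reach a contradiction, which concludes the argument. I do not expect any real obstacle beyond this bookkeeping of the empty-word edge case.
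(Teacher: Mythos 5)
Your proof is correct and rests on the same key idea as the paper's: pigeonhole applied to the states $q_i = \DD(q,v^i)$ at block boundaries for $i=0,\dotsc,n$. The paper phrases the conclusion directly (any final state visited in the run on $v^\omega$ is already visited within the first $j\le n$ blocks, since the run repeats from position $j$ onward), whereas you package it as pumping down a shortest accepted prefix; this is only a stylistic difference, and your handling of the empty-word edge case is fine.
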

\begin{proof}
	Consider the run $q = q_0 \xrightarrow{v} q_1 \xrightarrow{v} q_2 \xrightarrow{v} \cdots$ of $\DD$ on $v^\omega$. As $\DD$ has $n$ states, there must be positions $i < j \leq n$ at which a state repeats, i.e.~$q_i = q_j$. Either a final state is visited before position $j$, or the DFA loops on $v^\omega$ without visiting a final state at all.
\end{proof}

Note that the FWPM in the following lemma is not required to be monotonic.
\begin{lemma}\label{lem:polynomialjoincoloring}
	Let $u \in \Sigma^*, v \in \Sigma^+$ and $\overline\MM$ be a family of Mealy machines computing an FWPM $\overline\gamma$.
	The sequence of priorities produced by $\bow\AA(\overline\MM)$ on $uv^\omega$ can be written as an ultimately periodic word $rs^\omega$ where $|rs|$ is polynomial in $|uv|$ and $|\overline\MM|$. Furthermore, $r$ and $s$ can be computed in polynomial time.
\end{lemma}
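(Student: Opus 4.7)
The plan is to simulate $\bow\AA(\overline\MM)$ on the input $uv^\omega$ symbol by symbol. Each state of $\bow\AA(\overline\MM)$ is a tuple $(q_0,\ldots,q_{k-1},c)$ of polynomial size in $|\overline\MM|$: the number of priorities $k$ is bounded by $|\overline\MM|$, and each component $q_i$ ranges over states of a single Mealy machine in $\overline\MM$. Advancing to the next state and reading off the next priority is therefore polynomial per step. The nontrivial content of the lemma is to bound the number of steps before the priority sequence becomes ultimately periodic, so that $r$ and $s$ can be identified by simulation together with cycle detection.

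For the periodic portion $s$, I would fix $i^*$ as the minimum priority emitted infinitely often on $uv^\omega$ and consider the times $t^{(1)} < t^{(2)} < \cdots$ at which $i^*$ is emitted. At each such time, the components for priorities $i^*,\ldots,k-1$ of $\bow\AA(\overline\MM)$ are reset to initial states that are fully determined by the current $\sim$-class $c^{(j)}$. In the stable regime (once no priority strictly below $i^*$ is emitted anymore), the priorities emitted strictly between $t^{(j)}$ and $t^{(j+1)}$ depend only on the pair $(c^{(j)}, t^{(j)} \bmod |v|)$: the components $\geq i^*$ evolve from known initial states, the components $< i^*$ never accept, and the input positions in $v^\omega$ repeat with period $|v|$. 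This pair lies in a set of size at most $|\sim|\cdot|v|$, so by pigeonhole the sequence of such pairs enters a cycle within $|\sim|\cdot|v|$ priority-$i^*$ emissions. By \cref{lem:shortacceptedprefix} applied to the DFA for priority $i^*$, consecutive priority-$i^*$ emission times are at most $|\overline\MM|\cdot|v|$ apart, which yields a polynomial bound on $|s|$.

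For the pre-periodic part $r$, an analogous pigeonhole argument has to be applied inductively for each priority $l < i^*$: were priority $l$ emitted too many times, two emission configurations (in an appropriate projection to the $\sim$-class, phase in $v^\omega$, and states of components that persist across this emission) would coincide, making the subsequent trajectory periodic with $l$ still emitted infinitely often, contradicting $l < i^*$. The main technical obstacle lies here: a priority-$l$ emission resets only the components $\geq l$, so the emission configuration carries arbitrary states of components $< l$; one has to argue that between consecutive resets of any component $l' \leq l$ these lower components evolve deterministically on the periodic input $v^\omega$, and hence the set of configurations of components $< l$ that can actually arise at priority-$l$ emission times is polynomially bounded. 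Once both $|r|$ and $|s|$ have been bounded polynomially, we extract $r$ and $s$ by running the simulation for the resulting polynomial number of steps while detecting the repetition, which is carried out in polynomial time.
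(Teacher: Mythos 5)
Your treatment of the periodic part is sound and close in spirit to the paper's: once the eventually minimal priority $i^*$ dominates, every component with index below $i^*$ is dead, the components with index $\ge i^*$ are freshly reset at each $i^*$-emission, so the future is determined by the pair ($\sim$-class, position modulo $|v|$), and \cref{lem:shortacceptedprefix} bounds the gaps between emissions. The genuine gap is exactly where you place it, and the repair you sketch does not close it. At an emission of priority $l$ the components with index $j<l$ are in states determined by their respective last reset times $\rho_j$ (the last emission of a priority $\le j$), so the configurations that ``actually arise'' are parameterized by tuples $(\rho_0,\dots,\rho_{l-1})$; bounding these by induction over levels yields a \emph{product} of the per-level emission counts, i.e.\ a bound exponential in the number of priorities $k$, and $k$ is not fixed here. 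Determinism of the evolution between resets does not help: it only says the configuration is a function of the reset-time tuple and the current time, which is circular when the quantity to be bounded is precisely the number of emission times.

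The way out --- and the essential content of the paper's argument --- is not to bound the states of the lower components but to make them irrelevant. The paper cuts the run after the \emph{first} emission of the minimal priority $i_h$ occurring on the remaining suffix $w_h$; by definition of $i_h$, no component with index below $i_h$ ever fires again on $w_h$, so at each cut every component that can still fire has just been reset, and the relevant configuration collapses to (class, phase) with no induction over component states. The values $i_0\le i_1\le\cdots$ are non-decreasing; the number of cuts carrying a fixed non-final value $j$ equals the number of $j$-emissions occurring after the last emission of anything smaller, which the same (class, phase) pigeonhole bounds polynomially (a repetition would force $j$ to be emitted forever, contradicting $j<i^*$); and each factor has polynomial length by \cref{lem:shortacceptedprefix}. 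If you want to keep your per-level induction, the correct hypothesis is a polynomial bound on the time $T_{l-1}$ of the last emission of any priority below $l$: the emissions of $l$ before $T_{l-1}$ are trivially at most $T_{l-1}$ in number, and after $T_{l-1}$ the lower components never fire, so your (class, phase) pigeonhole applies verbatim without tracking their states.
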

\begin{proof}
	To simplify notation, we let $\AA := \bow\AA(\overline\MM)$ and use $\gamma$ to refer to the priority mapping computed by $\AA$.  Further, we assume that $uv \mathrel{\sim} u$. Otherwise, we can rewrite the representation of $uv^\omega$ accordingly with only a polynomial blow-up of the representation.

	We decompose $uv^\omega$ into a sequence of non-empty finite factors $x_0,x_1 \cdots$ by making the next cut for each $x_h$ right after the first time $\AA$ emits the smallest priority $i_h$ it will emit on the remaining suffix $w_h$. Formally, we define the sequence inductively with $y_0 := \varepsilon$, $w_0 := uv^\omega$, and if $y_h,w_h$ with $y_hw_h = uv^\omega$ are given, we let $x_h$ be the shortest prefix of $w_h$ such that $\gamma(y_hx_h) \le \gamma(y_hx)$ for all non-empty prefixes $x$ of $w_h$.
	Then let $i_h := \gamma(y_hx_h)$,  $y_{h+1} := y_hx_h$ and $w_{h+1}$ such that $w_h = x_hw_{h+1}$.

	We first explain why $i_h,x_h$ can be computed in polynomial time, and why the length of each $x_h$ is polynomial in $|uv|$ and $|\overline\MM|$.

	By definition, the states of $\AA$ are tuples of the form $(p_{0}, \dotsc, p_{k-1},c)$, where $c$ is a class of ${\sim}$ and the $p_{i}$ are states of DFAs whose size is bounded by the size of the Mealy machines in $\family\MM$. We already know that $\AA$ only emits priorities $\ge i_{h-1}$ on the remaining suffix $w_h$ (letting $i_{-1}= 0$ for $h=0$). And after the prefix $y_h$, all components for priorities $j \ge i_{h-1}$ are set to the initial state of the DFAs for the current ${\sim}$-class $c_h$. To determine $i_h$, we can check in increasing order starting from $j = i_{h-1}$, whether $\DD_{c_h,j}$ accepts a prefix of $w_h$. By  \cref{lem:shortacceptedprefix} this can be done in polynomial time, and if there is such a prefix, its length is bounded by $|u|+ |v||\DD_{c_h,j}|$.

	We now explain why we only need to compute a polynomial number of the factors.

	Since each $i_h$ is the smallest priority that appears on the remaining suffix, we have $i_0 \le i_1 \le \cdots$. So for $h \ge k-1$, all $i_h$ are the same, say $i$.
	Since $\AA$ resets all components for priorities $\ge i$ after emitting $i$, we get for all $h,h' \ge k$ with $w_h = w_{h'}$ that $x_h = x_{h'}$. Furthermore, the sequence of priorities emitted on the factors $x_h$ and $x_{h'}$ is also the same because it only depends on the states in the components for priorities $\ge i$, which are reset each time that $i$ is emitted.

	Since the $w_h$ are suffixes of $uv^\omega$, such a repetition of a suffix happens after at most $|u|+|v|$ steps. But if $w_h = w_{h'}$ and $x_h = x_{h'}$, then also $w_{h+1} = w_{h'+1}$ and thus the sequence of priorities becomes periodic at this point.

	Thus, in order to compute an ultimately periodic representation of the priority sequence of $\AA$ on $uv^\omega$, it suffices to compute the sequence of the $w_h,y_h,x_h,i_h$ up to a point $h_1$ such that there is $h_0 < h_1$ with $w_{h_1} = w_{h_0}$ and $i_{h_1} = i_{h_0}$. The resulting prefix $x_0x_1 \cdots x_{h_1-1}$ is of polynomial length in $|uv|$ and $|\family\MM|$ according to the above explanations. Then we let $r$ be the priority sequence of $\AA$ on $x_0\cdots x_{h_0-1}$ and $s$ the priority sequence of $\AA$ on $x_{h_0}\cdots x_{h_1-1}$. These can be computed on the fly without fully constructing $\AA$, so their computation is polynomial in $|uv|$ and $|\family\MM|$.
\end{proof}

\section{Families of Right Congruences}\label{forcs}
In Section~\ref{preciseDPA} we have seen that we can construct a DPA for $L$ from ${\sim}_L$ or any right congruence ${\sim}$ that refines ${\sim}_L$, and from Mealy machines for the colorings $\family\pi^{\sim} = (\pi^{\sim}_c)_{c \in [{\sim}]}$. Our goal is to learn such Mealy machines from given examples. For this purpose, we explain in this section how to define the Mealy machines using the formalism of families of right congruences (FORCs) \cite[Definition~5]{syntacticcongruence}. We start by giving some intuition, why and how we use this formalism, and then go into the formal details.

The usual way to obtain a method that can construct a class of transition systems in the limit from examples, is to characterize them by a right congruence on finite words, such that non-equivalence of two words is witnessed by a finite set of positive and negative examples for $L$. For example, the non-equivalence of $x,y$ for ${\sim}_L$ is witnessed by two examples $xuv^\omega \in L \Leftrightarrow yuv^\omega \not\in L$. Then an appropriate method can extract the desired transition system if enough non-equivalences are specified by the examples.
We did not manage to characterize the transition structure of a minimal Mealy machine for a coloring $\pi^{\sim}_c$ in such a way. To illustrate this, consider the case that ${\sim} = {\sim}_L$ has only one class $c$, and that only the priorities $0$ and $1$ are used. Write $P_0$ for $P_{c,0}^{\sim}$ and similarly for $P_1$. A minimal Mealy machine for $\pi^{\sim}_c$ consists of an initial part that assigns $1$ to the words in $P_1(L)$, and a sink state that that is reached for all words in $P_0(L)$. So two words $x,y$ lead to different states in this Mealy machine if and only if there is an extension $z$ with ($xz \in P_0(L) \Leftrightarrow yz \not\in P_0(L)$). Substituting the definition of $P_0(L)$, we obtain that $x$ and $y$ lead to a different state if and only if
\(
\exists z \in \Sigma^*:\, (\forall z' \in \Sigma^*:\, (xzz')^\omega \in L) \Leftrightarrow  (\exists z' \in \Sigma^*:\, (yzz')^\omega \not\in L).
\)

This characterization of non-equivalence has a universal quantifier, and thus cannot be witnessed by  a finite set of examples.
However, we can use a condition that is implied by the above one, which means that we characterize a possibly larger transition system. If $x$ and $y$ satisfy the above non-equivalence condition, then, in particular, there are $z,z' \in \Sigma^*$ with $(xzz')^\omega \in L \Leftrightarrow  (yzz')^\omega \not\in L$. Rewriting this with a single word $z$ instead of $zz'$, we get the condition
\(
\exists z \in \Sigma^*:\, (xz)^\omega \in L \Leftrightarrow  (yz)^\omega \not\in L
<\)
for non-equivalence of $x$ and $y$, which is implied by the first condition.
Generalizing this to a right congruence ${\sim}$ with several classes, directly leads to the notion of a canonical FORC for ${\sim}$, as defined below.

A \emph{family of right congruences (FORC)} \cite{syntacticcongruence} is a tuple $\FF = ({\sim}, (\approx_c)_{c \in \classes {\sim}})$ such that ${\sim}$ is a right congruence over $\Sigma^*$, each $\approx_c$ for $c \in \classes {\sim}$ is a right congruence over $\Sigma^+$, and for all $x,y \in \Sigma^*$ and $c \in \classes {\sim}$ it holds that $x \approx_c y$ implies $ux \mathrel{\sim} uy$, for an arbitrary element $u \in c$. Instead of writing $\approx_c$ for a class $c$, we sometimes also write $\approx_u$ for an arbitrary word $u \in c$. We extend $\approx_c$ from $\Sigma^+$ to $\Sigma^*$ by adding a new class that contains only $\varepsilon$.

Adopting the terminology of \cite{Klarlund94,AngluinF16}, we call ${\sim}$ the \emph{leading right congruence} (LRC) of $\FF$, and each $\approx_c$ the \emph{progress right congruence} (PRC) for $c$. FORCs can be used as acceptors for $\omega$-languages (see \cite[Definition~6]{syntacticcongruence}, \cite[Definition of LFORC]{Klarlund94}, \cite[Definition~6]{AngluinF16})
but since this is not relevant for our results, we do not go into the details of it.
We are interested in the \emph{canonical FORC} for $L \subseteq \Sigma^\omega$ and an LRC ${\sim}$ that refines ${\sim}_L$. For $u \in \Sigma^*$, we define the \emph{canonical PRC} ${\simeq}_u$ for $L$ and ${\sim}$ by
\[
	x {\simeq}_u y \text{ iff } ux \mathrel{\sim} uy \text{ and } \forall z \in \Sigma^*:\, uxz \mathrel{\sim} u \Rightarrow (u(xz)^\omega \in L \Leftrightarrow u(yz)^\omega \in L).
\]
The \emph{syntactic FORC} of $L$ \cite[Definition~9]{syntacticcongruence} is the canonical FORC for $L$ and ${\sim}_L$. As a running example, consider the language $L$ from \cref{ex:pisets}. Then ${\sim}_L$ has only one class, and the PRC for this class is shown in \cref{exampleforc}. The transition structure of the Mealy machine in \cref{fig:FWPM-Mealy} is obtained by merging some of the classes in the PRC.
\begin{figure}
	\centering
	\includegraphics[width=0.7\textwidth]{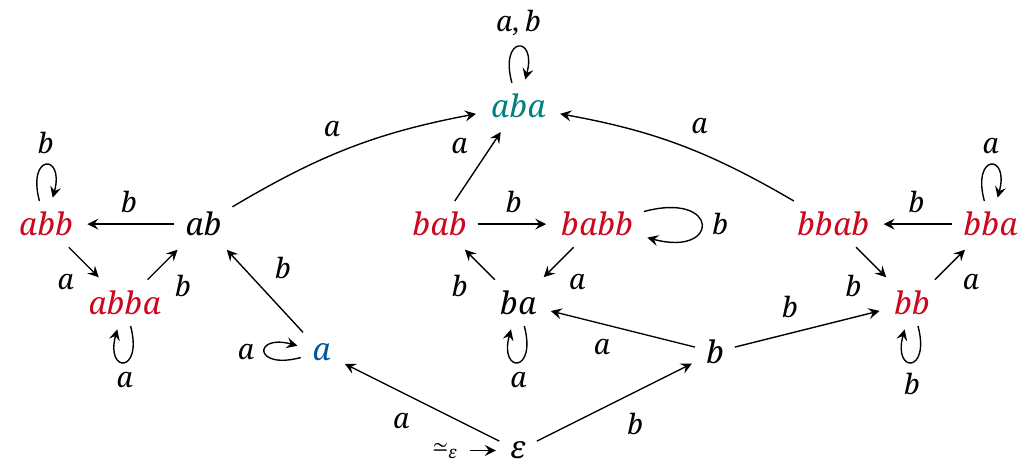}
	\caption{The PRC ${\simeq}_\varepsilon$ of the syntactic FORC for the language from \cref{ex:pisets}. The colors code the idempotent classes that are positive ($aba$ and $a$) and negative.}
	\label{exampleforc}
\end{figure}

The syntactic FORC is a canonical object for representing regular $\omega$-languages: In \cite[Theorem~22 and Theorem~24]{syntacticcongruence} it is shown that $L \subseteq \Sigma^\omega$ is regular if and only if its syntactic FORC is finite, and the syntactic FORC is the coarsest FORC with ${\sim}_L$ as leading right congruence that recognizes $L$.

We define the size of a FORC as the sum of the sizes of the LRC and the PRC.
Note that if the syntactic FORC of $L$ is finite, then also the canonical PRCs for each ${\sim}$ that refines $L$ are finite, as a direct consequence of the definition of the canonical FORC for ${\sim}$.

We now explain how we can use the PRCs ${\simeq}_c$ of the canonical FORC for $L$ and ${\sim}$ in order to obtain a Mealy machine for the priority mappings $\pi_c^{\sim}$. So in the following, let $L \subseteq \Sigma^\omega$ be regular with parity complexity $k$, let ${\sim}$ be a refinement of ${\sim}_L$, and let ${\simeq}_c$ be the canonical PRC for each class $c$ of ${\sim}$.
We call a class $[x]_{{\simeq}_c}$ with $x \in E_c$ \emph{positive} if $x^\omega \in L_c$, and \emph{negative} otherwise (by the definition of ${\simeq}_c$ this is independent of the chosen word from the class). In the example from \cref{exampleforc}, the positive classes are the ones of $a$, $ab$, $ba$, and $aba$.

\phantomsection\label{kappacoloring}
We define a coloring $\kappa_c$ on the classes of ${\simeq}_c$ (so on the states of the transition system defined by ${\simeq}_c$) such that the Mealy machine that outputs the color of its target state on each transition defines $\pi_c^{\sim}$ (see \cref{PRCcoloring}).
For this coloring, idempotent classes of ${\simeq}_c$ play a central role. We call a word $x \in \Sigma^+$ \emph{idempotent} in ${\simeq}_c$ if $x \in E_c$ and $x \mathrel{{\simeq}_c} xx$.
The following lemma states two useful properties on idempotent words in ${\simeq}_c$.
\begin{lemma}\label{lem:idempotent}
	If $x \in \Sigma^+$ is idempotent in ${\simeq}_c$, then each $y \in \Sigma^+$ with $y \mathrel{{\simeq}_c} x$ is idempotent in ${\simeq}_c$. Furthermore, for each $x \in E_c$, there is a number $n \ge 1$ such that $x^n$ is idempotent in ${\simeq}_c$.
\end{lemma}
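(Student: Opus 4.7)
\textbf{The plan} is to first establish an auxiliary \emph{left-compatibility} property of ${\simeq}_c$, which the definition does not provide for free, and then derive both parts from it. Fix $u \in c$ throughout.

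\emph{Auxiliary claim:} If $z \in \Sigma^*$ satisfies $uz \sim u$, and $x \simeq_c y$ with $x,y \in \Sigma^+$, then $zx \simeq_c zy$. The two clauses in the definition of ${\simeq}_c$ must be checked for $zx$ and $zy$. The first clause, $u(zx) \sim u(zy)$, follows by combining $uz \sim u$ (with the right-congruence of $\sim$) with $ux \sim uy$ coming from $x \simeq_c y$. For the $\omega$-iteration clause, let $z' \in \Sigma^*$ satisfy $u(zx)z' \sim u$. The key step is the rotation identity $(zxz')^\omega = z(xz'z)^\omega$, which yields $u(zxz')^\omega = uz \cdot (xz'z)^\omega$. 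Since $uz \sim u$ and $\sim$ refines ${\sim}_L$, this is in $L$ iff $u(xz'z)^\omega \in L$, and analogously for $y$. So the question reduces to $u(xz'z)^\omega \in L \iff u(yz'z)^\omega \in L$, which is exactly an application of $x \simeq_c y$ with extension $z'z$; the required side condition $ux(z'z) \sim u$ follows from $uzxz' \sim u$ together with $uz \sim u$.

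\emph{Part (1):} From $x \in E_c$ and $y \simeq_c x$ we get $uy \sim ux \sim u$, so $y \in E_c$. Right-congruence of ${\simeq}_c$ applied to $x \simeq_c y$ (appending $x$) gives $xx \simeq_c yx$. The auxiliary claim, instantiated with $z := y$ (valid since $uy \sim u$) and with $x \simeq_c y$, yields $yx \simeq_c yy$. Chaining with the idempotence $x \simeq_c xx$ and with $x \simeq_c y$, transitivity of ${\simeq}_c$ gives $y \simeq_c x \simeq_c xx \simeq_c yx \simeq_c yy$, so $y \simeq_c yy$, i.e., $y$ is idempotent.

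\emph{Part (2):} A standard pigeonhole/finite-semigroup argument suffices. Since ${\simeq}_c$ has finite index, the sequence $x, x^2, x^3, \ldots$ takes only finitely many values, so there exist $i < j$ with $x^i \simeq_c x^j$. Right-congruence of ${\simeq}_c$ propagates this to $x^{i+k} \simeq_c x^{j+k}$ for every $k \geq 0$. Setting $d := j - i$ and picking any multiple $n$ of $d$ with $n \geq i$ (say $n = id$), one iterates $x^n \simeq_c x^{n+d} \simeq_c \cdots \simeq_c x^{n+n} = x^{2n}$, exhibiting $x^n$ as idempotent; clearly $x^n \in E_c$ because $x \in E_c$.

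\emph{The main obstacle} is the auxiliary claim: a generic right congruence on $\Sigma^+$ need not respect left multiplication, and in that general setting equivalence to an idempotent can fail to imply idempotence. What rescues the argument is the specific structure of the canonical PRC---the $\omega$-iteration clause, combined with the fact that $\sim$ refines ${\sim}_L$, lets us cancel a prefix $z$ with $uz \sim u$ inside the $\omega$-word via the rotation identity, reducing a left multiplication to an instance of ${\simeq}_c$ itself.
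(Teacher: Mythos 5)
Your proof is correct. For the second claim it coincides with the paper's argument (pigeonhole on the powers $x, x^2, \dots$, then propagating $x^{n_1} \simeq_c x^{n_1+n_2}$ by right-congruence to show $x^{n_1 n_2} \simeq_c x^{2n_1 n_2}$; your $i,d$ play the roles of $n_1,n_2$). For the first claim your organization is genuinely different: you isolate a left-compatibility lemma --- if $uz \sim u$ and $x \simeq_c y$ then $zx \simeq_c zy$ --- and then obtain $y \simeq_c yy$ by pure congruence algebra, chaining $y \simeq_c x \simeq_c xx \simeq_c yx \simeq_c yy$ via right-congruence and the auxiliary claim. The paper instead verifies the defining condition of $y \simeq_c yy$ head-on, by a single chain of six membership equivalences $u(yz)^\omega \in L \Leftrightarrow \cdots \Leftrightarrow u(yyz)^\omega \in L$. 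Both arguments rest on the same two mechanisms --- the rotation identity $(zxz')^\omega = z(xz'z)^\omega$ together with the fact that a prefix $z$ with $uz \sim u$ can be cancelled because $\sim$ refines $\sim_L$ --- so neither is more elementary; but your version buys a reusable structural fact (the canonical PRC is compatible with left multiplication by words looping on $c$), which makes the idempotence transfer a one-line algebraic consequence, at the cost of an extra lemma. Your explanation of why this left-compatibility is the crux (a generic right congruence on $\Sigma^+$ need not have it) is exactly the right diagnosis of where the specific $\omega$-iteration clause of the canonical PRC is needed.
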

\begin{proof}
	We begin with the first claim.
	Let $u \in c$. First note that $y \in E_c$ because $x \in E_c$ and $y \mathrel{{\simeq}_c} x$ implies $ux \sim uy$. For showing $y \mathrel{{\simeq}_c} yy$, consider $z \in \Sigma^+$ with $uyz \sim u$. Then
	\begin{alignat*}{5}
		 & u(yz)^\omega \in L\quad &  & \Leftrightarrow \quad &  & u(xz)^\omega   &  & \in L  \quad\qquad &  & \text{because $y \mathrel{{\simeq}_c} x$}                        \\
		 &                         &  & \Leftrightarrow       &  & u(xxz)^\omega  &  & \in L              &  & \text{as $x \in E_c$ is idempotent}                              \\
		 &                         &  & \Leftrightarrow       &  & u(yxz)^\omega  &  & \in L              &  & \text{$y \mathrel{{\simeq}_c} x$}                                \\
		 &                         &  & \Leftrightarrow       &  & uy(xzy)^\omega &  & \in L              &  & \text{$uy \mathrel{\sim} u$ and ${\sim}$ refines ${\sim}_L$}     \\
		 &                         &  & \Leftrightarrow       &  & uy(yzy)^\omega &  & \in L              &  & \text{$y \mathrel{{\simeq}_c} x$}                                \\
		 &                         &  & \Leftrightarrow       &  & u(yyz)^\omega  &  & \in L              &  & \text{as $uy \mathrel{\sim} u$ and ${\sim}$ refines ${\sim}_L$}.
	\end{alignat*}

	We now show the second claim.
	Since ${\simeq}_c$ has finitely many classes, there are $n_1,n_2 \ge 1$ such that $x^{n_1} \mathrel{{\simeq}_c} x^{n_1+n_2}$. Then it is not hard to see that for $n:=n_1n_2$ we have $x^n \mathrel{\simeq} x^{2n}$.
\end{proof}
Accordingly, we call a class of ${\simeq}_c$ idempotent if one (and hence all) of its members are idempotent. In the example PRC in \cref{exampleforc}, the only non-idempotent classes are the classes of $b$, $ab$ and $ba$.

The algorithm for computing $\kappa_c$ considers positive and negative idempotent classes (the classification of non-idempotent classes as positive or negative does not play any role). In the example from \cref{exampleforc}, the positive idempotent classes are the ones of $a$ and $aba$. The other two positive classes $ab$ and $ba$ are not idempotent.

Now let $\kappa_c: [{\simeq}_c] \rightarrow \{0,\ldots,k\}$ be defined by
\begin{itemize}
	\item $\kappa_c([x]_{{\simeq}_c}) = 0$ if there is no $z$ such that $[xz]_{{\simeq}_c}$ is a negative idempotent class. In other words, if no negative idempotent class is reachable from $x$.
	\item Let $i \ge 1$ and assume that $\kappa_c$ has already been defined for all values $\{0,\ldots,i-1\}$. Then $\kappa_c([x]_{{\simeq}_c}) = i$ for $i \ge 1$ if each idempotent class of the form $[xz]_{{\simeq}_c}$
	      \begin{itemize}
		      \item either already has a $\kappa_c$ value less than $i$,
		      \item or is positive iff $i$ is even.
	      \end{itemize}
\end{itemize}
In the example from \cref{exampleforc}, the class of $aba$ is assigned $0$. Then all remaining classes of words that contain $b$ are assigned value $1$, and finally, the class of $a$ is assigned $2$.

The following lemma shows that this indeed gives the desired coloring of $\Sigma^+$. In particular, all classes of ${\simeq}_c$ are assigned a value by the above procedure.
\begin{lemma}\label{PRCcoloring}
	For all $x \in \Sigma^+$: $\kappa_c(x) = i$ if and only if $\pi_c^{\sim}(x) = i$.
\end{lemma}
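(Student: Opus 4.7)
The plan is to proceed by strong induction on $i$, showing simultaneously for every $x \in \Sigma^+$ that $\kappa_c([x]_{{\simeq}_c}) = i$ if and only if $\pi_c^{\sim}(x) = i$. The entire argument rests on one bridging observation, which I would establish first: by the second part of \cref{lem:idempotent}, for every $z \in x^{-1}E_c$ there is an $n \ge 1$ such that $(xz)^n$ is idempotent in ${\simeq}_c$, and every idempotent class reachable from $[x]_{{\simeq}_c}$ arises in this way. Since $(xz)^\omega = ((xz)^n)^\omega$, the universal quantifier ``$\forall z \in x^{-1}E_c$'' in the definition of $P_{c,i}^{\sim}(L)$ can be restricted, without loss, to those $z$ for which $xz$ is idempotent—exactly the situation inspected by $\kappa_c$ via reachable idempotent classes.

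For the base case $i=0$, an idempotent class $[xz]_{{\simeq}_c}$ is positive iff $(xz)^\omega \in L_c$ (note by the first part of \cref{lem:idempotent}, this does not depend on the representative). So $\kappa_c([x]) = 0$ means every $\omega$-iteration of an idempotent extension lies in $L_c$, which, after the reduction above, is precisely the condition $x \in P_{c,0}^{\sim}(L)$, i.e.\ $\pi_c^{\sim}(x) = 0$.

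For the inductive step, I would prove both inequalities. For $\kappa_c([x]) = i \Rightarrow \pi_c^{\sim}(x) \le i$: consider any idempotent class $[xz]_{{\simeq}_c}$ reachable from $[x]$; by the definition of $\kappa_c$, either $\kappa_c([xz]) < i$, which by the induction hypothesis gives $\pi_c^{\sim}(xz) < i$ and hence $(xz)^\omega \in P_{c,i-1}^{\sim}(L)\Sigma^\omega$, or $[xz]$ is positive iff $i$ is even, matching the second disjunct in the definition of $P_{c,i}^{\sim}(L)$. Conversely, for $\pi_c^{\sim}(x) = i \Rightarrow \kappa_c([x]) \le i$: for each idempotent extension $xz$, the defining property of $P_{c,i}^{\sim}(L)$ supplies either a prefix of $(xz)^\omega$ whose $\pi_c^{\sim}$-value is $< i$, or the parity matching. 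In the first case, picking $m$ so that this prefix is contained in $(xz)^m$ and using weakness of $\pi_c^{\sim}$ yields $\pi_c^{\sim}((xz)^m) < i$; because $xz$ is idempotent, $(xz)^m \mathrel{{\simeq}_c} xz$, so the induction hypothesis transfers this to $\kappa_c([xz]) < i$.

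The matching lower bounds $\kappa_c([x]) \ge i$ and $\pi_c^{\sim}(x) \ge i$ follow contrapositively: any strict lower value on one side would, by the induction hypothesis for the smaller index, force the same value on the other side, contradicting the assumption. The main obstacle I anticipate is the careful handling of the switch between ``arbitrary $z \in x^{-1}E_c$'' in the parity decomposition and ``idempotent class reachable from $[x]$'' in $\kappa_c$; the reduction above shows no information is lost because $(xz)^\omega$ is unchanged under replacing $xz$ by any of its powers, and weakness of $\pi_c^{\sim}$ together with idempotence lets any priority bound witnessed by a prefix of $(xz)^\omega$ propagate to a representative of $[xz]_{{\simeq}_c}$ itself.
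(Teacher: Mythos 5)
Your proof is correct and follows essentially the same route as the paper's: induction on $i$, using \cref{lem:idempotent} to pass between arbitrary extensions $z \in x^{-1}E_c$ and idempotent classes via powers $(xz)^n$, and weakness of $\pi_c^{\sim}$ together with $(xz)^m \mathrel{{\simeq}_c} xz$ to transfer priority bounds. The only difference is presentational — you factor the quantifier restriction to idempotent extensions out as an explicit preliminary reduction, where the paper interleaves it into the two directions of the inductive step.
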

\begin{proof}
	The proof goes by induction on $i$. For the base case, note that
	$x \in P_{c,0}^{\sim}(L)$ if and only if $(xz)^\omega \in L_c$ for all $z \in \Sigma^*$ with $xz \in E_c$. In particular, this means that all idempotent classes of ${\simeq}_c$ that are reachable from the class of $x$, and hence contain a word of the form $xz$, must be positive. So $\kappa_c$ assigns $0$ to all classes $[x]_{{\simeq}_c}$ with $\pi_c^{\sim}(x) = 0$. Vice versa, if no negative idempotent classes are reachable from the class of $x$, then $(xz)^\omega \in L_c$ for all $z \in \Sigma^*$ with $xz \in E_c$ (if $(xz)^\omega \not\in L_c$, then it has an idempotent $(xz)^n$ as prefix by \cref{lem:idempotent}, which would then be negative).

	Now let $i \ge 1$ and assume by induction that $\kappa_c$ and $\pi_c^{\sim}$ coincide for all values less than~$i$. The proof is a simple reasoning using the definitions of $\kappa_c$ and $\pi_c^{\sim}$.

	Let $x \in \Sigma^+$. If  $\pi_c^{\sim}(x) = i$, then we already know by induction that $\kappa_c(x) \ge i$. We show that  $\kappa_c(x) = i$.
	Consider an idempotent class that is reachable from the class of $x$. This class is of the form $[xz]_{{\simeq}_c}$ for some $z$ with $xz \in E_c$.
	By definition, $\pi_c^{\sim}(x) = i$ implies that for each $z \in \Sigma^*$ with $xz \in E_c$, either $(xz)^\omega$ has a prefix in $P_{c,i-1}^{\sim}(L)$, or ($(xz)^\omega \in L_c$ if and only if $i$ is even).
	In the first case, if $(xz)^\omega$ has a prefix in $P_{c,i-1}^{\sim}(L)$, then there is an $n$ such that $\pi_c^{\sim}((xz)^n) \le i-1$ (because  $\pi_c^{\sim}$ is weak).
	Then by induction also $\kappa_c((xz)^n) \le i-1$.  Since $xz$ is idempotent, we have $xz \mathrel{{\simeq}_c} (xz)^n$, and thus also $\kappa_c(xz) \le i-1$. In the second case, ($(xz)^\omega \in L_c$ if and only if $i$ is even), the class of $xz$ is positive if and only if $i$ is even. Hence, each idempotent class that is reachable from $x$ either has a $\kappa_c$ value smaller than $i$ or is positive if and only if $i$ is even. Hence $\kappa_c(x) = i$.

	For the other direction, assume that $\kappa_c(x) = i$ and show that $\pi_c^{\sim}(x) = i$. By induction, $\pi_c^{\sim}(x) \ge i$, which means that $x \notin P_{c,i-1}^{\sim}$. Let $z \in \Sigma^*$ be such that $xz \in E_c$. Let $n$ be such that the class of $(xz)^n$ is idempotent (by \cref{lem:idempotent}). Since $\kappa_c(x) = i$, the class of $(xz)^n$ has a $\kappa_c$-value less than $i$, or it is positive if and only if $i$ is even.
	In the first case, we get by induction that $(xz)^n \in P_{c,i-1}^{\sim}$. In the second case, $((xz)^n)^\omega = (xz)^\omega \in L_c$ if and only if $i$ is even. So $x \in P_{c,i}^{\sim}$ and hence $\pi_c^{\sim}(x) = i$.
\end{proof}

If there are a negative and a positive idempotent class in the same SCC of a PRC, then none of them will satisfy the conditions for being assigned value $i$ by $\kappa_c$. This contradicts the fact that every class is assigned a value as a consequence of \cref{PRCcoloring}. We use this structural property of the PRCs in \cref{dpalearner}, and hence state it as a lemma.
\begin{lemma} \label{pureSCC}
	If two idempotent classes in a PRC of a canonical FORC are in the same SCC, then both are positive or both are negative. Furthermore, two classes in the same SCC of ${\simeq}_c$ have the same $\kappa_c$-value.
\end{lemma}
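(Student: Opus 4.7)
The plan is to prove both parts by unpacking the inductive definition of $\kappa_c$, using the fact (from \cref{PRCcoloring}) that $\kappa_c$ assigns a value to every class of ${\simeq}_c$. The key observation is that whether $\kappa_c([w]) = i$ holds depends only on the idempotent classes reachable from $[w]$ in the transition system of ${\simeq}_c$ (i.e.\ on the set of $[wz]_{{\simeq}_c}$), together with their already-assigned $\kappa_c$-values and polarities. In particular, $[w]$ itself is always among the classes reachable from $[w]$ (take $z = \varepsilon$), so if $[w]$ is idempotent it is constrained by the clause for its own value; this reflexive reachability is the main hinge of the argument.

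For the first claim I would argue by contradiction. Assume idempotent classes $[x]$ and $[y]$ lie in the same SCC with $[x]$ positive and $[y]$ negative, and set $i_x = \kappa_c([x])$ and $i_y = \kappa_c([y])$. Applying the clause for $i_x$ to $[x]$ itself: the ``already assigned a value less than $i_x$'' option is unavailable because $i_x$ is the value currently being assigned to $[x]$, so the parity option must hold, forcing $i_x$ to be even (since $[x]$ is positive). By a symmetric argument $i_y$ is odd. Now apply the clause for $i_x$ to the idempotent class $[y]$, which is reachable from $[x]$ because they share an SCC. The parity alternative fails ($[y]$ negative and $i_x$ even), so $i_y < i_x$; exchanging the roles of $[x]$ and $[y]$ gives $i_x < i_y$, a contradiction.

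For the second claim, the observation is that two classes $[w_1]$ and $[w_2]$ in the same SCC of ${\simeq}_c$ reach exactly the same set of idempotent classes: any $[w_1 z]$ can be reached from $[w_2]$ by first traversing the SCC from $[w_2]$ to $[w_1]$, and symmetrically. Since the defining clauses of $\kappa_c([w])$ depend only on these reachable idempotents, their polarities, and the $\kappa_c$-values assigned in earlier stages of the induction on $i$, a straightforward induction on $i$ shows $\kappa_c([w_1]) \le i$ iff $\kappa_c([w_2]) \le i$, and hence $\kappa_c([w_1]) = \kappa_c([w_2])$.
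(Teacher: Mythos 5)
Your argument is correct and follows essentially the same route as the paper, which justifies the lemma only informally in the paragraph preceding it: a positive and a negative idempotent class in one SCC would block each other from ever receiving a $\kappa_c$-value, contradicting the totality of $\kappa_c$ that follows from \cref{PRCcoloring}. Your version merely reorganizes this as a direct clash between the two assigned values ($i_y < i_x$ and $i_x < i_y$), and your treatment of the second claim via equality of the reachable idempotent sets is the intended (unstated) justification.
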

As a consequence of \cref{PRCcoloring}, we obtain a construction from canonical FORCs to DPAs that is only exponential in the parity complexity of the language, improving the upper bound from \cite[Proposition~5.9]{AngluinBF16} that is exponential in the size of the FORC (the statement in \cite{AngluinBF16} is about saturated families of DFAs, which are basically refinements of canonical FORCs up to some small details).
\begin{theorem}\label{FORCtoDPA}
	Let $L$ be a regular $\omega$-language of parity complexity $k$, ${\sim}$ be a right congruence that refines ${\sim}_L$, and $({\sim},\familyfor{\approx}{{\sim}})$ be the canonical FORC for $L$ and ${\sim}$. Then the precise DPA $\mc{A}_L^{\sim}$ for $L$ and ${\sim}$ has at most $n^k$ states, where $n$ is the size of the FORC.
\end{theorem}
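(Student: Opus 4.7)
The plan is to construct an explicit DPA for \(L\) via the \(\bow\AA(\family\MM)\) construction from \cref{preciseDPA}, where the family \(\family\MM\) is built directly from the canonical FORC, and then to bound its size by \(n^k\).

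First, I would turn the FORC into a family \(\family\MM\) of Mealy machines. By \cref{PRCcoloring}, the coloring \(\kappa_c\) on the classes of \({\simeq}_c\) agrees pointwise with \(\pi_c^{\sim}\), so labeling each transition of the transition system of \({\simeq}_c\) with the \(\kappa_c\)-value of its target yields a Mealy machine \(\MM_c\) of size \(|{\simeq}_c|\) computing \(\pi_c^{\sim}\). By \cref{lem:precisecoloringmonotonic} the resulting family represents a monotonic FWPM, and by \cref{lem:capperiodicpart} it captures the periodic part of \((L,{\sim})\). Combining \cref{lem:bowacomputesjoin} with \cref{lem:joinlanguage} then shows that \(\bow\AA(\family\MM)\) is a DPA for \(L\) with priority mapping \(\bow\pi^{\sim}\); since \(\AA_{L,{\sim}}\) is defined as the minimal such DPA, this gives \(|\AA_{L,{\sim}}| \le |\bow\AA(\family\MM)|\).

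Second, I would count the states of \(\bow\AA(\family\MM)\). Its state set is \(Q_0 \times \cdots \times Q_{k-1} \times Q_{\sim}\), and the DFA \(\DD_{c,i}\) extracted from \(\MM_c\) has at most \(|{\simeq}_c|\) non-final states (the \({\simeq}_c\)-classes with \(\kappa_c\)-value exceeding \(i\)), so \(|Q_i| \le \sum_c |{\simeq}_c|\). The crucial observation is that \(\DD_{c,k-1}\) recognizes all of \(\Sigma^+\), so its only non-final state is its initial state; moreover, the transition rule of \(\bow\AA\) always resets the top component (because the reset index \(r\) never exceeds \(k-1\)), so this component always equals \(\iota_{c,k-1}\) for the current LRC class \(c\). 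It is therefore redundant with the \(Q_\sim\)-component and can be dropped during minimization. The remaining state count is at most \(|{\sim}|\cdot \bigl(\sum_c |{\simeq}_c|\bigr)^{k-1}\), and writing \(a = |{\sim}|\) and \(b = \sum_c |{\simeq}_c|\) we have \(a \cdot b^{k-1} \le (a+b)^{k-1}\cdot (a+b) = n^k\), which gives the claim.

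The main obstacle is justifying the redundancy of the top component: one has to read off from the transition rule of \(\bow\AA\) that the \((k-1)\)-th component is reset on every step and hence always equals \(\iota_{c,k-1}\) for the current class \(c\). Without this observation the naive product bound only yields \(n^{k+1}\); exploiting it sharpens the bound to \(n^k\).
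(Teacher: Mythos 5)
Your proposal is correct and follows the paper's proof almost exactly: build the family $\family\MM$ of Mealy machines from the $\kappa_c$-colored PRCs (justified by \cref{PRCcoloring}), form $\bow\AA(\family\MM)$, invoke \cref{lem:bowacomputesjoin} and \cref{lem:joinlanguage}, and count the product states. The only point of divergence is how the extra factor beyond $n^k$ is eliminated: the paper drops the ${\sim}$-component of the product, arguing it is determined by the priority components because each ${\simeq}_c$ refines ${\sim}$, whereas you keep the ${\sim}$-component and instead drop the component for priority $k-1$, observing that $\DD_{c,k-1}$ accepts all of $\Sigma^+$ so that component is reset on every transition and hence determined by the current ${\sim}$-class. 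Your variant is arguably the cleaner of the two, since it reads the redundancy directly off the transition rule of $\bow\AA$ and does not need to argue that the ${\sim}$-class survives the minimization of the $\DD_{c,i}$; both yield the bound $n^k$.
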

\begin{proof}
	The Mealy machine obtained from the PRC ${\simeq}_c$ by using the transition structure of ${\simeq}_c$ and assigning the $\kappa_c$-value of the target state to a transition computes $\pi_c^{\sim}$ according to  \cref{PRCcoloring}. Denote the resulting family of Mealy machines by $\family\MM$. Then $\bow\AA(\family\MM)$ computes the priority mapping $\bow\pi^{\sim}$ and thus accepts $L$ by \cref{lem:joinlanguage}. Since each ${\simeq}_c$ refines ${\sim}$ by definition, the component for ${\sim}$ in the construction of $\bow\AA(\family\MM)$ is not required (and even if it is used it does not blow up the state space).
	The states used for the component of priority $i$ in $\bow\AA(\family\MM)$ is the sum of the states of the $\DD_{c,i}$, each of which has size bounded by the size of ${\simeq}_c$. So the number of states used for the component of priority $i$ is at most $n$.
	Hence, the size of $\bow\AA(\family\MM)$ is bounded by $n^k$.
\end{proof}

Our completeness result for the learning algorithm uses the fact that we can infer the syntactic FORC of a language if the sample contains enough information. For bounding the number of required examples, we give an upper bound on the size of the syntactic FORC for $L$ in the size of a DPA for $L$.
\begin{proposition} \label{FORCsize}
	Let $\AA$ be a normalized DPA with priorities $\{0,\dotsc,k-1\}$ recognizing the language $L \subseteq \Sigma^\omega$ and ${\sim}$ be a right congruence with ${\sim}_\AA \refine {\sim} \refine {\sim}_L$. Let $m$ be the number of ${\sim}$-classes, and $d$ such that for each class $c$ of ${\sim}$ there are at most $d$ many $c$-states in $\AA$. Then for the canonical FORC of $L$ with LRC ${\sim}$, the size of each PRC is at most $m(dk)^d$.
\end{proposition}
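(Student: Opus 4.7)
The plan is to leverage the Mealy machine $\MM_c$ constructed in the proof of \cref{mealysize}, which computes $\pi_c^{\sim}$ and has at most $m(dk)^d$ states. The map sending $x \in \Sigma^*$ to the state reached by $\MM_c$ on $x$ is a total function whose image has size at most $m(dk)^d$, so if I can show that two words reaching the same state of $\MM_c$ are ${\simeq}_c$-equivalent, the bound on $|{\simeq}_c|$ follows immediately.

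Recall that a state of $\MM_c$ is a pair $(\tau, \mu)$, and after reading $x \in \Sigma^+$ from the initial state we have $\tau_x(q) = \delta^*(q, x)$ and $\mu_x(q) = \Amin(q, x)$ for each $c$-state $q$ of $\AA$. So the task reduces to the following claim: if $(\tau_x, \mu_x) = (\tau_y, \mu_y)$ for $x, y \in \Sigma^+$, then $x \mathrel{{\simeq}_c} y$. The first condition, $ux \sim uy$ for $u \in c$, is immediate: applying $\tau_x = \tau_y$ at $q_u := \delta^*(\iota, u)$ (which is a $c$-state because $\sim_\AA \refine {\sim}$) yields $\delta^*(\iota, ux) = \delta^*(\iota, uy)$, hence $ux \sim_\AA uy$ and therefore $ux \sim uy$.

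The second condition requires more care. Given any $z$ with $uxz \sim u$, I would set $q_0 := q_u$ and $q_{j+1} := \delta^*(q_j, xz)$; the hypothesis $uxz \sim u$ together with $\sim_\AA \refine {\sim}$ keeps every $q_j$ in $Q_c$. Combining $\tau_x(q_j) = \tau_y(q_j)$ with $\mu_x(q_j) = \mu_y(q_j)$ gives simultaneously $\delta^*(q_j, xz) = \delta^*(q_j, yz)$ and $\Amin(q_j, xz) = \Amin(q_j, yz)$, so by induction on $j$ the run of $\AA$ on $u(yz)^\omega$ visits precisely the states $q_0, q_1, q_2, \ldots$ at the iteration boundaries and sees the same minimum priority within each block. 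Because $Q_c$ is finite, the sequence $\{q_j\}$ is eventually periodic, and the minimum priority seen infinitely often on $u(xz)^\omega$ and $u(yz)^\omega$ is identical, yielding acceptance equivalence.

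The main obstacle I anticipate is exactly this second condition: one must bridge agreement on the finite-step invariants $\tau_x, \mu_x$ to agreement on infinite runs of $\AA$. The key insight that makes the bridge go through is that the assumption $uxz \sim u$ forces every iteration boundary $q_j$ to lie in $Q_c$, which is precisely the domain on which $\tau_x$ and $\mu_x$ record information; outside $Q_c$ the Mealy machine gives no guarantees, but the periodicity argument never leaves $Q_c$.
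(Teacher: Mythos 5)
Your proposal is correct and follows essentially the same route as the paper: the pair $(\tau_x,\mu_x)$ you extract from the Mealy machine $\MM_c$ of \cref{mealysize} is exactly the function $\Delta_{c,x}$ that the paper defines directly, and the block-wise comparison of the runs on $u(xz)^\omega$ and $u(yz)^\omega$ (same boundary states in $Q_c$, same minimal priority per block, eventual periodicity) is the paper's argument verbatim. The only cosmetic difference is that you derive the condition $ux\sim uy$ from $\tau_x=\tau_y$ rather than assuming it separately, and you route the state count through \cref{mealysize} instead of recounting the functions.
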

\begin{proof}
	Let $\mathcal{A} = (Q,\Sigma,q_0,\delta,\kappa)$, let $c$ be a class of ${\sim}$, and let ${\simeq}_c$ be the canonical PRC for $c$. Let $Q_c \subseteq Q$ be the set of $c$-states of $\AA$. Then by assumption $|Q_c|\le d$. For $x \in \Sigma^*$ define the function $\Delta_{c,x}: Q_c \rightarrow Q \times \{0,\ldots,k-1\}$ by induction on the length of $x$ by $\Delta_{c,\varepsilon}(q) = (q,k-1)$ and for $x = x'a$ with $\Delta_{c,x'}(q) = (q',i)$ let $\Delta_{c,x'a}(q) = (\delta(q',a),\min(i,\kappa(q,a)))$.

	If two words $x,y \in \Sigma^+$ with $ux \mathrel{\sim} uy$ define the same function $\Delta_{c,x} = \Delta_{c,y}$, then $x \mathrel{{\simeq}_c} y$. To see this, let $u \in c$, and $z \in \Sigma^*$ such that $uxz \mathrel{\sim} u$. For verifying $ux \mathrel{\sim} uy$, let $q := \delta^*(q_0,u) \in Q_c$. By assumption $\Delta_{c,x}(q) = \Delta_{c,y}(q) =: (q',i)$. Thus, $\delta^*(q_0,ux) = \delta^*(q_0,uy) = q'$, and hence $ux \mathrel{\sim} uy$.

	For verifying $(u(xz)^\omega \in L \Leftrightarrow u(yz)^\omega \in L)$, consider the runs of $\mathcal{A}$ on the two words. On $u$ they are the same. On $x$ and $y$ they might differ, but they reach the same state and see the same minimal priority on the way (because $\Delta_{c,x} = \Delta_{c,y}$). On $z$ they agree again, so they are in the same state after $uxz$ and $uyz$. Since $uxz \mathrel{\sim} u$, this state is in $Q_c$. We can now repeat this argument, so the two runs only differ on the parts resulting from the $x$, respectively $y$, segments of the word. But on these segments, the same minimal priority is visited. Hence, the minimal priority that appears infinitely often is the same in the two runs.

	Since ${\sim}_\AA$ refines ${\sim}$, $\Delta_{c,x}(q_1)$ and $\Delta_{c,x}(q_2)$ are ${\sim}$-equivalent for all $q_1,q_2 \in Q_c$. Hence, all the states in the image of $\Delta_{c,x}$ are ${\sim}$-equivalent. This implies that there are at most $m(dk)^d$ different possible functions for $\Delta_{c,x}$. Hence, the number of classes in ${\simeq}_c$ on $\Sigma^+$ is bounded by $m(dk)^d$.
\end{proof}

\section{DPA learner}\label{dpalearner}

In this section, we introduce the passive learner \dpainf (for ``DPA inference''), which constructs in polynomial time a DPA that is consistent with a given input sample.
Furthermore, the learner is designed in such a way that it can infer a DPA for every regular $\omega$-language $L$, given that the sample contains enough information on $L$ (which is made precise in the proof of \cref{dpainfcomplete}).

Throughout this section, we assume $S = (S_+, S_-)$ is a finite $\omega$-sample of ultimately periodic words.
The overall idea of the algorithm is to first infer a FORC from $S$ (step 1), color it according to the procedure outlined in \cref{forcs} (step 2),
and then construct the precise DPA from it as in \cref{FORCtoDPA}. However, the upper bound on the size of the precise DPA is exponential (in the number of priorities) compared to the size of the FORC, and further, even if the precise DPA is small, the construction might produce a large intermediate result before minimizing the DPA as a Mealy machine that emits priorities. For this reason, the step that transforms the coloring on the FORC into a DPA is split up into two parts (steps 3 and 4 of the algorithm). In step 3, the coloring of the precise DPA on prefixes of the sample is computed. This can be done in polynomial time thanks to \cref{lem:polynomialjoincoloring}. In step 4 an active learner for Mealy machines is used, where the oracle answers queries based on the coloring computed in step 3. This ensures that no large intermediate result is produced and that the computation remains polynomial in the size of the sample.

Note that these descriptions are given under the assumption that the sample contains enough information from an underlying language $L$ for inferring the correct objects in each step.
However, the algorithm has to produce some DPA that is consistent with the sample in polynomial time, even if the sample does not fully characterize the objects required by our algorithm.
In such situations, it might happen that intermediate results are inconsistent with the sample.
For example, in step 1 the algorithm might not be able to infer a FORC that is consistent with the sample, or in step 3, the coloring on the prefixes of the sample might not be consistent with the sample.
In such cases, the algorithm defaults, for that step, to a simple structure that is guaranteed to be consistent with the sample, such that it can proceed with the next step.
In such situations, one could also immediately return some default DPA that is consistent with the sample.
We decided to return default structures for the corresponding step because then the algorithm still might generalize from the sample in the following steps.

We later show that if $S$ contains enough information on the language $L$, then the syntactic FORC and thus the precise coloring will be inferred, and that the construction of the DPA from the coloring will result in a correct DPA for $L$ whose size is bounded by the precise DPA (\cref{dpainfcomplete}).

The theoretical foundations for the steps 2,3,4 have been developed in the previous sections. We now give more details on the first step, and then describe the learning algorithm.

\begin{algorithm}[bt]
	\caption{$\op{GLeRC}$\label{algo:glerc}}
	\DontPrintSemicolon
	\SetNoFillComment
	\KwIn{A function \op{cons} and a complete TS $\TT_0$, where\\
    \quad \op{cons} maps a partial TS to a boolean\\
    \quad $\TT_0$ serves as default and verifies $\op{cons}(\TT_0) = \true$.}
	\KwOut{A complete TS $\TT$ with $\op{cons}(\TT) = \true$ and $|\TT| \leq |\TT_0|$.}
	$Q \gets \{\varepsilon\}, \delta \gets \emptyset, \TT \gets (Q, \Sigma, \delta, \varepsilon)$\;
	\While{$\TT$ is not complete}{
	\lIf{$|\TT| > |\TT_0|$}{
		\Return{$\TT_0$}
	}
	$pa \gets$ llex.~minimal word in $Q\Sigma$ such that $\delta(p, a)$ is undefined\;
	\ForEach{$q \in Q$ in length-lexicographic order}{
		$\delta' \gets \delta \cup \{p \xrightarrow{a} q\}$\;
		\uIf{$\op{cons}((Q, \Sigma, \varepsilon, \delta'))$ returns \true}{
				$\delta \gets \delta'$\;
                \text{\textbf{continue} with next iteration of while loop}
			}
		}
		$Q \gets Q \cup \{pa\}, \delta \gets \delta \cup \{p \xrightarrow{a} pa\}$\;
	}
	\Return{$\TT$}
\end{algorithm}
For learning the FORC, we use the procedure \emph{G}reedily \emph{Le}arn \emph{R}ight \emph{C}ongruence (\glerc), which can be seen in \cref{algo:glerc}.
It receives as input a consistency function \op{cons} and incrementally constructs a TS $\TT$ by adding missing transitions, first trying existing states as targets in canonical order.
After inserting a transition to a potential target, \glerc verifies whether \op{cons} is satisfied.
If yes, \glerc keeps the transition and proceeds to the next iteration.
Otherwise, if no existing state works, \glerc adds a new state as target for the transition.
To ensure termination, \glerc has a second input, which is a fallback transition system $\TT_0$. If the constructed TS $\TT$ exceeds $\TT_0$ in size at any point, the algorithm terminates and returns $\TT_0$.
This guarantees runtime that is polynomial in the execution time of \op{cons} and in the size of $\TT_0$, since \glerc attempts to insert at most polynomially many transitions.

\begin{restatable}{proposition}{rstglerc}\label{glerc}
	For a consistency function \op{cons} and a complete transition system $\TT_0$ with $\op{cons}(\TT_0) = \true$, the algorithm \glerc computes a complete TS $\TT$ such that $\op{cons}(\TT) = \true$ in time polynomial in the runtime of \op{cons} and the size of $\TT_0$.
\end{restatable}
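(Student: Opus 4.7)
The plan is to verify three things in sequence: the consistency invariant $\op{cons}(\TT) = \true$ is maintained throughout the run, the algorithm always terminates with a complete TS of size at most $|\TT_0|$, and the total number of calls to $\op{cons}$ is bounded by a polynomial in $|\TT_0|$ (each on a TS of size at most $|\TT_0|$).

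First, I would establish the invariant. At initialization, $\TT$ has a single state $\varepsilon$ and no transitions; we take as a tacit assumption on $\op{cons}$ (implicit in its role as a ``consistency'' check, and in the fact that $\op{cons}(\TT_0) = \true$ for the fallback) that it returns \true{} on this initial partial TS, and more generally that extending a partial TS by introducing a fresh state as the target of a previously undefined transition preserves $\op{cons}$. The body of the while loop preserves the invariant by case analysis: when an existing state $q$ is installed as the target of $p \xrightarrow{a} q$, the update $\delta \gets \delta'$ is performed only after verifying $\op{cons}((Q, \Sigma, \varepsilon, \delta')) = \true$; when no existing state works, the algorithm adds a fresh state $pa$ together with the single transition $p \xrightarrow{a} pa$, and the invariant carries over by the assumption above.

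Next, termination and the size bound. Each iteration of the while loop either inserts a transition to an existing state (strictly increasing the number of defined transitions while keeping $|Q|$ fixed) or creates one new state together with one new defined transition. Thus the quantity $|Q| \cdot |\Sigma| - (\text{number of undefined transitions})$ strictly increases at every iteration. Since the size check $|\TT| > |\TT_0|$ forces an early return of $\TT_0$, the state count never exceeds $|\TT_0|$ during the run that reaches the normal return, so the while loop executes at most $|\TT_0| \cdot |\Sigma|$ times before $\TT$ becomes complete. In both exit branches, the returned TS is complete, has size at most $|\TT_0|$, and satisfies $\op{cons}$ (either by the invariant or by hypothesis on $\TT_0$).

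Finally, the polynomial runtime. Each iteration tries at most $|Q| \leq |\TT_0|$ candidate targets, and each candidate triggers exactly one invocation of $\op{cons}$ on a partial TS of size at most $|\TT_0|$. Combining with the previous bound, the total number of $\op{cons}$ invocations is at most $|\TT_0|^2 \cdot |\Sigma|$, and the remaining bookkeeping per iteration (enumerating words in $Q\Sigma$ and states in length-lexicographic order, updating $\delta$) is polynomial in $|\TT_0|$ and $|\Sigma|$. Hence the overall runtime is polynomial in $|\TT_0|$ and in the runtime of $\op{cons}$. The one delicate step, and the place where the argument has to be spelled out carefully, is the preservation of the invariant when a fresh state is introduced without an explicit $\op{cons}$ check; this is where the design assumption on admissible consistency functions enters, and it is worth stating explicitly before beginning the induction.
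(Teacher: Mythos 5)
Your proof is correct and follows essentially the same route as the paper: bound the number of loop iterations by $|\TT_0|\cdot|\Sigma|$ via the fallback size check, count at most $|Q|\le|\TT_0|$ invocations of $\op{cons}$ per iteration, and observe that both exit branches return a complete TS satisfying $\op{cons}$. Your explicit flagging of the fresh-state case (where no $\op{cons}$ check is performed, so one must assume admissible consistency functions are preserved under adding a fresh target state) is a point the paper leaves implicit, and it does hold for the separation-style consistency functions the paper actually instantiates.
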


Before we go into the details of how we instantiate \glerc in our setting for learning the congruences of a FORC, let us briefly comment on the default transition system that is used for ensuring termination.
The algorithm \glerc has emerged as an abstract version of the learning algorithm \textsf{Sprout} for deterministic $\omega$-automata presented in  \cite{BohnL21}, which itself can be seen as an extension of the RPNI algorithm for learning DFAs from samples of finite words \cite{rpniOG}.
The instantiation of \glerc that corresponds to the RPNI algorithm for samples of finite words does not require a default transition system, because after polynomially many steps all finite words from the sample do have a path that completely lies inside the constructed transition system, and after that point no new states will be created by \glerc.
In the context of $\omega$-words, however, this is not necessarily the case: It was established in \cite[Proposition~10]{BohnL21} that there exist samples for which the \textsf{Sprout} algorithm does not terminate without the use of a default transition system.
While this example does not directly apply to the instantiations of \glerc that we use below, there are no easy termination arguments in the setting of $\omega$-words as it is the case for finite words.
For this reason, we use the default transition system. This allows us to cover existing algorithms in a unified form, and it gives an easy termination argument.

Let us now describe how we instantiate \glerc for learning the congruences of a FORC. The \op{cons} functions that we pass to \glerc, verify that the constructed TS satisfies certain notions of consistency with the sample, which relativize the definitions of the canonical FORC and its coloring from \cref{forcs} to a sample $S$ (\cref{def:consistencies}).
In \cref{mnconsistent} and \cref{iterationconsistent} we show that these consistency checks can be performed in polynomial time. We then combine these two consistency notions into a notion of consistency of a sample with a FORC (\cref{forcconsistent}), followed by a short description of the default transition system that we use in combination with the two consistency functions. This allows us to subsequently use the \glerc algorithms with different \op{cons} functions for inferring the leading and progress right congruences in step~1 of the learning algorithm.
As we see later in the proof of \cref{dpainfcomplete}, for these consistency functions there are simple arguments for showing that the appropriate transition systems are inferred in the limit.

\begin{definition}\label{def:consistencies}
	Let $S = (S_+, S_-)$ be an $\omega$-sample and $\TT$ be a partial transition system. Two words $x, y \in \Sigma^*$ \emph{are separated by} $\TT$, if $\TT(x)$ or $\TT(y)$ is undefined, or if $\TT(x) \neq \TT(y)$.
	\begin{itemize}
		\item We call $\TT$ \emph{MN-consistent} with $S$, if for all $xuv^\omega \in S_+, yuv^\omega \in S_-$, the prefixes $x$ and $y$ are separated by $\TT$, and
		\item say that $\TT$ is \emph{iteration consistent} with $(S, {\sim}, c)$, where ${\sim}$ represents a complete transition system that is MN-consistent with $S$ and $c$ is a class of ${\sim}$, if for all $xz, yz \in E_c^{\sim}$ with $(xz)^\omega \in S_+, (yz)^\omega \in S_-$ holds that $x$ and $y$ are separated by $\TT$
	\end{itemize}
\end{definition}

A check for MN-consistency is part of the consistency checks in the learning algorithms from  \cite{BohnL21}. For making the paper self-contained, we nevertheless describe a possible algorithm below.
For showing that MN-~and iteration consistency can be verified in polynomial time, we reduce the problems to a unified setting, which can be solved by performing a polynomial number of reachability analyses.
Specifically, we construct two DFAs $\AA_1, \AA_2$ and a conflict relation $C \subseteq Q_1 \times Q_2$ from the sample, to encode the pairs of words $x, y$ which are not allowed to lead to the same state in a consistent transition system $\TT$.
Now we can check for consistency of $\TT$ by verifying for all $x, y \in \Sigma^*$, that $(\delta^*_1(\iota_1, x), \delta^*_2(\iota_2, y)) \in C$ implies that $x$ and $y$ do not lead to the same state in $\TT$.
In other words $\TT$ is not consistent if and only if there is some $q$ in $\TT$ such that $(q,q_1)$ and $(q,q_2)$ are reachable in $\TT \times \AA_1$ and $\TT \times \AA_2$, respectively, and $(q_1,q_2) \in C$.
This can clearly be checked in time polynomial in the size of $|\TT|$, $|\AA_1|$ and $|\AA_2|$.
In the proof of the following two lemmas, we thus only need to show that suitable DFAs $\AA_1, \AA_2$ and a conflict relation $C$ can be constructed in polynomial time.

\begin{lemma}\label{mnconsistent}
	It can be verified in polynomial time whether a (partial) transition system $\TT$ is MN-consistent with an $\omega$-sample $S$.
\end{lemma}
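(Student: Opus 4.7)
The plan is to instantiate the unified reduction described in the paragraph preceding the lemma, once per pair $(\alpha,\beta) \in S_+ \times S_-$. For each such pair I would define DFAs $\AA_1^\alpha, \AA_2^\beta$ and a conflict relation $C^{\alpha,\beta}$ such that the forbidden pairs $(x,y)$ for this specific pair of sample words---i.e.\ those with $x \sqsubseteq \alpha$, $y \sqsubseteq \beta$, and $x^{-1}\alpha = y^{-1}\beta$ as $\omega$-words---are exactly the pairs $(x,y)$ with $(\AA_1^\alpha(x),\AA_2^\beta(y)) \in C^{\alpha,\beta}$. MN-consistency of $\TT$ then fails iff for some pair $(\alpha,\beta)$ the framework's reachability condition holds, and since there are $|S_+|\cdot|S_-|$ pairs and each reachability check is polynomial, the total cost is polynomial.

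For the individual trackers, given an ultimately periodic $\alpha = pq^\omega$ I would take $\AA_1^\alpha$ to have at most $|p|+|q|+1$ states: one per distinct $\omega$-suffix of $\alpha$ (the stem contributes at most $|p|$ distinct suffixes and the cycle at most $|q|$), plus a sink. Its only transitions advance from the current position along the next letter of $\alpha$; any other letter leads into the sink. The tracker $\AA_2^\beta$ is built analogously. The conflict relation $C^{\alpha,\beta}$ consists of all pairs of non-sink states $(q_1,q_2)$ whose labelled $\omega$-suffixes, of $\alpha$ respectively $\beta$, coincide as $\omega$-words; since both suffixes are ultimately periodic with polynomial-size representations, equality is polynomial-time decidable, so $C^{\alpha,\beta}$ is of polynomial size and polynomial-time constructible.

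Correctness is immediate from unfolding the definitions: a witness to MN-inconsistency for the pair $(\alpha,\beta)$ is precisely a pair of prefixes landing in the same $\TT$-state while $\AA_1^\alpha$ and $\AA_2^\beta$ point to matching $\omega$-suffixes, which is exactly what the framework detects via reachability in $\TT \times \AA_1^\alpha$ and $\TT \times \AA_2^\beta$ combined with $C^{\alpha,\beta}$. Each of these two product reachability analyses runs in time polynomial in $|\TT|$ and $|\alpha|+|\beta|$, yielding an overall polynomial-time algorithm. The only minor subtlety worth flagging is the choice to instantiate the framework per pair rather than combining all positive-word trackers into a single $\AA_1$ (and analogously for $\AA_2$): the latter is tempting but its reachable part can be superpolynomial because of large least common multiples of cycle lengths inside the synchronous product, whereas the per-pair instantiation is transparently polynomial and suffices for the framework.
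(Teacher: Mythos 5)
Your proposal is correct and follows essentially the same route as the paper: both instantiate the unified reduction with DFAs that track positions in the sample words and a conflict relation consisting of state pairs whose remaining $\omega$-suffixes agree, then reduce the check to reachability in the products with $\TT$. The only (immaterial) differences are that the paper builds a single prefix-tree DFA for all of $S_+$ (and one for $S_-$) rather than one tracker per pair $(\alpha,\beta)$, and computes the conflict relation by searching for an infinite run inside $F_1\times F_2$ of the product rather than by directly deciding equality of ultimately periodic suffixes; both variants are polynomial.
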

\begin{proof}
	We construct DFAs $\AA_1, \AA_2$ such that $\AA_1$ and $\AA_2$ accept all prefixes of $S_+$ and $S_-$, respectively. This can easily be done by using the prefix trees for $S_+$, respectively $S_-$, and attaching a loop for the periodic part once the prefix uniquely identifies the example. All states of the resulting transition system are accepting.  Missing transitions are directed into a rejecting sink state.

	A pair $(q_1, q_2)$ of states with $q_i \in \AA_i$ for $i \in \{1,2\}$ is in the conflict relation $C$, if there exists an infinite run in the product automaton $\AA_1 \times \AA_2$, which starts in $(q_1, q_2)$ and stays in $F_1 \times F_2$.
	Checking for the existence of such an infinite run starting in a pair $(q_1, q_2)$ can be done by restricting the product to $F_1 \times F_2$ and searching for a loop.
	As the product $\AA_1 \times \AA_2$ is polynomial in $S$, the construction of $C$ is clearly possible in polynomial time.

	To see that the construction is correct, let $q$ be a state of $\TT$ such that $(q,q_1)$ and $(q,q_2)$ are reachable in $\TT \times \AA_1$ and $\TT \times \AA_2$, respectively, and $(q_1,q_2) \in C$.
	Further, let $r_1, r_2 \in \Sigma^*$ be such that $\TT \times \AA_i$ reaches $(q, q_i)$ when reading $r_i$ from the initial state.
	As $(q_1, q_2) \in C$, there exists an infinite run from $(q_1,q_2)$ in $\AA_1 \times \AA_2$, which stays in $F_1 \times F_2$.
	This means for some $x \in \Sigma^*, y \in \Sigma^+$ we have $(q_1,q_2) \xrightarrow{x} (p_1,p_2) \xrightarrow{y} (p_1,p_2)$ while only visiting states from $F_1 \times F_2$.
	Then $r_1xy^\omega \in \sem{S_{+}}$ and $r_2xy^\omega \in \sem{S_-}$ and because $r_1$ and $r_2$ reach the same state in $\TT$, $\TT$ cannot be MN-consistent with $S$.

	For the other direction, assume that $\TT$ is not MN-consistent with $S$.
	This means there exist $xw, yw \in \sem{S}$ with $xw \in \sem{S_+}$ and $yw \in \sem{S_-}$, but $x$ and $y$ lead to the same state $q$ in $\TT$.
	The pair $(q_1, q_2) := (\delta^*_1(\iota_1, x), \delta^*_2(\iota_2, y))$ is in $C$, because for each prefix $z$ of $w$, $xz, yz \in \prf{\sem{S}}$, guaranteeing that there exists an infinite run that starts in $(q_1,q_2)$ and remains in $F_1 \times F_2$.

	Overall, $\TT$ is \emph{not MN-consistent} with $S$, if and only if there is some $q$ in $\TT$ such that $(q,q_1)$ and $(q,q_2)$ are reachable in $\TT \times \AA_1$ and $\TT \times \AA_2$, respectively, and $(q_1,q_2) \in C$.
	As the sizes of $\AA_1, \AA_2$ and $C$ are polynomial in $|S|$, it follows from the remarks preceding this proof that MN-consistency can be checked in polynomial time.
\end{proof}

\begin{lemma}\label{iterationconsistent}
	For a given finite $\omega$-sample $S$, a right congruence ${\sim}$ whose TS is MN-consistent with $S$, a class $c$ of ${\sim}$, and a partial TS $\TT$, it can be checked in polynomial time whether $\TT$ is iteration consistent with $(S, {\sim}, c)$.
\end{lemma}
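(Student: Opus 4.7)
The plan is to mimic the strategy used in the proof of \cref{mnconsistent}: construct two DFAs $\AA_1,\AA_2$ together with a conflict relation $C \subseteq Q_1 \times Q_2$, each of size polynomial in $|S|$ and $|{\sim}|$, and such that $\TT$ is iteration inconsistent with $(S,{\sim},c)$ if and only if there exists a common state $q$ of $\TT$ and a pair $(q_1,q_2) \in C$ with $(q,q_1)$ reachable in $\TT \times \AA_1$ and $(q,q_2)$ reachable in $\TT \times \AA_2$ from the respective initial states. As already observed before \cref{mnconsistent}, the final reachability check can then be carried out in time polynomial in $|\TT|,|\AA_1|,|\AA_2|$.

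The DFA $\AA_1$ is designed to track, for an input prefix $x$, all the information needed to decide whether $x$ can be extended to a pair $(x,z)$ with $xz \in E_c^{\sim}$ and $(xz)^\omega$ equal to some positive example. Concretely, its states are tuples $(\alpha,p,q)$, where $\alpha = uv^\omega \in S_+$ is fixed, $p$ encodes the position in $\alpha$ reached after reading $x$ (stored explicitly while $|x| \le |u|$ and reduced modulo $|v|$ afterwards, giving at most $|u|+|v|$ choices per example), and $q$ is the class of ${\sim}$ reached from $c$ on $x$. Transitions advance $p$ by one letter of $\alpha$ and update $q$ via the transition function of ${\sim}$, falling into a rejecting sink whenever the input letter fails to match $\alpha$. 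The DFA $\AA_2$ is constructed analogously from $S_-$. Both automata have size polynomial in $|S|$ and $|{\sim}|$.

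The conflict relation collects the pairs of states that can be simultaneously ``closed'' by a common suffix. Formally, $((\alpha,p_1,q_1),(\beta,p_2,q_2))$ is placed in $C$ precisely if there exists $z \in \Sigma^*$ such that reading $z$ from both states in the product $\AA_1 \times \AA_2$ (i) does not enter a rejecting sink in either component, (ii) ends at a position in $\alpha$ that closes a full period (i.e.\ witnesses $(xz)^\omega = \alpha$) and symmetrically for $\beta$, and (iii) sends both ${\sim}$-components back to class $c$. Condition (iii) encodes $xz,yz \in E_c^{\sim}$, while (ii) guarantees that $(xz)^\omega \in S_+$ and $(yz)^\omega \in S_-$. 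The existence of such a $z$ is a reachability problem in $\AA_1 \times \AA_2$ restricted to the set of ``period-closing'' states whose ${\sim}$-components are $c$, and can therefore be decided in polynomial time, yielding $C$ in polynomial time.

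Correctness then follows by unwinding the definitions: a pair $(x,y)$ witnesses an iteration inconsistency exactly when $\AA_1$ maps $x$ to some $(\alpha,p_1,q_1)$, $\AA_2$ maps $y$ to some $(\beta,p_2,q_2)$, the pair lies in $C$ (certified by a common $z$), and $\TT(x) = \TT(y)$. The main technical obstacle I expect is the bookkeeping in the states $(\alpha,p,q)$ so that ``closing a period'' in condition (ii) can be recognized by a finite automaton. The standard trick of identifying positions modulo $|v|$ once the non-periodic prefix $u$ of $\alpha$ has been fully read keeps the state space polynomial and turns the condition ``$(xz)^\omega = \alpha$'' into membership in a designated set of residues modulo $|v|$, reducing the remaining work to plain reachability.
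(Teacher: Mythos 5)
Your overall strategy is the same as the paper's: reduce to the two-DFA-plus-conflict-relation framework set up before \cref{mnconsistent}, with $\AA_1,\AA_2$ recognizing (prefixes of) words $w$ with $w \in E_c^{\sim}$ and $w^\omega$ a positive (resp.\ negative) example, and with $C$ collecting the pairs of states from which a \emph{common} suffix $z$ leads to acceptance in both components. The paper computes exactly this set by a backward fixpoint from $F_1\times F_2$; your forward-reachability formulation of $C$ is equivalent.

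There is, however, a genuine gap in your construction of $\AA_1$ as written: with states $(\alpha,p,q)$ where ``$\alpha\in S_+$ is fixed'' and transitions that match the letters of $\alpha$, you have one automaton per example word, and their union is not deterministic --- a single input $x$ may be a prefix of several distinct examples $\alpha,\alpha'$, so ``the state $\AA_1$ maps $x$ to'' is not well defined, yet your correctness argument quantifies existentially over $\alpha$ while treating the run as deterministic. The reduction to the framework requires genuine DFAs, so this must be repaired; the paper does so by building a prefix-tree over the relevant periodic words that only starts looping once the prefix identifies the example uniquely (alternatively, one could run the whole check once per pair in $S_+\times S_-$). A second, smaller point: your claim that ``closing a full period'' is recognizable as a set of residues modulo $|v|$ relies on the fact that a purely periodic $uv^\omega$ has minimal period dividing $|v|$, and on actually computing that minimal period; the paper makes this explicit by first extracting the shortest period representatives $\check{R}_\sigma$ and characterizing the words $x$ with $x^\omega\in \sem{S_\sigma}$ as their powers. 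With these two repairs your argument coincides with the paper's.
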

\begin{proof}
	The proof of this lemma is analogous to the preceding \cref{mnconsistent}, but differs in the construction of the $\AA_1, \AA_2$ and the conflict relation.
	We build the DFAs such that
	\begin{itemize}
		\item $\AA_1$ accepts a word $x$ if $x \in E_c^{\sim}$ and $x^\omega \in S_{+}$, whereas
		\item $\AA_2$ accepts $x$ if $x \in E_c^{\sim}$ and $x^\omega \in S_{-}$.
	\end{itemize}
	For the construction of the $\AA_i$, let $\check{R}_{\sigma}$ for $\sigma \in \{+,-\}$ be the set consisting of all $y$ such that $y$ is a shortest word with $y^\omega \in \sem{S_{\sigma}}$. For computing the sets $\check{R}_{\sigma}$, one can consider each $uv^\omega$ in $S$, check whether it is periodic, and if yes, compute the shortest $y$ with $y^\omega = uv^\omega$. This can be done, for example, by building a DFA that accepts precisely the prefixes of $uv^\omega$ (using the prefixes of $uv$ as states, identifying the states for $u$ and $uv$). This DFA can be minimized, and then $uv^\omega$ is periodic if and only if the minimal DFA consists of a loop of accepting states (with one rejecting sink). The label sequence of this loop gives the shortest word $y$. One can also use other methods for computing $\check{R}_{\sigma}$, for example the characterization of minimal representations of ultimately periodic words from \cite[Proposition~4.42]{Landwehr21}.

	Now it is not hard to see that the words $x$ with $x^\omega \in S_\sigma$ are precisely those of the form $y^n$ with $y \in \check{R}_{\sigma}$. To see that, note that $x^\omega \in S_\sigma$ implies that $x^\omega = y^\omega$ for some $y \in \check{R}_{\sigma}$. So $x$ must be of the form $y^nz$ with $z$ a strict prefix of $y$. If $z \not= \varepsilon$, this would imply that $z^\omega = y^\omega$, contradicting the fact that $y$ is the shortest word representing $y^\omega$.

	With these observations, it is then easy to build the DFAs $\AA_1,\AA_2$: The states are prefixes of $y^\omega$ for $y \in \check{R}_{\sigma}$ (with $\sigma = +$ for $\AA_1$, and $\sigma = -$ for $\AA_2$). For the least $k$ such that $y^k$ is not a prefix of another $(y')^\omega$ anymore, start looping on $y$.
	Finally, we intersect the resulting automata with DFAs for $E_c^{\sim}$, which can be directly obtained from the transition structure of ${\sim}$ by using $c$ as initial and final state (since the DFAs that we built from $\check{R}_{\sigma}$ only accept non-empty words, we do not need to exclude the empty word in the DFA for $E_c^{\sim}$).
	Overall, the size of the resulting automata is clearly polynomial in $|S|$.

	The conflict relation is the least relation $C$ over $Q_1 \times Q_2$ containing $F_1 \times F_2$ and verifying that $(\delta_1(q_1, a), \delta_2(q_2, a)) \in C$ implies $(q_1, q_2) \in C$.
	The relation $C$ can be obtained by a fixpoint iteration, ensuring that the number of iterations is bounded by $|Q_1 \times Q_2|$.
	Thus, the computation of $C$ is possible in polynomial time.

	For the correctness of the construction, let $q$ be a state of a partial TS $\TT$ which is MN-consistent with $S$.
	Further, let $(q,q_1)$ and $(q, q_2)$ with $(q_1, q_2) \in C$ be reachable on $r_1$ in $\TT \times \AA_1$ and on $r_2$ in $\TT \times \AA_2$, respectively.
	As $(q_1, q_2) \in C$, there must exist some word $x \in \Sigma^*$ such that $(p_1,p_2) := (\delta^*_1(q_1,x), \delta^*_2(q_2,x)) \in F_1 \times F_2$.
	This means $r_1x, r_2x \in E_c^{\sim}$ and $(r_1x)^\omega, (r_2x)^\omega \in \sem{S}$ with $((r_1x)^\omega \in \sem{S_{+}} \iff (r_2x)^\omega \in \sem{S_{-}})$.
	But then because $r_1$ and $r_2$ reach the same state in $\TT$, it follows that $\TT$ is not iteration consistent with $(S, {\sim}, c)$.

	For the other direction, assume that $\TT$ is MN-consistent with $S$ but not iteration consistent with $(S, {\sim}, c)$.
	This means we can find $xz, yz \in E^{\sim}_c$ such that $(xz)^\omega \in \sem{S_+}$, $(yz)^\omega \in \sem{S_-}$ and $x,y$ lead to the same state $p$ in $\TT$.
	Clearly it holds that $(q_1, q_2) := (\delta^*_1(xz), \delta^*_2(yz)) \in F_1 \times F_2 \subseteq C$.
	Moreover, our definition of $C$ ensures that removing the common suffix $z$ from both words retains membership in $C$, meaning $(\delta^*_1(x), \delta^*_2(y)) \in C$.

	Overall, we obtain that $\TT$ is \emph{not} iteration consistent with $(S, {\sim}, c)$, if and only if there is some $q$ in $\TT$ such that $(q,q_1)$ and $(q,q_2)$ are reachable in $\TT \times \AA_1$ and $\TT \times \AA_2$, respectively, and $(q_1,q_2) \in C$.
	As $\AA_1, \AA_2$ and $C$ can be constructed in polynomial time, checking for iteration consistency effectively reduces to computing the reachable states in the two products $\TT\times \AA_1$ and $\TT \times \AA_2$ and checking whether a pair with the properties outlined above exists.
	This is clearly possible in time polynomial in the size of $|\TT|$, $|\AA_1|$ and $|\AA_2|$, thus concluding the proof.
\end{proof}

We now combine the notions of MN-~ and iteration consistency to define the conditions under which a FORC is consistent with a sample.

\begin{definition}\label{forcconsistent}
	A FORC $({\sim}, (\approx_c)_{c \in [{\sim}]})$ is \emph{consistent with $S$} if
	\begin{itemize}
		\item ${\sim}$ is MN-consistent with $S$ and
		\item for each $c \in [{\sim}]$, $\approx_c$ is MN-consistent with $S_c$ and iteration consistent with $(S_c,{\sim},c)$, where $S_c = (S_{c,+}, S_{c,-})$ and $S_{c,\sigma} = \{uv^\omega \mid \exists x \in c$ with $xuv^\omega \in \sem{S_\sigma}\}$ for $\sigma \in \{+,-\}$.
	\end{itemize}
\end{definition}

\begin{figure}
	\tikzstyle{every state}=[draw=none]
	\begin{center}
          	\includegraphics[width=0.3\textwidth]{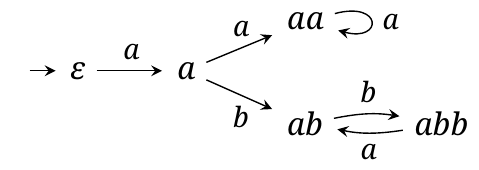}
	\end{center}
	\caption{
		An example for the construction of a default structure ${\sim}_S$ for the sample $S$ containing only the examples $a^\omega$ and $ab(ba)^\omega$. Note, that the classification of these words is not relevant for the definition of the default structure ${\sim}_S$, which is why we omit it from this description.
	}
	\label{fig:default}
\end{figure}

In the following, we outline how default structures that are MN-~ and iteration consistent with a given sample $S$ can be constructed.
An illustration for this construction can be seen in \cref{fig:default}.
For a given sample $S$, we can construct default structures that are MN- and iteration consistent as follows.
Consider the prefix tree of $S$, to which we add one sink state for words that are not prefixes of $S$, and loops on the shortest words which are prefix of exactly one example word.
Formally, we define ${\sim}_S$ as $x \mathrel{{\sim}_S} y$ if $ x,y \notin \prf{(\sem{S})}$ or we have $x^{-1}\sem{S} = y^{-1}\sem{S}$ and $|x^{-1}\sem{S}| = 1$.
It is not hard to see that the size of ${\sim}_S$ is polynomial in $|S|$, see for example \cite[Appendix~B]{BohnL21arxiv}.
Further, $\sim_S$ is MN-~ and iteration consistent with  $S$ because it separates all words $x,y$ that are prefixes of different words in $S$, which directly implies that the conditions from \cref{def:consistencies} are satisfied.

\paragraph*{Formal description of the learner}
The learner that we propose consists of several steps (which are sketched at beginning of this \cref{dpalearner}). An overview of the algorithm can be found in \cref{fig:dpainf}. Roughly speaking, it extracts a FORC from the given sample, colors it using the algorithm from \cref{forcs}, builds a family of Mealy machines for weak priority mappings capturing the periodic part of the sample, and then uses an active learner for Mealy machines for joining the learned FWPM.
In the following, we explain each step of the learner \dpainf in more detail.
The full arguments why this results in a polynomial time consistent learner that can learn every regular $\omega$-language in the limit is given in the proofs of \cref{dpainfconsistent} and \cref{dpainfcomplete}. In the description of the steps we only indicate some of these arguments in order to ease the understanding of the algorithm.

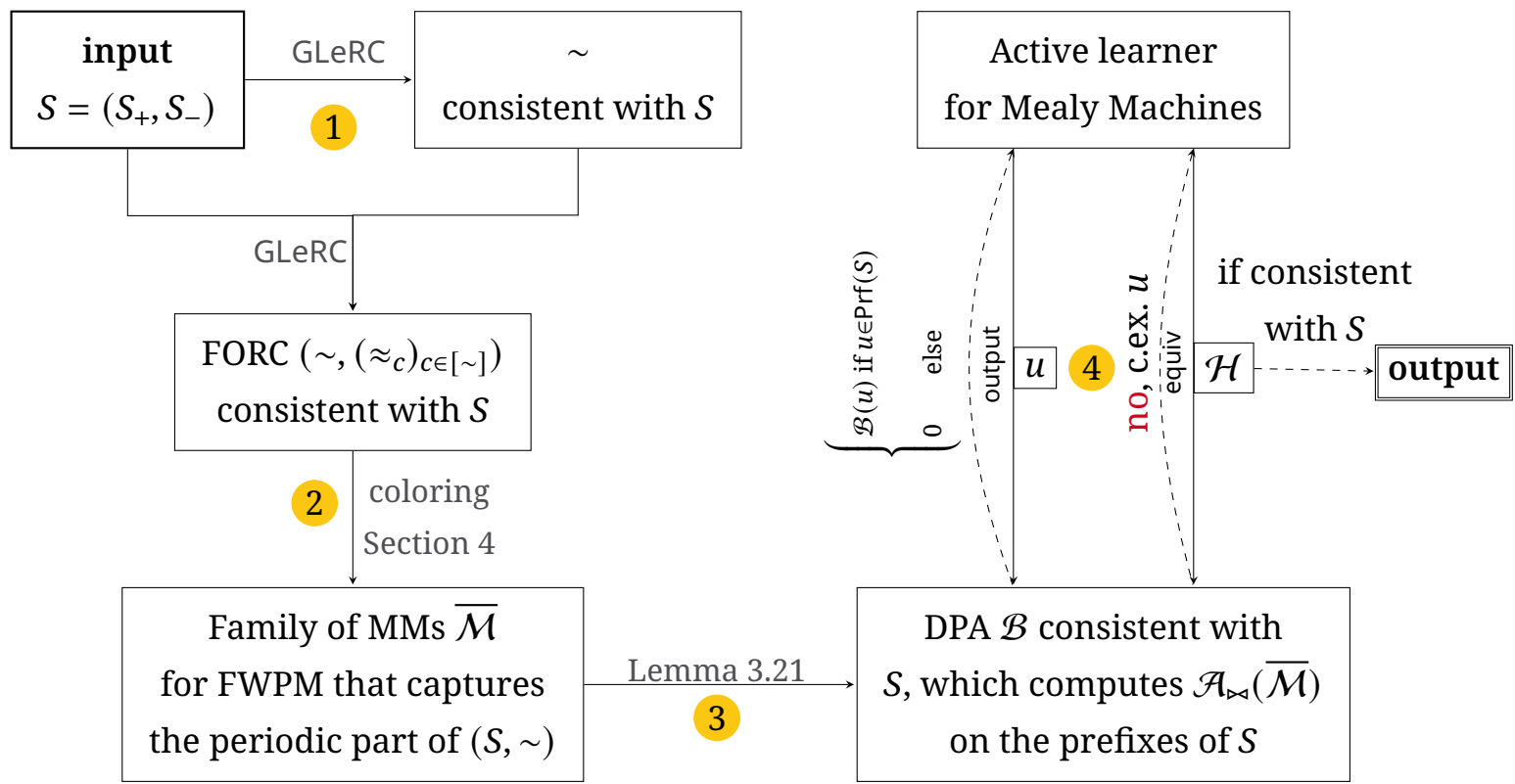
\begin{figure}
	\begin{tikzpicture}
		{
			\tikzset{%
				every node/.style={draw, rectangle}
			}
			% LARGER NODES
			\node[thick,anchor=west] (sample) {\begin{tabular}{c}\textbf{input}\\$S = (S_+, S_-)$\end{tabular}};
   
			\node (sim) at ($(sample)+(0.3\textwidth,0)$) {\begin{tabular}{c}${\sim}$\\consistent with $S$\end{tabular}};
   
			\node (forc) at ($(sample)!0.5!(sim) + (0,-0.14\textheight)$) {\begin{tabular}{c}FORC $({\sim},(\approx_c)_{c \in [{\sim}]})$\\consistent with $S$\end{tabular}};
   
			\node (mealies) at ($(forc)+(0,-0.14\textheight)$) {\begin{tabular}{c}Family of MMs $\overline\MM$\\for FWPM that captures\\the periodic part of $(S, {\sim})$\end{tabular}};
   
			\node (dpab) at ($(mealies)+(0.5\textwidth,0)$) {\begin{tabular}{c}DPA $\BB$ consistent with\\$S$, which computes $\bow\AA(\overline\MM)$\\on the prefixes of $S$\end{tabular}};
   
			\node (al) at (sample -| dpab) {\begin{tabular}{c}Active learner\\for Mealy Machines\end{tabular}};
   
			\node[anchor=east, double,yshift= 0.1\ThCScm] (result) at (forc -| {$(sample.west)+(\textwidth,0)$}) {\textbf{output}};
		}

		\draw[->]
		(sample.east) -- (sim.west);
		\draw[->] (sample.south) -- ++(0,-0.7\ThCScm) -- ({$(sample.south) + (0,-0.7\ThCScm)$} -| {forc.north}) -- (forc.north);
		\draw[->] (sim.south) -- ++(0,-0.7\ThCScm) -- ({$(sim.south) + (0,-0.7\ThCScm)$} -| {forc.north}) -- (forc.north);
		\draw[->] (forc.south) -- (mealies.north);
		\draw[->] (mealies.east) -- (dpab.west);

		\draw[->] ($(al.south)+(0.06\textwidth,0)$) to node[draw, rectangle] (hypothesis) {$\HH$} node[rotate=90, yshift=0.2\ThCScm,xshift=-0.4\ThCScm] {$\scriptstyle{\op{equiv}}$} (\tikztostart |- {dpab.north});
		\draw[<-, bend right=15, dashed] ($(al.south)+(0.06\textwidth,0)$) to node[rotate=90, xshift=-0.8\ThCScm, yshift=0.2\ThCScm] {\red{no}, c.ex. $u$} (\tikztostart |- {dpab.north});
		\draw[->, dashed] (hypothesis) edge node[yshift=0mm] {\begin{tabular}{c}if consistent\\with $S$\end{tabular}} (result);

		\draw[->] ($(al.south)+(-0.06\textwidth,0)$) to node[rotate=90, xshift=-0.5\ThCScm, yshift=0.2\ThCScm] {$\scriptstyle{\op{output}}$} node[draw, rectangle] (wordu) {$u$} (\tikztostart |- {dpab.north});
		\draw[<-, bend right=20, dashed] ($(al.south)+(-0.06\textwidth,0)$) to node[rotate=90,xshift=-1.1\ThCScm,yshift=0.8\ThCScm] {%
			$\begin{cases}\scriptstyle{\BB(u) \text{ if } u \in \prf(S)} \\
					\scriptstyle{0 \quad\phantom{i}\text{ else}}\end{cases}$%
		} (\tikztostart |- {dpab.north});

		{
		\tikzset{every node/.style={lipicsYellow, fill=lipicsYellow, draw, circle, text=black, inner sep=2pt}}
		\node (step1) at ($(sample)!0.5!(sim) + (-0.2\ThCScm,-0.5\ThCScm)$) {1};
		\node (step2) at ($(forc)!0.4!(mealies) + (-0.4\ThCScm,0)$) {2};
		\node (step3) at ($(mealies)!0.485!(dpab) + (0,-0.35\ThCScm)$) {3};
		\node (step4) at ($(wordu)!0.3!(hypothesis)$) {4};
		}
		{
		% ARROW LABELS
		\tikzset{every node/.style={lipicsGray, draw=none, font=\small}}
		\node (glerc1) at ($(sample)!0.5!(sim) + (-0.15\ThCScm,0.3\ThCScm)$) {\hyperref[algo:glerc]{\glerc}};
		\node (glerc2) at ($(glerc1) + (-0.4\ThCScm,-2.1\ThCScm)$) {\hyperref[algo:glerc]{\glerc}};
		\node (coloring) at ($(step2) + (1.2\ThCScm,-0.15\ThCScm)$) {\begin{tabular}{c}\hyperref[kappacoloring]{coloring}\\\cref{forcs}\end{tabular}};
		\node (todpa) at ($(step3) + (0,0.5\ThCScm)$) {\cref{lem:polynomialjoincoloring}};
		}
	\end{tikzpicture}
	\caption{Schematic view of the \dpainf algorithm, details for each step are in the text.}
	\label{fig:dpainf}
\end{figure} 

\subparagraph*{Step 1}
In the first step, \dpainf learns a FORC that is consistent with the sample $S$ using \glerc. The procedures for checking MN-consistency and iteration consistency are based on \cref{mnconsistent} and \cref{iterationconsistent}.
For the LRC, \dpainf calls \glerc with ${\sim}_S$ as default and using $\op{LRC-cons}$ as consistency function, where $\op{LRC-cons}(T)$ returns \true if and only if $T$ is MN-consistent with $S$ (see \cref{mnconsistent}).
We denote the resulting right congruence with ${\sim}$.
For each $c \in [{\sim}]$ and $\sigma \in \{+,-\}$, the learner now computes the sets
\begin{align*}
	S_{c,\sigma} & = \{uv^\omega \mid \exists x \in c \text{ with } xuv^\omega \in \sem{S_\sigma}\}
	\quad
	R_{c,\sigma} = \{v^\omega\mid \exists x \in c \text{ with }xv^\omega \in \sem{S_\sigma} \text{ and } xv \mathrel{\sim} x\}.
\end{align*}
It then executes \glerc with the product $({\sim}_{S_c} \times {\sim})$ as default and the consistency function $\op{PRC-cons}(\TT)$, which returns \true if
\begin{itemize}
	\item $\TT$ is MN-consistent and iteration consistent with $S_c$
	\item $\op{SCC}_\TT(q) \cap \op{SCC}_\TT(q') = \emptyset$ for all $q \in \inf_\TT(R_{c,+}), q' \in \inf_\TT(R_{c,-})$.
\end{itemize}
Thereby, a FORC $\FF = ({\sim}, (\approx_c)_{c \in [{\sim}]})$ that is consistent with $S$ is obtained.
The second condition ensures that $\FF$ meets the properties of \cref{pureSCC}. This makes it possible to compute a coloring analogous to $\kappa_c$, in the next step.

\subparagraph*{Step 2}
In the following, we use $\FF = ({\sim}, (\approx_c)_{c \in [{\sim}]})$ to refer to the FORC that is constructed in step 1.
The algorithm now builds a family of Mealy machines $\overline\MM = (\MM_c)_{c \in [{\sim}]}$ computing a weak priority mapping $\overline\gamma = (\gamma_c)_{c \in [{\sim}]}$.
This is done in a way which mimics the definition of the coloring $\kappa_c$ in \cref{forcs} and ensures that $\overline\gamma$ is the precise FWPM of $L$, provided the syntactic FORC was learned in step 1 (see \cref{PRCcoloring}).
For a class $c \in [{\sim}]$, we proceed as follows:
\begin{itemize}
	\item Label all states with $\sigma$ in $\TT_{\approx_c}$, the TS that represents $\approx_c$, that appear in $\inf_\TT(R_{c,\sigma})$ with $\sigma \in \{+,-\}$.
	\item For each state $q$, compute the set $P_q$ containing all states that are reachable from $q$ and have a classification.
	\item Starting with $i = 0$ and increasing $i$ after every iteration, assign $i$ to those states $q$ such that each $p \in P_q$ either already has a priority, or ($p$ has classification $+$ if and only if $i$ is even).
\end{itemize}
As the construction of $\FF$ in step 1 ensures that no SCCs with positive and negative looping sample words exist, each class of a PRC $\approx_c$ is assigned a priority.
For each class $c$ of ${\sim}$, we obtain a Mealy machine $\MM_c$ on the transition system of $\approx_c$, which uses the priority (with regard to the computed coloring) of the target state for each transition.
Overall, this procedure returns in polynomial time a family of Mealy machines $\overline\MM = (\MM_c)_{c \in [{\sim}]}$ whose size is polynomial in $\FF$.

\subparagraph*{Step 3}
In the following, we use $\bow\AA$ to refer to the DPA $\bow\AA(\overline\MM)$ from \cref{preciseDPA}.
Note that we cannot simply construct $\bow\AA(\overline\MM)$ because its size (or the size of an intermediate result) can be exponential in $\overline\MM$.
Instead, \dpainf builds a DPA $\BB$ which computes the priority mapping
\[
	\BB : u \mapsto \bow\AA(u) \text{ if $u \in \prf(\sem{S})$ and } u \mapsto 0 \text{ otherwise}.
\]
It can be shown by using \cref{lem:polynomialjoincoloring} that constructing $\BB$ is possible in polynomial time (see the proof of \cref{dpainfconsistent} below).
Then, our algorithm checks if $\BB$ is consistent with $S$, i.e.~that $S_+ \subseteq L(\BB)$ and $S_- \cap L(\BB) = \emptyset$.
If yes, the algorithm proceeds to step 4. Otherwise, we redefine $\BB$ as a fallback DPA that outputs $1$ on infixes of negative loops and $0$ otherwise.
Formally, the fallback DPA computes the priority mapping $\BB : \Sigma^+ \to \{0,1\}$ with $\BB(u) \mapsto 1$ iff $u \in \prf\big(\bigcup_{c \in [{\sim}]} R_{c,-}\big)$.
Note that this fallback DPA is consistent with $S$ since it outputs only priority $1$ on all negative examples, and finally only priority $0$ on all other $\omega$-words.
Hence, the DPA $\BB$ that is passed to the next step is consistent with $S$.

\subparagraph*{Step 4}
The purpose of this step is to compute a small DPA that is consistent with the sample and minimal viewed as a Mealy machine.
To achieve this, step~4 runs a polynomial time active learner \mealyal for Mealy machines as a black box, answering the queries based on the coloring computed in step 3. If the sample is complete for a language $L$, then step 3 terminates with a coloring on the prefixes of the sample that corresponds to the one computed by the precise DPA for $L$. If all queries posed by \mealyal can be answered using prefixes of the sample, we can prove that the precise DPA for $L$ (or a smaller one) will be learned.

The oracle that is provided to \mealyal uses the DPA $\BB$ from step 3 and reacts to the posed queries as follows:
\begin{align*}
	\op{output}(u)  & \rightsquigarrow \text{ answer } \BB(u)                                                                                                          \\
	\op{equiv}(\HH) & \rightsquigarrow \begin{cases}
		                                   \text{terminate and return $\HH$}                                          & \text{if $\HH$ is consistent with $S$} \\
		                                   \text{answer }\min_{\text{llex}} \{x \in \prf(S) \mid \BB(x) \neq \HH(x)\} & \text{otherwise.}
	                                   \end{cases}
\end{align*}

Note that, by definition, if this step terminates, then the resulting DPA is consistent with the sample.
Further, this step is guaranteed to terminate because the coloring from step 3 is consistent with the sample (see proof of \cref{dpainfconsistent}).

This concludes the description of the algorithm, and we now state the main results of this section.

\begin{theorem}\label{dpainfconsistent}
	\dpainf computes in polynomial time a DPA that is consistent with the $\omega$-sample $S$ it receives as input.
\end{theorem}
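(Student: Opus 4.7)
The plan is to verify each of the four steps of \dpainf separately, showing that each runs in time polynomial in $|S|$ and that the resulting DPA is consistent with $S$. Consistency of the final output will essentially be immediate: Step~3 falls back to a trivially consistent DPA if necessary, and Step~4 only terminates when the Mealy learner's hypothesis is consistent with $S$. So the bulk of the argument is the polynomial running time.

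For Step~1, I would invoke \cref{glerc} twice, once for the LRC and once for each PRC. By \cref{mnconsistent} and \cref{iterationconsistent}, the consistency functions $\op{LRC-cons}$ and the MN/iteration parts of $\op{PRC-cons}$ run in polynomial time. The extra SCC condition in $\op{PRC-cons}$ (no SCC contains both positively and negatively looping sample words) is verifiable in polynomial time: one computes the SCCs of the candidate $\TT$ and then, for each word $v^\omega \in R_{c,+}\cup R_{c,-}$, the loop state reached in $\TT$. The default TSs ${\sim}_S$ and ${\sim}_{S_c}\times{\sim}$ are of size polynomial in $|S|$, so by \cref{glerc} Step~1 produces a FORC $\FF$ consistent with $S$ in polynomial time. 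For Step~2, the labeling of states, the computation of the reachability sets $P_q$, and the iterative priority assignment are all clearly polynomial in $|\FF|$; the SCC condition enforced in Step~1 (mirroring \cref{pureSCC}) ensures that every class eventually receives a priority, so the resulting family $\overline\MM$ is well-defined and of polynomial size.

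For Step~3, the key tool is \cref{lem:polynomialjoincoloring}: for each ultimately periodic word $uv^\omega \in S$ one computes a representation $r s^\omega$ of the priority sequence that $\bow\AA(\overline\MM)$ produces, with $|rs|$ polynomial in $|uv|$ and $|\overline\MM|$, in polynomial time, \emph{without} constructing $\bow\AA(\overline\MM)$ explicitly. The DPA $\BB$ is then obtained by taking the prefix tree of $S$, closing each branch with the corresponding periodic loop, and labeling transitions by the computed priorities; all remaining transitions go to a $0$-labeled sink state. This $\BB$ is of polynomial size, computes $\bow\AA(\overline\MM)$ on $\prf(\sem{S})$, and its consistency with $S$ can be checked in polynomial time. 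If inconsistent, $\BB$ is replaced by the fallback DPA, which is also of polynomial size and is manifestly consistent with $S$. For Step~4, I would appeal to a polynomial-time active Mealy learner $\mealyal$ whose query complexity is polynomial in the size of the target Mealy machine (such as $L^*$ adapted to Mealy machines; see \cite{SteffenHM11}). The oracle's \op{output} queries are answered by $\BB(u)$ in polynomial time, and each \op{equiv} query either terminates the algorithm (when $\HH$ is consistent with $S$) or returns a llex-minimal prefix of $S$ where $\HH$ and $\BB$ disagree, again in polynomial time. Since $\BB$ itself is consistent with $S$ and realizes a Mealy machine of polynomial size (its underlying priority mapping restricted to $\prf(\sem{S})$), $\mealyal$ terminates after polynomially many queries with a consistent hypothesis.

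The main obstacle is Step~3: one must bypass the potentially exponential explicit construction of $\bow\AA(\overline\MM)$ while still providing an oracle that emits the correct priority sequences on the parts of the input relevant to the sample. This is exactly what \cref{lem:polynomialjoincoloring} enables, and it is the pivotal ingredient that turns a construction whose worst-case output size is exponential into a polynomial-time learner. A secondary subtlety is making sure that every \op{output} query produced by $\mealyal$ is answered in a way that is compatible with $\BB$ on $\prf(\sem{S})$ and with the default value $0$ elsewhere, so that the hypotheses generated by $\mealyal$ remain polynomial in size throughout the run.
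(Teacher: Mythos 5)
Your proposal is correct and follows essentially the same route as the paper's proof: consistency is immediate from the termination condition of Step~4, and polynomial running time is argued step by step, with \cref{lem:polynomialjoincoloring} as the pivotal tool for constructing $\BB$ in Step~3 without materializing $\bow\AA(\overline\MM)$, and with the observation that $\BB$ is a polynomial-size consistent target bounding the work of \mealyal in Step~4. Your additional remark that the SCC condition in $\op{PRC{-}cons}$ is checkable in polynomial time is a small detail the paper leaves implicit, but it does not change the structure of the argument.
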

\begin{proof}
	Let $S = (S_+, S_-)$ be the sample on which \dpainf is called.
	By definition, step 4 always terminates with a hypothesis \(\HH\) that is consistent with \(S\).
	Thus, it remains to show that \dpainf always terminates in polynomial time.
	In the following, we consider each step of the algorithm individually:
	\begin{itemize}
		\item In \cref{mnconsistent} and \cref{iterationconsistent}, it was established that the two consistency functions \op{LRC-cons} and \op{PRC-cons}, which are passed to \glerc, run in polynomial time.
		      Thus, by \cref{glerc}, it follows that the first step terminates in polynomial time.

		\item The second step begins by computing the infinity sets of all sample words from $R_{c}$ in the respective PRC $\approx_c$ for a $c \in [{\sim}]$, which is possible in time polynomial in $|\approx_c|$ and $S$.
		      Further, the number of distinct priorities is bounded for each class $c \in [{\sim}]$ by the number of SCCs in $\approx_c$.
		      Therefore the computation of each $\kappa_c$ terminates after at most polynomially many iterations.
		      Since each iteration consists of a reachability analysis and some elementary operations, the construction of $\family\MM$ is clearly possible in polynomial time.

		\item We use $\bow\AA$ to denote $\bow\AA(\family\MM)$.
		      For the computation of $\BB$, we first build what we call a \emph{colored sample} $\hat{S}$ as follows.
		      For each sample word $uv^\omega \in \sem{S}$, we compute the sequence of colors produced by $\bow\AA$ on $uv^\omega$.
		      By \cref{lem:polynomialjoincoloring}, we can write this sequence as $rs^\omega$, where $|rs|$ is polynomial in $|uv|$ and $|\family\MM|$.
		      The product of $uv^\omega$ with $rs^\omega$ (which is a sequence of letter-priority pairs), is then polynomial in $|S|$.
		      Overall, the size of the colored sample $\hat{S}$ is polynomial in $|S|$.

		      $\BB$ then basically is the prefix tree of $\hat{S}$ starting to loop on the periodic part if the prefix uniquely identifies the word in $\hat{S}$, and a sink state $q_\bot$ for all non-prefixes of $S$.
		      More formally, for each colored example $\hat{u}\hat{v}^\omega$, there is $n$, polynomial in the size of $\hat{S}$, such that $\hat{u}\hat{v}^n$ is not a prefix of any other example in $\hat{S}$.
		      As states of $\BB$ we take all the prefixes of $\hat{u}\hat{v}^{n+1}$ (for all the examples from $\hat{S}$). For some prefix $\hat{x}$ and a letter $a$, we define the transitions $\delta_{\BB}(\hat{x},a)$ as follows. If for all $i$, $\hat{x}(a,i)$ is not a prefix of $\hat{S}$, then $\delta(\hat{x},a) = (q_\bot,0)$.
		      Otherwise, $\hat{x}(a,i)$ is a prefix of $\hat{S}$ for a unique $i$. If $\hat{x}(a,i)$ is a state of $\BB$, then $\delta_{\BB}(\hat{x},a) = (\hat{x}(a,i),i)$.
		      Otherwise, $\hat{x}$ is of the form $\hat{u}\hat{v}^{n+1}$, and $\hat{u}\hat{v}^{n+1}(a,i)$ is a prefix of $\hat{u}\hat{v}^\omega$, and only of that word.
		      We then set $\delta_{\BB}(\hat{x},a) = (\hat{u}\hat{v}^n(a,i),i)$, defining the loop for the periodic part of $\hat{u}\hat{v}^\omega$.
		      This construction results in a polynomial size DPA $\BB$ that computes the priority mapping defined in step 3.

		      It can be checked in polynomial time whether $\BB$ is consistent with the sample (actually, this can already be checked on $\hat{S}$ by looking at the minimal priorities on the loops of the colored examples).
		      In case $\BB$ is not consistent with $S$, \dpainf constructs a fallback DPA.
		      The construction of such a fallback DPA also works in polynomial time along the same lines as the construction explained above (take infixes of the periodic parts of negative example words as states, entering a loop once the infix uniquely identifies the periodic part of a negative example).

		\item Finally, in step 4, all answers provided to \mealyal are consistent with the DPA $\BB$ from step 3, which by the previous considerations is polynomial in $|S|$.
		      Since $\BB$ itself is consistent with $S$, it follows from the properties of \mealyal, that it terminates at the latest with the DPA $\BB$, or it terminates earlier with a strictly smaller DPA that is consistent with $S$.\\
		      To conclude that the execution of \mealyal indeed terminates in polynomial time, we need to verify that each counterexample provided by the teacher is polynomial in $|S|$.
		      So let $\HH$ be the hypothesis posed by \mealyal, which is not consistent with $S$.
		      This means $\HH$ and $\BB$ viewed as Mealy machines compute different priority mappings and a counterexample corresponds to a prefix $x$ of $S$ witnessing that $\HH(x) \neq \BB(x)$.
		      As all hypotheses given by \mealyal in an equivalence query have at most as many states as $\BB$, the shortest such $x$ is clearly polynomial in $S$.
	\end{itemize}
	As each step of \dpainf runs in polynomial time, and we have shown that the DPA it constructs must be consistent with $S$, the statement follows.
\end{proof}

The core idea for constructing a characteristic sample is to simulate a run \dpainf and whenever the algorithm would make a mistake in producing the canonical object we are interested in, we add words preventing it, which leads to the following theorem.

\begin{theorem}\label{dpainfcomplete}
	\dpainf can learn a DPA $\AA$ for every $\omega$-regular language $L$ in the limit. Moreover, the size of $\AA$ is bounded by the size of the precise DPA $\AA_L$ for $L$, and there exists a characteristic sample $S^L$ for $L$ that is polynomial in the maximum of the size of $\AA_L$ and the size of the syntactic FORC for $L$.
\end{theorem}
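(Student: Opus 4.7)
The plan is to construct the characteristic sample $S^L$ in stages, one per step of \dpainf, so that on every sample $S \supseteq S^L$ that is consistent with $L$ each step produces the canonical object that drives the next step. The analysis extends the completeness-proof template developed for \glerc in \cite{BohnL21,BohnL22} to the additional phases of \dpainf.

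For step~1, the goal is to force \glerc to produce the syntactic FORC of $L$. For the leading congruence, for every pair of length-lexicographically minimal representatives $x,y$ of distinct ${\sim}_L$-classes I would add a pair of sample words $xuv^\omega \in S^L_+$, $yuv^\omega \in S^L_-$ witnessing that $x$ and $y$ lie in different ${\sim}_L$-classes; by the definition of MN-consistency (\cref{mnconsistent}) these pairs prevent \glerc from merging $x$ and $y$, and together with the greedy llex-choice of merge targets the returned TS is isomorphic to ${\sim}_L$. An entirely analogous collection of separations based on the defining clause of the canonical PRC ${\simeq}_c$ forces, for each class $c$, the progress congruence of $c$ to be ${\simeq}_c$; the extra SCC clause of $\op{PRC-cons}$ is automatic by \cref{pureSCC}. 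By \cref{FORCsize} the resulting collection of examples has total size polynomial in the syntactic FORC.

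For step~2, the coloring computed by \dpainf must coincide with $\kappa_c$ from \cref{forcs}. It suffices to ensure that each idempotent class $[w]_{{\simeq}_c}$ is classified correctly, which I would achieve by adding one example $xw^\omega$ with $x \in c$, labeled $+$ if $w^\omega \in L_c$ and $-$ otherwise. Then the run of $w^\omega$ through $\TT_{{\simeq}_c}$ loops on $[w]_{{\simeq}_c}$, placing it in $\inf_{\TT_{{\simeq}_c}}(R_{c,\sigma})$ for the correct sign $\sigma$. Consistency of signs within each SCC (\cref{pureSCC}) then makes the greedy priority assignment of step~2 reproduce $\kappa_c$ exactly; by \cref{PRCcoloring} the resulting family $\overline\MM$ realises the precise FWPM $\overline{\pi^{{\sim}_L}}$, and by \cref{lem:bowacomputesjoin} and \cref{lem:joinlanguage} the automaton $\bow\AA(\overline\MM)$ accepts $L$.

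Step~3 is then automatic: the DPA $\BB$ that \dpainf produces agrees with $\bow\AA(\overline\MM)$ on all prefixes of $S$ and is consistent with $S$, so the fallback branch is never taken. The main obstacle is step~4, where I must ensure that the active Mealy learner \mealyal converges to $\AA_L$ in a way that is stable under sample extension. For this I would add, for every finite word $u$ from a fixed polynomial-size characteristic sample of \mealyal with target $\AA_L$ (viewed as a Mealy machine computing $\bow\pi^{{\sim}_L}$), an ultimately periodic example $uv^\omega$ labeled according to $L$, so that $u \in \prf(S^L)$ and the oracle answers the query $\op{output}(u)$ with the correct value $\bow\pi^{{\sim}_L}(u) = \AA_L(u)$. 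With all answers consistent, the run of \mealyal is determined and converges to the minimal Mealy machine $\AA_L$; since $L(\AA_L) = L$, the final equivalence check succeeds on every $L$-consistent $S \supseteq S^L$, so $f(S) = f(S^L) = \AA_L$ for all such $S$. The total size of $S^L$ is polynomial in $\max(|\AA_L|, s)$, where $s$ is the size of the syntactic FORC for $L$, as claimed.
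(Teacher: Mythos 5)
Your overall strategy (simulate each phase of \dpainf and add witnesses forcing it to produce the canonical object at that phase) is the same as the paper's, and your treatment of steps 2 and 3 matches the paper's proof up to phrasing. The gap is in step 1. You propose to separate only the pairs of length-lexicographically minimal representatives of distinct ${\sim}_L$-classes. But the merges that \glerc attempts are not between two class-minimal representatives: when the transition $\delta(p,a)$ is being defined, the algorithm tries to identify the word $\minrep{p}\,a$ with $\minrep{q}$ for the existing states $q$ in llex order, and $\minrep{p}\,a$ is in general \emph{not} the llex-minimal representative of its ${\sim}_L$-class. Concretely, suppose the class of $\minrep{p}\,a$ already has a state $u_s$ (so $u_s <_{\mathrm{llex}} \minrep{p}\,a$), and \glerc first tries a wrong target $q$ with $\minrep{q}=u_t$, $t\neq s$. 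After inserting $\delta(p,a)=q$, the sample prefix $u_s$ still runs to its own state (its run uses only transitions defined earlier), and no prefix of a sample word need reach state $p$ and then continue with $a$; hence every pair your sample constrains is still separated by the transition system, MN-consistency holds, and the wrong merge is accepted. What is needed is a separating experiment for the pair $(\minrep{p}\,a,\ \minrep{q})$ itself, for each wrong target that \glerc tries before the correct one; this is exactly what the paper's proof provides by simulating the run of \glerc and, for each transition insertion that must be prevented, adding $xa\,uv^\omega$ and $y\,uv^\omega$ with $x=\minrep{p}$ and $y=\minrep{q}$. The same correction is needed for your "entirely analogous" construction for the progress congruences.

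A second, smaller omission is in step 4: answering all $\op{output}$ queries correctly does not by itself pin down the run of \mealyal uniformly over all samples $S \supseteq S^L$ consistent with $L$, because the simulated teacher terminates as soon as a hypothesis is \emph{consistent with $S$}, which may already happen for an intermediate $\HH$ with $L(\HH)\neq L$ if $S^L$ happens to contain no word that $\HH$ misclassifies. One must additionally place into $S^L$, for every intermediate hypothesis $\HH$ of the simulated run with $L(\HH)\neq L$, a (canonical, e.g.\ llex-least) counterexample witnessing $L(\HH)\neq L$, so that such hypotheses are rejected on $S^L$ and on every larger consistent sample and the same counterexample is returned; the paper's proof adds these explicitly, whereas your "characteristic sample of \mealyal" only accounts for the output queries.
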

\begin{proof}
	Let $L$ be a regular $\omega$-language over some alphabet $\Sigma$.
	Recall that our goal is to construct a characteristic sample $S^L = (S_+, S_-)$ for $L$, meaning that \dpainf infers a DPA for $L$ from any sample $S'$ which contains $S^L$ and is consistent with $L$.
	In the following, we illustrate how the sample $S^L$ can be built in such a way that it guarantees that
	\begin{itemize}
		\item the syntactic FORC $\FF_L = ({\sim}_L, ({\simeq}_c)_{c \in [{\sim}]})$ for $L$ is learned in the first step,
		\item a family of Mealy machines $\family\MM = (\MM_c)_{c \in [{\sim}_L]}$ for the precise coloring $\family{\pi^{{\sim}_L}} = \familyfor\pi{{\sim}_L}$ is learned in step $2$,
		\item for all $\op{output}(u)$ queries posed by \mealyal in step 4, $u \in \prf(\sem{S^L})$ and
		\item it contains the necessary counterexamples to ensure that \mealyal terminates with a DPA for $L$.
	\end{itemize}
	For convenience, we do not describe $S^L$ directly.
	Instead, we build a set $W$ of ultimately periodic words and subsequently define $S^L = (W \cap L, W \setminus L)$.

	The core idea for ensuring that the syntactic FORC $\FF_L$ is learned in the first step, is to simulate the necessary executions of \glerc step by step, and to extend $W$ whenever \glerc would otherwise insert a transition that is not present in the target RC.
	Let $\TT$ be the partial TS that \glerc has constructed up to the point at which we want to prevent the insertion of a transition.
	For a state $q$ of $\TT$, we use $\minrep q$ to denote the length-lexicographically minimal word that reaches $q$.
	Then, if a transition $\delta(q,a) = p$ has to be prevented, it must be that $xa \mathrel{\not{\sim}} y$ for $x = \minrep{q}, y = \minrep{p}$ because up to now all transitions have been inserted correctly.
	Depending on the kind of right congruence we want to learn, we add the following words to $W$:
	\begin{itemize}
		\item If ${\sim}$ is the LRC of $\FF_L$ and $xa \mathrel{\not{\sim}_L} y$:\\
		      There exist $u \in \Sigma^*, v \in \Sigma^+$ with $xauv^\omega \in L$ if and only if $yuv^\omega \notin L$, i.e.~$uv^\omega$ separates $xa$ and $y$ in ${\sim}_L$. We pick the length-lexicographically minimal such $u$ and $v$
		      %, which is polynomial in the size of a minimal DPA for $L$,
		      and add the words $xauv^\omega$ and $yuv^\omega$ to $W$.
		\item If ${\simeq}_c$ is the PRC for some class $c \in [{\sim}_L]$ and $xa \mathrel{\not{\simeq}_c} y$:\\
		      Let $u = \minrep{c}$. If not $uxa \mathrel{{\sim}_L} uy$, then we proceed as for the LRC. Otherwise, there exists some $z \in \Sigma^*$ with $uxaz \mathrel{{\sim}_L} u$ and $(u(xaz)^\omega \in L \iff u(yz)^\omega \notin L)$.
		      Again, we choose the length-lexicographically minimal such $z$ and add $u(xaz)^\omega$ and $u(yz)^\omega$ to $W$.
	\end{itemize}

	For computing the coloring, \dpainf considers all words in $R_{c,\sigma}$ for $\sigma \in \{+,-\}$.
	To ensure that the second step of \dpainf terminates with the precise coloring $\family\pi^{{\sim}_L}$ for $(L, {\sim}_L)$ we add to $W$ words which guarantee that all idempotent classes of ${\simeq}_c$ are in the infinity set of some word in $R_c$.
	Let $c \in [{\sim}_L]$ be a class and $u \in c$.
	From each idempotent class of ${\simeq}_c$, we pick the length-lexicographically least representative $x$ and add $ux^\omega$ to $W$.
	Note that since $x$ is idempotent in ${\simeq}_c$, it must also be in $E_c$ and hence $x^\omega$ will be in the set $R_c$ that is computed in the beginning of step 2.
	This guarantees that if the syntactic FORC is learned in the first step, every idempotent class of ${\simeq}_c$ obtains a correct classification in step 2 of \dpainf.
	As the subsequent computation of the mapping mirrors that of $\kappa_c$ from \cref{forcs}, it is guaranteed that the family of Mealy machines $\overline\MM$ returned by step 2 computes the precise FWPM $\family{\pi^{{\sim}_L}}$.

	Let $\bow\AA := \bow\AA(\overline\MM)$ and $\BB$ be as constructed in step 3.
	Further, let $S = (S_+, S_-) = (W \cap L, W \setminus L)$ be the obtained from the set $W$ constructed so far.
	Since $\overline\MM$ computes the precise FWPM $\family{\pi^{{\sim}_L}}$, it follows from \cref{lem:bowacomputesjoin} that $\bow\AA$ computes the priority mapping of $\AA_L$.
	By definition, $\BB$ behaves like $\bow\AA$ on prefixes from $S$, which guarantees that $\BB$ must be consistent with $S$ and the algorithm does not default to the fallback DPA.
	However, in the last step, there might be some $\op{output}(u)$ queries posed by \mealyal, which are (falsely) answered with $0$ purely because $u$ is not a prefix of $S$.
	Thus, to ensure that all queries are answered correctly, we simulate an execution of \mealyal with a teacher for $\bow\AA$ and whenever \mealyal poses an $\op{output}(u)$ query for a word that is not yet a prefix of $W$, we add $u^\omega$ to $W$.
	Additionally, for each equivalence query $\op{equiv}(\HH)$ with $L(\HH) \neq L$, we add the least counterexample to $W$ that witnesses the inequality.
	The overall construction of $S^L$ is guaranteed to terminate, because all answers given to \mealyal are consistent with $\AA_L$, which by the properties of \mealyal guarantees that it terminates at the latest with the DPA $\AA_L$.

	In the end, $S^L = (S_+, S_-) = (W \cap L, W \setminus L)$ is the characteristic sample for $L$.
	Our construction ensures that if called on a sample which is consistent with $L$ and contains $S^L$, \dpainf infers the syntactic FORC of $L$ in step~1, and then computes the precise FWPM $\family{\pi^{{\sim}_L}}$ in step 2. The output queries of \mealyal are all on prefixes of $S^L$, and as long as \mealyal does not propose a hypothesis that is consistent with $L$, the sample contains an example witnessing the difference. Hence, \dpainf terminates with a DPA $\AA$ for $L$.

	For an upper bound on the size of $S^L$, we consider the words which are added in each step.
	The number of insertions that have to be prevented in step 1 and the number of idempotent classes that need to be considered in step two are clearly polynomial in the syntactic FORC $\FF_L$. The shortest words for witnessing the inequivalences that prevent the insertions are polynomial in the size of $\FF_L$.
	In step four, the number of added words (because of output or equivalence queries) and their lengths are polynomial in the size of the precise DPA $\AA_L$ because the hypotheses are growing and bounded by the size of $\AA_L$, and the algorithm runs in time polynomial in the resulting Mealy machine and the length of the longest counterexample.
	In summary, the size of $S^L$ is polynomial in the maximum of $|\FF_L|$ and $|\AA_L|$.
\end{proof}

By combining \cref{dpainfcomplete} with \cref{FORCsize} and \cref{boundpreciseDPA}, we can recover and extend the currently known results on passive learning of deterministic $\omega$-automata from polynomial data.
For that purpose, we consider the subclasses $\irc$ and $\polyclass$ of the $\omega$-regular languages. The first one contains all languages $L$ that can be accepted by a DPA that has exactly one state for each ${\sim}_L$-class. Polynomial time passive learners that can learn each language from this class in the limit from polynomial data are presented in  \cite{AngluinFS20,BohnL21}. The class $\polyclass$ contains all languages $L$ that can be recognized by a DPA with priorities in $\{0, \ldots, k-1\}$ and at most $d$ many $c$-states for each ${\sim}_L$ class $c$. A polynomial time passive learner that, for fixed $d$, can learn each language in \polyclassBuchi in the limit from polynomial data is presented in  \cite{BohnL22}

The languages in $\irc$ are those for which $d=1$ in \cref{FORCsize} and \cref{boundpreciseDPA}. In this case, the term in \cref{boundpreciseDPA} reduces to $m$ and the one in \cref{FORCsize} to $mk$. If $d$ and $k$ are both fixed, both expressions become linear in $m$. This implies the following.

\begin{restatable}{corollary}{rstfixedparams}\label{fixedparams}
	The algorithm \dpainf can learn a DPA for every language in \irc from polynomial data.
	Furthermore, there is a fixed polynomial $g$ such that for every $k$ and $d$, \dpainf can learn a DPA for every language in $\polyclass$ in the limit from polynomial data with characteristic samples of size $\mathcal{O}(g)$.
\end{restatable}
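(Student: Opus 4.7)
The plan is to combine \cref{dpainfcomplete} with the explicit upper bounds from \cref{boundpreciseDPA} and \cref{FORCsize}, and then specialize to the two subclasses. Let $n$ denote the size of a minimal DPA for $L$, and let $m$ denote the number of classes of ${\sim}_L$. By \cref{dpainfcomplete} it suffices to bound $\max(|\AA_L|,|\FF_L|)$ by a polynomial of the required form, where $\AA_L$ is the precise DPA and $\FF_L$ is the syntactic FORC of $L$.

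I would first instantiate \cref{boundpreciseDPA} with ${\sim}={\sim}_L$ and a minimal DPA witnessing the parameters of $L$ to obtain $|\AA_L|\le m(2^d-1)^{k-1}$. Similarly, \cref{FORCsize} bounds each of the $m$ progress right congruences of $\FF_L$ by $m(dk)^d$ classes, and adding the leading component yields $|\FF_L|=O(m^2(dk)^d)$. For the first claim I then specialize to $d=1$: the bounds become $|\AA_L|\le m$ and $|\FF_L|=O(m^2k)$. Any $L\in\irc$ has a minimal DPA of exactly $m$ states, so $m=n$; combined with the classical fact that the parity complexity $k$ of an $n$-state DPA is polynomially bounded in $n$, both $|\AA_L|$ and $|\FF_L|$ are polynomial in $n$, and hence so is the characteristic sample of \cref{dpainfcomplete}.

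For the second claim, with $k$ and $d$ fixed constants, the same two bounds yield $|\AA_L|=O(m)$ and $|\FF_L|=O(m^2)$, with the hidden constants depending only on $k$ and $d$. Since ${\sim}_\AA$ refines ${\sim}_L$ for every DPA $\AA$ for $L$, every DPA for $L$ has at least $m$ states, i.e.~$m\le n$. Hence the characteristic sample provided by \cref{dpainfcomplete} has size $O(n^2)$ with a constant depending on $k,d$, so the polynomial $g(n)=n^2$ works uniformly for all $k$ and $d$. The only minor subtlety is the observation used in the \irc case that the parity complexity is polynomially bounded by the minimal DPA size; this is a standard fact about normalized parity automata, so no substantial difficulty remains beyond collecting the pieces.
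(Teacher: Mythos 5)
Your proposal is correct and follows essentially the same route as the paper: instantiate \cref{boundpreciseDPA} and \cref{FORCsize} with $d=1$ (resp.\ fixed $d,k$) to get bounds linear (resp.\ quadratic, summing over the $m$ progress congruences) in $m=|{\sim}_L|$, and feed these into \cref{dpainfcomplete}; your explicit remark that the parity complexity is polynomially bounded in the DPA size is a point the paper leaves implicit. The only slip is cosmetic: the characteristic sample is \emph{polynomial in} $\max(|\AA_L|,|\FF_L|)=\mathcal{O}(n^2)$, so the uniform polynomial is $g(n)=p(n^2)$ for the fixed polynomial $p$ of \cref{dpainfcomplete} rather than $n^2$ itself, which does not affect the conclusion.
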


\section{Conclusion} \label{conclusion}

We have presented a passive learner for deterministic parity automata that runs in polynomial time and can learn a DPA for every regular $\omega$-language in the limit. Our upper bound for the size of complete samples is, in general, exponential in the size of a minimal DPA for the language. However, for fixed number of priorities and fixed maximal number of pairwise language equivalent states, this bound becomes polynomial. The learning algorithm is based on the precise DPA of a language that we introduced in this paper, and that can be constructed from the syntactic FORC of the language.

We see two natural main directions of future research based on the results presented here.
First, we proposed a basic version of the algorithm that is complete for the class of regular $\omega$-languages and runs in polynomial time. But there are many parts of the algorithm that allow for optimizations and variations without losing these properties. These variations should then be implemented and compared in an empirical study.
For example, all the progress congruences of the FORC are learned independently, while there are strong dependencies between the progress congruences of the syntactic FORC. One can explore techniques for learning these congruences while respecting mutual dependencies. And one can also try other types of passive learning algorithms from finite automata adapted to learning FORCs. Also, variations of the active learner used in the last step of the algorithm can have an impact on the running time and the result produced by the overall procedure.

Second, the precise DPA for a language deserves further study.
An understanding which structural properties of a language can cause the precise DPA to be much larger than a minimal DPA for the language might give insights to minimization problems for DPAs.
Furthermore, our construction of the precise DPA from the syntactic FORC has similarities with a construction that starts from the syntactic semigroup (see specifically Lemma 22 in \cite{Colcombet11}). And the construction of the precise FWPM from the syntactic FORC using idempotent classes suggests a variation of FDFAs with ``idempotent acceptance'', which could lead to smaller representations of $\omega$-languages by FDFAs.
There is also a connection to the canonical representation by good-for-games automata from \cite[Definition 4]{EhlersS22}, in the sense that the precise DPA computes the natural color in the limit.
Finally, one can show that the class of precise DPAs for a language subsumes the class of normalized DPAs in the sense that a DPA $\AA$ is normalized and minimal as a Mealy machine if it is the precise DPA for $(L(\AA), {\sim}_\AA)$.
So it seems that studying precise DPAs and their construction in more detail can lead to further insights on connections between representations of $\omega$-languages.

\printbibliography
\end{document}